\definecolor{mypurple}{RGB}{96,0,32}
\definecolor{myred}{RGB}{255,0,0}
\definecolor{myorange}{RGB}{255,128,0}
\definecolor{myyellow}{RGB}{252,210,5}
\definecolor{mygreen}{RGB}{0,178,0}
\algnewcommand\algorithmicswitch{\textbf{switch}}
\algnewcommand\algorithmiccase{\textbf{case}}
\def\showmark#1{\tikz[baseline=-2.75pt]\node[black,
 scale=1.5]{\pgfuseplotmark{#1}};}
\def\testmark#1#2{\tikz[baseline=-2.75pt]\node[black,mark
 color=#1,scale=1.5]{\pgfuseplotmark{#2}};}
\def\usemark#1{\tikz[baseline=-2.25pt]\node[black,
	scale=1.5]{\pgfuseplotmark{#1}};}
\theoremstyle{remark} 
\newtheorem*{remark}{Remark} 
\theoremstyle{definition}
\newtheorem{theorem}{Theorem}[section]
\newtheorem{lemma}[theorem]{Lemma}
\newtheorem{corollary}[theorem]{Corollary}
\newtheorem{example}[theorem]{Example}
\begin{document}

\title[Perfectly Parallel Fairness 
Certification of Neural Networks]{Perfectly Parallel 
Fairness 
Certification\\of Neural Networks}


\author{Caterina Urban}
\affiliation{
	\institution{INRIA}
}
\affiliation{
	\institution{DIENS, \'Ecole Normale Sup\'erieure, 
	CNRS, PSL University}
	\city{Paris}
	\country{France}
}
\email{caterina.urban@inria.fr}

\author{Maria Christakis}
\affiliation{
	\institution{MPI-SWS}
	\country{Germany}
}
\email{maria@mpi-sws.org}

\author{Valentin W\"ustholz}
\affiliation{
	\institution{ConsenSys Diligence}
	\country{Germany}
}
\email{valentin.wustholz@consensys.net}

\author{Fuyuan Zhang}
\affiliation{
	\institution{MPI-SWS}
	\country{Germany}
}
\email{fuyuan@mpi-sws.org}

\begin{abstract}
Recently, there is growing concern that 
machine-learning models, which
currently assist or even automate decision 
making, reproduce, and in
the worst case reinforce, bias of the training data. 
The development
of tools and techniques for certifying fairness of 
these models or
describing their biased behavior is, therefore, 
critical.
%
In this paper, we propose a \emph{perfectly parallel} 
static analysis for
certifying \emph{causal fairness} of feed-forward 
neural
networks used for classification of tabular data. When 
certification succeeds, our approach provides 
definite
guarantees, otherwise, it describes and quantifies the biased
behavior. We design the analysis to be 
\emph{sound}, in 
practice also \emph{exact},
and configurable in terms of scalability and precision, thereby
enabling \emph{pay-as-you-go certification}. We 
implement our approach in an
open-source tool and demonstrate its effectiveness on models trained
with popular datasets.
\end{abstract}

\begin{CCSXML}
<ccs2012>
<concept>
<concept_id>10011007.10011006.10011008</concept_id>
<concept_desc>Software and its engineering~General programming languages</concept_desc>
<concept_significance>500</concept_significance>
</concept>
<concept>
<concept_id>10003456.10003457.10003521.10003525</concept_id>
<concept_desc>Social and professional topics~History of programming languages</concept_desc>
<concept_significance>300</concept_significance>
</concept>
</ccs2012>
\end{CCSXML}

\ccsdesc[500]{Software and its engineering~General programming languages}
\ccsdesc[300]{Social and professional topics~History of programming languages}


\maketitle

\section{Introduction}

Due to the tremendous advances in machine learning and the vast
amounts of available data, software systems, and neural networks in
particular, are of ever-increasing importance in our everyday
decisions, whether by assisting them or by autonomously making them.
We are already witnessing the wide adoption and societal impact of
such software in criminal justice, health care, and social welfare, to
name a few examples. It is, therefore, not far-fetched to imagine a
future where most of the decision making is automated.

However, several studies have recently raised concerns about the
fairness of such systems. For instance, consider a commercial
recidivism-risk assessment algorithm that was found racially
biased~\cite{COMPAS}. Similarly, a commercial algorithm that is widely
used in the U.S. health care system falsely determined that Black
patients were healthier than other equally sick patients by using
health costs to represent health needs~\cite{ObermeyerPowers2019}.
There is also empirical evidence of gender bias in image searches, for
instance, there are fewer results depicting women when searching for
certain occupations, such as CEO~\cite{KayMatuszek2015}. Commercial
facial recognition algorithms, which are increasingly used in law
enforcement, are less effective for women and darker skin
types~\cite{BuolamwiniGebru2018}.

In other words, machine-learning software may reproduce, or even
reinforce, bias that is directly or indirectly present in the training
data. This awareness 
will certainly lead to regulations
and strict audits in the
future. It is, therefore, critical to develop tools and techniques for
certifying fairness of neural networks and understanding the
circumstances of their potentially biased behavior.

\paragraph{\textbf{Causal Fairness.}} We make a 
step forward in meeting
these needs by designing a static analysis framework for certifying
\emph{causal fairness}~\cite{Galhotra17} of feed-forward neural
networks used for classification tasks. Specifically,
given a choice (e.g., driven by a causal model) of 
input features that are 
considered (directly or indirectly) sensitive to bias, 
\emph{a neural
  network is causally fair if the output classification is not
  affected by different values of the chosen 
  features}. Note that,
unlike local robustness of neural networks, causal fairness is a
\emph{global} property, which is evaluated with respect to all inputs,
instead of only those within a particular distance metric.

Of course, the most obvious approach to avoid such bias is to remove
any sensitive feature from the training data, called fairness through
unawareness~\cite{GrgicHlaca16}.  However, this 
does not work for three main
reasons. First, neural networks learn from latent variables (e.g.,
\cite{LumIsaac2016,UdeshiAC18}). For instance, a credit-screening
algorithm might not use race (or gender) as an explicit input but
still be biased with respect to it, say, by using the ZIP code of
applicants as proxy for race (or their first name as proxy for
gender). Therefore, simply removing a sensitive feature does not
necessarily free the training data or the corresponding neural network
from bias.  Second, the training data is only a relatively small
sample of the entire input space, on portions of which the neural
network might end up being inaccurate. For example, if women are
underrepresented in the training data, a credit-screening algorithm is
less likely to be accurate for them. Third, the information provided
by a sensitive feature might be necessary, for instance, to introduce
intended bias in a certain input region. Assume a credit-screening
algorithm that should not discriminate with respect to age unless it
is above a particular threshold. Above this age threshold, the higher
the requested credit amount, the lower the chances of receiving it. In
such cases, removing the sensitive feature is not even possible.

\paragraph{\textbf{Our Approach.}}
Verification of global neural-network properties, such as causal
fairness, is still a long way from being practical (see
Section~\ref{sect:relatedWork}). In this paper, we 
propose an approach
that brings us closer to this aspiration.
Our approach certifies causal fairness
of neural networks used for \emph{classification of tabular data} by employing a
combination of a forward and a backward static analysis. On a high
level, the forward pass aims to reduce the overall analysis effort. At
its core, it divides the input space of the network into independent
partitions.  The backward analysis then attempts to certify fairness
of the classification within each partition (in a \emph{perfectly
  parallel} fashion) with respect to a chosen (set of)
feature(s), which may be directly or indirectly
sensitive, for instance, race or ZIP code. In the end, our approach
reports for which regions of the input space the neural network is
proved fair and for which there is bias. Note that we do not
necessarily need to analyze the entire input space; our technique is
also able to answer specific bias queries about a fraction of the
input space, e.g., are Hispanics over 45 years old
discriminated against with respect to gender?

The scalability-vs-precision tradeoff of our 
approach is configurable.
Partitions that
do not satisfy the given configuration are excluded 
from the analysis and
may be resumed later, with a more flexible configuration. This enables
usage scenarios in which our approach adapts to the available
resources, e.g., time or CPUs, and is run 
incrementally. In other
words, we designed a \emph{pay-as-you-go certification} approach that
the more resources it is given, the larger the region of the input
space it is able to analyze.

\paragraph{\textbf{Related Work.}} In the 
literature, related work on
determining fairness of machine-learning models has focused on
providing probabilistic guarantees~\cite{Bastani18}. In contrast, our
approach gives definite guarantees for those input partitions that
satisfy the analysis configuration. Similarly to our approach, there
is work that also aims to provide definite
guarantees~\cite{Albarghouthi17b} (although for different fairness
criteria). However, it has been shown to scale only up to neural
networks with two hidden neurons. Our approach is significantly more
scalable since its design enables perfectly parallel fairness
certification of each input partition.

\paragraph{\textbf{Contributions.}} We make the 
following contributions:
\begin{enumerate}
\item We propose a perfectly parallel static 
analysis approach
  for certifying causal fairness of feed-forward 
  neural networks used for classification of tabular data. 
  If certification fails, our approach
  can describe and quantify the biased input space 
  region(s). 

\item We show that our approach is sound and, in 
practice, exact for the analyzed
  regions of the input space.

\item We discuss the configurable scalability-vs-precision tradeoff of
  our approach that enables pay-as-you-go certification.
 
\item We implement our approach in an open-source tool called \tool
  and evaluate it on neural networks trained with popular datasets.
  We show the effectiveness of our approach in detecting injected bias
  and answering bias queries. We also experiment with the precision
  and scalability of the analysis and discuss the 
  tradeoffs.
\end{enumerate}


%
%

\section{Overview}
\label{sec:Overview}

In this section, we give an overview of our approach using a small
constructed example, which is shown in Figure~\ref{fig:toy}.

\paragraph{\textbf{Example.}} The figure depicts a 
feed-forward 
neural
network for credit approval. There are two inputs $\node_{0,1}$ and
$\node_{0,2}$ (shown in purple). Input $\node_{0,1}$ denotes the
requested credit amount and $\node_{0,2}$ denotes age. Both inputs
have continuous values in the range $[0,1]$. Output $\node_{3,2}$
(shown in green) denotes that the credit request is 
approved,
whereas $\node_{3,1}$ (in red) denotes that it is denied. The neural
network also consists of two hidden layers with two nodes each
(in gray).

Now, let us assume that this neural network is 
trained to deny requests for large credit amounts 
from older people. Otherwise, the network does not 
discriminate with respect to age
for small credit amounts. There is also no bias for 
younger people with respect to the requested credit.
When choosing age as the sensitive input, our 
approach can certify fairness with 
respect to different age groups for small credit amounts. Our approach
is also able to find (as well as quantify) bias with 
respect to age for large credit amounts. 
Note that this bias may be intended or accidental --- our analysis
does not aim to address this question.

Our approach does not require age to be an explicit input of the
neural network. For example, $\node_{0,2}$ could denote the ZIP code
of credit applicants, and the network could still use it as proxy for
age. That is, requests for large credit amounts are denied for a
certain range of ZIP codes (where older people tend to live), yet
there is no discrimination between ZIP codes for small credit
amounts. When choosing the ZIP code as the 
sensitive input, our
approach would again be able to detect bias with respect to it for
large credit amounts.

Below, we present on a high level how our approach 
achieves these
results.

\begin{figure}[t]
	\includegraphics[width=0.5\columnwidth]{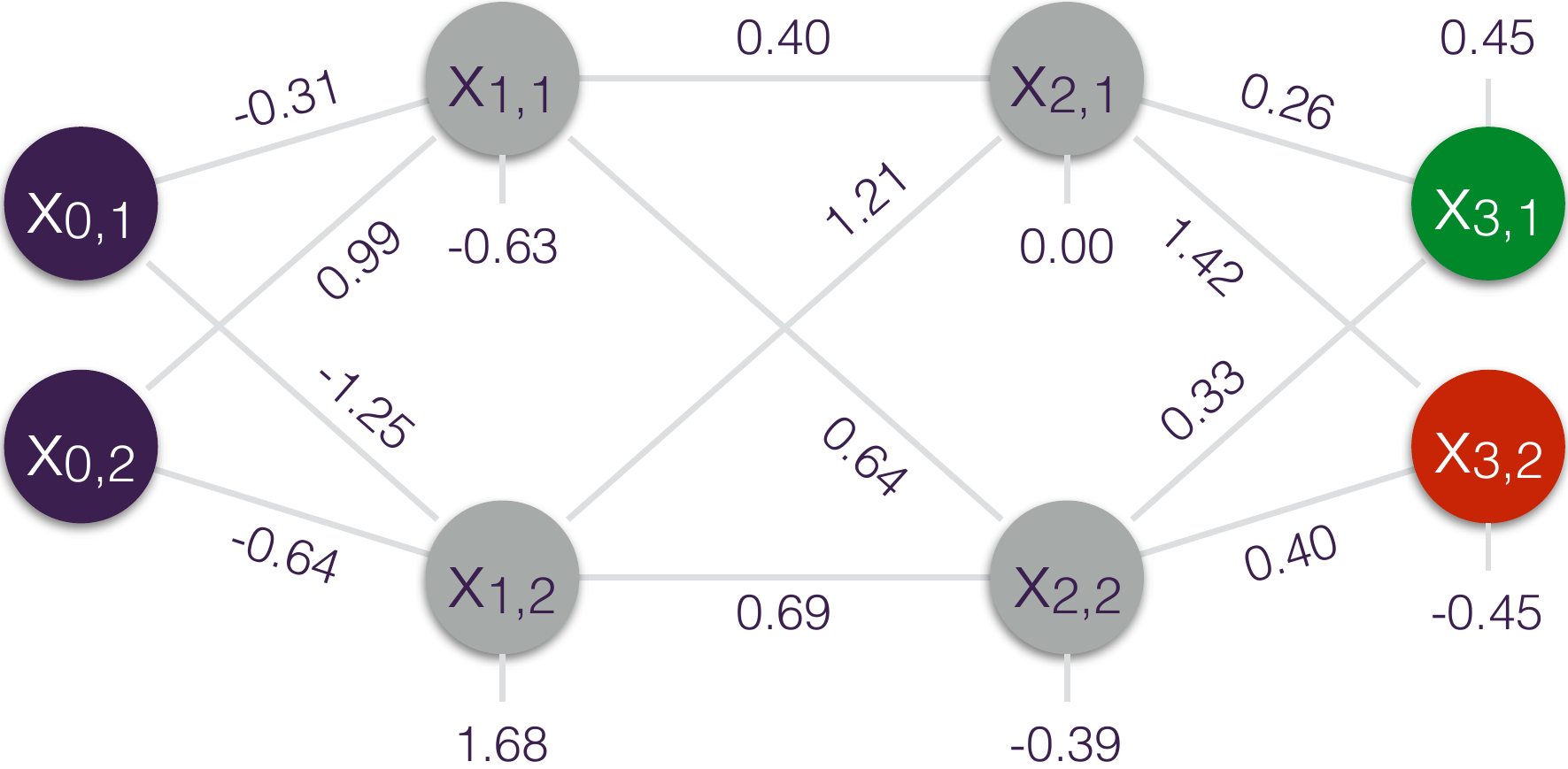}
	\caption{Small, constructed example of trained 
	feed-forward neural
          network for credit approval. 
      }
        \label{fig:toy}
        \vspace{-1em}
\end{figure}

\paragraph{\textbf{Na\"ive Approach.}} In 
theory, the 
simplest way to certify causal fairness  
is to first analyze the neural network 
backwards 
starting from each
output node, in our case $\node_{3,1}$ and 
$\node_{3,2}$. This 
allows us to determine the regions of the input 
space (i.e., age and
requested credit amount) for which credit is 
approved and denied.  For
example, assume that we find that requests are 
denied for credit amounts larger than $10~000$ 
(i.e., $10~000 < \node_{0,1}$) and age greater than 
$60$ 
(i.e., $60 < \node_{0,2}$), while they are approved 
for 
$\node_{0,1} \leq 10~000$ and $60 < \node_{0,2}$
or for $\node_{0,2} \leq 60$.

The second step is to forget the value of the 
sensitive input (i.e.,
age) or, in other words, to project these regions 
over the credit amount.
In our example, after projection 
we have that credit requests are denied for $10~000 
< \node_{0,1}$ and approved for any value of 
$\node_{0,1}$.
A non-empty intersection between the 
projected input regions indicates bias with respect 
to the sensitive input. In our example, the 
intersection is non-empty for $10~000 
< \node_{0,1}$: there exist people 
that differ in age but request the same credit 
amount (greater than $10~000$), some of
whom receive the credit while others do not.

This approach, however, is not practical. 
Specifically, neural
networks with \relu{} activation 
functions (see
Section~\ref{sec:neuralnets} for more details, other 
activation functions are discussed in 
Section~\ref{sec:analysis}), each 
hidden node effectively represents a disjunction 
between two activation statuses (active and 
inactive). In our example, there are $2^4$ possible 
activation patterns for the $4$ hidden nodes. To 
retain 
maximum precision, the analysis
would have to explore all of them, which does not 
scale in practice.

\paragraph{\textbf{Our Approach.}}
Our analysis is based on the observation that 
\emph{there might exist many activation
  patterns that do not correspond to a region of the input
  space} \cite{Hanin19}. Such patterns can, 
  therefore, be ignored 
  during the analysis. 
 We push this idea further by defining
\emph{abstract activation 
patterns}, which fix the activation status of only 
certain nodes and thus represent sets of (concrete) 
activation patterns.
Typically, \emph{a relatively
small number of abstract activation patterns is 
sufficient for
covering the entire input space}, without 
necessarily 
representing and exploring all
possible concrete patterns.

Identifying those patterns that definitely correspond to a region of
the input space is only possible with a forward analysis. Hence, we
combine a forward pre-analysis with a backward 
analysis. The pre-analysis
partitions the input space into independent 
partitions corresponding to
abstract activation patterns. Then, the backward analysis tries to
prove fairness of the neural network for each such partition.

More specifically, we set an upper bound 
$\upperbound$ on the
number of tolerated disjunctions (i.e., on the number of nodes with
an unknown activation status) per abstract 
activation pattern. Our forward pre-analysis uses a
cheap abstract
domain (e.g., the boxes domain~\cite{Cousot76}) to 
\emph{iteratively}
partition the input space along the 
\emph{non-sensitive} input dimensions to obtain 
\emph{fair} input partitions (i.e., boxes).
Each partition 
satisfies one of
the following conditions: (a)~its classification 
is already 
fair because only
one network output is reachable for all inputs in the 
region, (b)~it
has an abstract activation pattern with at most $\upperbound$ unknown
nodes, or (c)~it needs to be partitioned further. 
We call
partitions that satisfy condition (b) \emph{feasible}.

In our example, let $\upperbound = 2$. At first, the 
analysis considers the
entire input space, that is, $\node_{0,1}: [0,1]$ (credit amount) and
$\node_{0,2}: [0,1]$ (age). (Note that we could also specify
a part of the input space for analysis.)
The abstract activation pattern corresponding to 
this initial partition 
$\partition$ is 
$\epsilon$ (i.e., no hidden nodes have fixed 
activation status) and, thus, the number of 
disjunctions would be 4, which is greater than 
$\upperbound$.
Therefore, $\partition$ needs to be divided 
into $\partition_1$ 
($\node_{0,1}:
[0,0.5].  \node_{0,2}: [0,1]$) and $\partition_2$ ($\node_{0,1}:
[0.5,1].  \node_{0,2}: [0,1]$). Observe that the input space is not split
with respect to $\node_{0,2}$, which is the sensitive
input. Now, $\partition_1$ is feasible since its 
abstract
activation pattern is 
$\node_{1,2}\node_{2,1}\node_{2,2}$ (i.e., 3
nodes are always active), 
while
$\partition_2$ must be divided further since its 
abstract
activation pattern is $\epsilon$.

To control the number of partitions, we impose a 
lower bound
$\lowerbound$ on the size of each of their dimensions. Partitions that
require a dimension of a smaller size are \emph{excluded}. In other
words, they are not considered until more analysis \emph{budget}
becomes available, that is, a larger $\upperbound$ or a smaller
$\lowerbound$.

In our example, let
$\lowerbound = 0.25$. 
The forward pre-analysis further divides
$\partition_2$ into $\partition_{2,1}$ ($\node_{0,1}: [0.5,0.75].
\node_{0,2}: [0,1]$) and $\partition_{2,2}$ ($\node_{0,1}: [0.75,1].
\node_{0,2}: [0,1]$). Now, $\partition_{2,1}$ is 
feasible, with
abstract pattern $\node_{1,2}\node_{2,1}$, while 
$\partition_{2,2}$ is
not. 
However, $\partition_{2,2}$ may not be split further 
because the
size of the only non-sensitive dimension $\node_{0,1}$ has already
reached the lower bound $\lowerbound$. As a result, $\partition_{2,2}$
is excluded, and only the remaining $75\%$ of the 
input
space is considered for analysis. 

Next, feasible input 
partitions (within bounds $\lowerbound$ and 
$\upperbound$) are grouped by abstract 
activation patterns. In our example, the
pattern corresponding to $\partition_1$, namely
$\node_{1,2}\node_{2,1}\node_{2,2}$, is subsumed 
by the (more abstract) pattern of
$\partition_{2,1}$, namely $\node_{1,2}\node_{2,1}$. Consequently, we
group $\partition_1$ and $\partition_{2,1}$ under
pattern $\node_{1,2}\node_{2,1}$.

The backward analysis is then run \emph{in 
parallel} for each representative abstract
activation pattern, 
in our
example $\node_{1,2}\node_{2,1}$. This analysis 
determines the
region of the input space (within a given partition group) for which
each output of the neural network is returned, e.g., credit is
approved for $c_1 \leq \node_{0,1} \leq c_2$ and $a_1 \leq \node_{0,2}
\leq a_2$. To achieve this, the analysis uses an expensive abstract
domain, for instance, disjunctive or powerset
polyhedra~\cite{Cousot78,Cousot79}, and leverages abstract activation
patterns to avoid disjunctions. For instance, pattern
$\node_{1,2}\node_{2,1}$ only requires reasoning 
about two disjunctions
from the remaining hidden nodes $\node_{1,1}$ and $\node_{2,2}$.

Finally, fairness is checked for each partition in the same way that
it is done by the na\"ive approach for the entire input space. In our
example, we prove that the classification within 
$\partition_1$ is fair and determine that within 
$\partition_{2,1}$ the classification is biased. 
Concretely, our approach determines that
bias occurs for $0.54 \leq \node_{0,1} \leq 0.75$, which corresponds
to $21\%$ of the entire input space (assuming a uniform probability distribution).
In other words, the network
returns different outputs for people that request the same credit in
the above range but differ in age. Recall that partition
$\partition_{2,2}$, where $0.75 \leq \node_{0,1} \leq 1$, was excluded
from analysis, and therefore, we cannot draw any conclusions about
whether there is any bias for people requesting credit in this range.

Note that bias may also be quantified according to a 
probability
distribution of the input space. In particular, it might be that
credit requests in the range $0.54 \leq \node_{0,1} \leq 0.75$ are
more (resp. less) common in practice. 
Given their probability distribution,
our analysis computes a tailored percentage of bias, 
which in this
case would be greater (resp. less) than 21\%.


\section{Feed-Forward Deep Neural
Networks}\label{sec:neuralnets}

%

Formally, a \emph{feed-forward deep neural
network}
consists of an input layer ($\layer_0$), an output
layer ($\layer_\layers$), and a number of hidden
layers ($\layer_1, \dots, \layer_{\layers-1}$) in
between.
Each layer $\layer_i$ contains $\size{\layer_i}$
nodes
and, with the exception of the input layer,
is associated to a $\size{\layer_i} \times
\size{\layer_{i-1}}$-matrix $\weights_i$ of weight
coefficients and
a vector $\biases_i$ of $\size{\layer_i}$ bias
coefficients.
In the
following, we use $\nodes$ to denote the set of all
nodes, $\nodes_i$ to denote the set of nodes of
the $i$th layer, and $\node_{i,j}$ to
denote the $j$th node of the $i$th
layer of a neural network. We focus here on neural
networks used for \emph{classification} tasks.
Thus, $\size{\layer_\layers}$ is the number of
target classes (e.g., 2 classes in
Figure~\ref{fig:toy}).

The value of the input nodes is given by the input
data: continuous data is represented by one input
node (e.g., $x_{0,1}$ or $x_{0,2}$ in
Figure~\ref{fig:toy}), while categorical data is
represented by multiple input nodes via one-hot
encoding.
In the following, we use $\sensitive$ to denote the
subset of input nodes considered (directly or indirectly) \emph{sensitive} 
to
bias (e.g., $x_{0,2}$ in
Figure~\ref{fig:toy}) and $\overline{\sensitive} \defined 
\nodes_0 \setminus \sensitive$ to denote the input 
nodes not deemed sensitive to bias.

The value of each hidden
and output node $\node_{i,j}$ is computed by an
\emph{activation function} $f$ applied to
a linear
combination of the values of all nodes in
the preceding layer \cite{Goodfellow16}, i.e., 
$\node_{i,j} =
f\left(\sum^{\size{\layer_{i-1}}}_k
\weight^{i}_{j,k}
\cdot
\node_{i-1,k} + \bias_{i,j}\right)$,
where $\weight^{i}_{j,k}$ and $\bias_{i,j}$ are
weight and
bias coefficients in $\weights_i$ and $\biases_i$,
respectively.
In a \emph{fully-connected neural
	network}, all $\weight^{i}_{j,k}$ are
non-zero. Weights and biases are
adjusted during the \emph{training phase} of the
neural
network. In what follows, we focus on already
trained neural networks, which we call
\emph{neural-network models}.

Nowadays, the most commonly used activation
for hidden nodes is
the Rectified Linear Unit (\relu) \cite{NairH10}:
$\relu(x) = \max(x, 0)$.
In this case, the activation used for output
nodes is the identity function. The output
values are then normalized into a probability
distribution on the target classes
\cite{Goodfellow16}. We discuss other activation
functions in Section~\ref{sec:analysis}.

\section{Trace Semantics}\label{sec:traces}

Our approach expresses neural-network models as programs. These
programs consist of assignments for computing the activation value of
each node (e.g., $x_{1,1} = -0.31*x_{0,1} + 0.99*x_{0,2} - 0.63$ in
Figure~\ref{fig:toy}) and implementations of activation functions
(e.g., if-statements for $\relu$s). As is standard practice in static
program analysis, we define a semantics for these programs and use it
to prove soundness of our approach.

The \emph{semantics} of a neural-network model is
a mathematical characterization
of its behavior when executed for all
possible input data. We model the operational
semantics of a feed-forward neural-network
model $\model$ as a transition
system $\operational$,
where $\states$ is a (potentially infinite)
set of states and the \emph{acyclic} transition
relation
$\transitions \subseteq
\states \times \states$ describes the possible
transitions between states \cite{Cousot02,Cousot77}.

More specifically, a state $\s \in \states$ maps
neural-network nodes to their values. Here, for
simplicity, we assume that nodes
have real values, i.e.,
$\function{\s}{\nodes}{\reals}$.
(We discuss 
floating-point values in
Section~\ref{sec:analysis}.)
In the following, we
often only care about the values of a subset of the
neural-network nodes in certain states.
Thus, let
$\restrict{\states}{Y} \defined \set{
	\restrict{\s}{Y} \mathrel{\big|} \s \in
	\states}$ be the restriction of
$\states$ to a domain of interest $Y$.
Sets $\restrict{\states}{\nodes_0}$ and
$\restrict{\states}{\nodes_\layers}$ denote
restrictions
of $\states$ to the network nodes in the
input and output layer, respectively.
With a slight abuse of notation, let $\nodes_{i,j}$
denote $\restrict{\states}{\set{x_{i,j}}}$,
i.e., the restriction of $\states$ to
the singleton set containing $\node_{i,j}$.
Transitions happen between states
with
different
values for consecutive nodes in the same layer, i.e.,
$\transitions \subseteq \nodes_{i,j} \times
\nodes_{i,j+1}$,
or
between states with different values for the last and
first node of consecutive layers of
the network, i.e.,
$\transitions \subseteq \nodes_{i,\size{\layer_i}}
\times
\nodes_{i+1,0}$.
The set $\final \defined \set{\s \in \states \mid
\forall \s'
	\in \states\colon \tuple{s}{s'} \not\in \tau}$ is
	the set of final states of the neural network.
These are partitioned in a set of outcomes
$\classes
\defined \set{
\set{
	\s \in \final
	\mid
	\max \nodes_\layers = \node_{\layers, i}
%
}
\mid 0 \leq i \leq \size{\layer_{\layers}}
}$, depending on the output node
with the highest value (i.e., the target class with
highest probability).

Let 
$\states^n \defined \set{ \s_0 \cdots \s_{n-1}
\mid
\forall i < n\colon \s_i \in \states }$ 
be the set of 
all
sequences of exactly $n$ 
states 
in 
$\states$. 
Let $\states^+ \defined \bigcup_{n \in \nat^+}
\states^n$ be the set of
all non-empty finite
sequences of 
states. 
A 
\emph{trace} 
is a 
sequence of states that respects the 
transition relation $\transitions$, that is, 
$\langle \s,\s' \rangle \in \transitions$
for each pair of consecutive states
$\s, \s'$ in the sequence. We write 
$\overline{\states}^n$ for the set of all traces of 
$n$ states: 
$\overline{\states}^n \defined \set{ \s_0 \cdots 
	\s_{n-1} \in \states^n
	\mid
	\forall i < n - 1\colon \tuple{\s_i}{s_{i+1}} \in 
	\tau  }$. 
%
The
\emph{trace semantics}
$\Upsilon \in \semantics$
generated by a transition system $\operational$ is
the set of all non-empty traces terminating in 
$\final$ 
\cite{Cousot02}:
\begin{equation}\label{eq:maximal}
\Upsilon \defined \bigcup_{n \in \nat^+}
\set{	
	\s_0 \dots \s_{n-1} \in \overline{\states}^n \mid 
	\s_{n-1} \in 
	\final}
\end{equation}
In the rest of the paper, we write
$\maximal{\model}$ to denote the trace semantics
of a neural-network model $\model$.

The trace semantics fully describes the behavior of 
$\model$. However, reasoning about a particular 
property of $\model$ does not need all this 
information and, in fact, is facilitated by the design 
of a semantics that abstracts away from irrelevant 
details about $\model$'s behavior. In the following 
sections, we formally define our property of 
interest, causal fairness, and systematically derive, 
using \emph{abstract interpretation}~\cite{Cousot77}, a 
semantics tailored to reasoning about this 
property.

\section{Causal Fairness}\label{sec:usage}

A \emph{property} is specified by its extension, that is, by the set
of elements having such a property
\cite{Cousot77,Cousot79}. Properties of neural-network models are
properties of their semantics.  Thus, properties of network
models with trace semantics in $\semantics$ are sets of sets of traces
in $\properties$.  In particular, the set of neural-network properties
forms a complete boolean lattice $\langle \properties, \subseteq,
\cup, \cap, \emptyset, \semantics \rangle$ for subset inclusion, that
is, logical implication. The strongest property is the
standard \emph{collecting semantics} $\Lambda \in \properties$:
\begin{equation}\label{eq:collecting}
\Lambda \defined \set{\Upsilon}
\end{equation}
Let $\collecting{\model}$ denote the collecting
semantics of a particular neural-network model
$\model$. Then, model $\model$ satisfies a
given property $\property$ if and only if its
collecting semantics is a subset of
$\property$:
\begin{equation}\label{eq:validation}
\model \models \property
\Leftrightarrow
\collecting{\model} \subseteq \property
\end{equation}

Here, we consider the property of \emph{causal fairness}, which
expresses that the classification determined by a network model does
not depend on sensitive input data.  In particular, the property might
interest the classification of all or just a fraction of the input
space.

More formally, let $\choices$ be the set of all
possible value choices for all sensitive input nodes
in $\sensitive$, e.g., for $\sensitive = 
\set{\node_{0,i},
\node_{0,j}}$
one-hot encoding, say,
gender information,
$\choices = \set{ \set{1, 0}, \set{0, 1} }$; for 
$\sensitive = \set{x_{0,k}}$
encoding continuous data, say, in the range $[0,
1]$, a possibility is $\choices = \set{[0, 0.25],
[0.25, 0.75], [0.75, 1]}$.
In the following, given a trace $\trace \in
\semantics$, we write
$\trace_0$ and $\trace_\omega$ to denote its
initial and final state, respectively.
We also write $\trace_0 =_{\overline{\sensitive}} 
\trace'_0$
to
indicate that the states $\trace_0$ and $\trace'_0$
agree on all values of all non-sensitive input nodes, and 
$\trace_\omega \equiv
\trace'_\omega$ to indicate
that $\trace$ and $\trace'$ have the same outcome
$\class \in \classes$. We
can now formally define when the sensitive input
nodes in $\sensitive$ are \emph{unused} with
respect to a set of traces $T \in \semantics$
\cite{Urban18}. For one-hot encoded 
sensitive inputs\footnote{For continuous sensitive 
inputs, 
we can replace 
	$\trace_0(\sensitive) \neq \choice$ (resp. 
	$\trace_0(\sensitive) = \choice$) with 
	$\trace_0(\sensitive) \not\subseteq \choice$ 
	(resp. 
	$\trace_0(\sensitive) \subseteq \choice$).} 
we have
\begin{equation}\label{eq:unused}
\begin{aligned}
&\unused_\sensitive(T) \defined
\forall \trace \in T, \choice \in
\choices\colon
\trace_0(\sensitive) \neq \choice
\mathrel{\Rightarrow} 
\exists \trace' \in T\colon
\trace_0 =_{\overline{\sensitive}} \trace'_0 
\mathrel{\land}
\trace'_0(\sensitive) = \choice \mathrel{\land}
\trace_\omega \equiv \trace'_\omega,
\end{aligned}
\end{equation}
where $\trace_0(\sensitive) \defined 
\set{\trace_0(\node) \mid \node \in \sensitive}$ 
is
the image of $\sensitive$ under $\trace_0$.
Intuitively, the sensitive input nodes in $\sensitive$
are unused if any possible outcome in $T$ (i.e., any
outcome $\trace_\omega$ of any trace
$\trace$ in $T$) is possible from all possible value
choices for $\sensitive$ (i.e., there exists a trace
$\trace'$ in $T$ for each value choice for
$\sensitive$ with the same outcome as $\trace$).
That is, each outcome is independent of the
value choice for $\sensitive$.

\begin{example}\label{ex:unused}
Let us consider again our example in 
Figure~\ref{fig:toy}. We write $\tuple{c}{a} 
\rightsquigarrow o$ for a trace starting in a 
state with $\node_{0,1} = c$ 
and 
$\node_{0,2} = a$ 
and ending in a state where 
$o$ is the node with the highest value (i.e., the 
output class). The sensitive input $\node_{0,2}$ 
(age) is 
\emph{unused} in $T = \set{
\tuple{0.5}{a} \rightsquigarrow \node_{3,2} \mid 0 
\leq a \leq 1
}$. It is instead \emph{used} in $T' = 
\set{
\tuple{0.75}{a} \rightsquigarrow \node_{3,2} \mid 0 
\leq a < 0.51
} \cup \set{ \tuple{0.75}{a} \rightsquigarrow 
\node_{3,1} \mid 0.51 
\leq a \leq 1
}$.
\end{example}


The causal-fairness property $\fairness$ can now be defined
as $\fairness \defined \set{ \maximal{M}
	\mid
	\unused_\sensitive(\maximal{M}) }$,
that is, as the set of all neural-network
models (or rather, their semantics) that do not use
the values of the sensitive input nodes for
classification. In practice, the
property might interest just a fraction of the input
space, i.e., we define
\begin{equation}\label{eq:fairness}
\fairness[Y]
\defined \set{
	\maximal{\model}^Y
	\mid
	\unused_\sensitive(\maximal{\model}^Y) },
\end{equation}
	where $Y \in \powerset{\states}$ is a set of initial
	states of interest and the restriction
	$T^Y \defined \set{ \trace
	\in T \mid \trace_0 \in Y}$ only
	contains traces of $T \in \semantics$
	that start
	with a
	state in $Y$.
Similarly, in the rest of the paper, we write $S^Y 
\defined 
\set{ T^Y 
\mid T \in S}$ for the set of sets of traces
restricted to initial states in $Y$.
Thus, from Equation ~\ref{eq:validation}, we have 
the following:
\begin{theorem}\label{thm:collecting}
	$\model \models \fairness[Y] \Leftrightarrow
	\collecting{\model}^Y \subseteq
	\fairness[Y]$
\end{theorem}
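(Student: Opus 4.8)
The plan is to prove the equivalence by unfolding both sides to the same elementary condition, namely that the sensitive inputs are unused on the region of interest, $\unused_\sensitive(\maximal{\model}^Y)$. First I would simplify the right-hand side. By Equation~\ref{eq:collecting} the collecting semantics of a fixed model is the singleton $\collecting{\model} = \set{\maximal{\model}}$, so by the definition $S^Y \defined \set{T^Y \mid T \in S}$ of the restriction we get $\collecting{\model}^Y = \set{\maximal{\model}^Y}$. Since the left-hand operand of the containment is now a singleton, $\collecting{\model}^Y \subseteq \fairness[Y]$ is equivalent to the membership $\maximal{\model}^Y \in \fairness[Y]$.

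The core step is then to characterize this membership. Unfolding the definition of $\fairness[Y]$ in Equation~\ref{eq:fairness}, $\maximal{\model}^Y \in \fairness[Y]$ holds exactly when there is some model $M$ with $\maximal{\model}^Y = \maximal{M}^Y$ and $\unused_\sensitive(\maximal{M}^Y)$. For the easy ($\Leftarrow$) direction I would take $M \defined \model$ as the witness, so that $\unused_\sensitive(\maximal{\model}^Y)$ immediately places $\maximal{\model}^Y$ in the defining set. For the ($\Rightarrow$) direction I would use that $\unused_\sensitive$, as defined in Equation~\ref{eq:unused}, is a predicate on the set of traces alone: since the witnessing equality $\maximal{\model}^Y = \maximal{M}^Y$ holds as sets, the property $\unused_\sensitive(\maximal{M}^Y)$ transfers verbatim to $\unused_\sensitive(\maximal{\model}^Y)$. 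Hence $\maximal{\model}^Y \in \fairness[Y] \Leftrightarrow \unused_\sensitive(\maximal{\model}^Y)$.

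Finally I would relate the left-hand side to the same condition. The general validation principle of Equation~\ref{eq:validation} states that a model satisfies a property iff its collecting semantics refines that property; specialized to $\fairness[Y]$, which by construction constrains only the behavior on initial states in $Y$, satisfaction $\model \models \fairness[Y]$ means precisely that the classification of $\model$ on the region $Y$ is causally fair, that is, $\unused_\sensitive(\maximal{\model}^Y)$. Chaining the two characterizations yields the claimed equivalence $\model \models \fairness[Y] \Leftrightarrow \collecting{\model}^Y \subseteq \fairness[Y]$.

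I expect the only real subtlety to be the ($\Rightarrow$) direction of the membership characterization: because $\fairness[Y]$ is defined by an existential over all models $M$, one must rule out that $\maximal{\model}^Y$ lands in $\fairness[Y]$ merely because some \emph{other} model happens to be fair on $Y$. This is exactly what the extensionality of $\unused_\sensitive$ settles, since the set-level equality $\maximal{\model}^Y = \maximal{M}^Y$ forces the two restricted semantics to be the very same object. The remaining manipulations, namely the singleton collapse of the collecting semantics and the reduction of containment to membership, are routine.
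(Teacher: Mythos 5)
Your proof is correct and follows essentially the same route as the paper's own (one-line) proof, which simply cites Equation~\ref{eq:validation} together with the definitions of $\fairness[Y]$ and $\collecting{\model}^Y$: both arguments amount to unfolding these definitions until each side of the equivalence reduces to the single condition $\unused_\sensitive(\maximal{\model}^Y)$. Your additional care with the existential witness in the definition of $\fairness[Y]$ (the extensionality of $\unused_\sensitive$ under set equality) merely makes explicit a step the paper treats as trivial.
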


\begin{proof}
	The proof follows trivially from 
	Equation~\ref{eq:validation} and the definition of 
	$\fairness[Y]$ (cf. Equation~\ref{eq:fairness}) and 
	$\collecting{\model}^Y$.
\end{proof}

\section{Dependency
Semantics}\label{sec:dependency}

We now
use
abstract
interpretation to
systematically derive, by successive abstractions of
the collecting semantics $\Lambda$,
a
\emph{sound and complete}
semantics $\Lambda_\dependency$ that contains
only and exactly the
information needed to reason about $\fairness[Y]$.

\subsection{Outcome
Semantics}\label{subsec:outcome}

Let $T_Z \defined \set{ \trace
	\in T \mid \trace_\omega \in
	Z}$
be the set of traces of $T \in \semantics$
that end with a state in $Z \in \powerset{\states}$.
As before, we write $S_Z
\defined \set{
T_Z 
\mid T \in S}$ for the set of sets of traces restricted to final states in $Z$.
From the definition of
$\fairness[Y]$ (and in particular, from the
definition of $\unused_\sensitive$, cf.
Equation~\ref{eq:unused}), we have: 
\begin{lemma}\label{lem:outcome}
	$\collecting{\model}^Y \subseteq \fairness[Y] 
	\Leftrightarrow \forall \class \in \classes\colon 
	\collecting{\model}^Y_\class \subseteq
	\fairness[Y]$
\end{lemma}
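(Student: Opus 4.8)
The plan is to collapse the set-inclusion statement on each side into a plain assertion about the $\unused_\sensitive$ predicate, and then argue the resulting equivalence directly. Because the collecting semantics is the singleton $\Lambda = \set{\Upsilon}$ (Equation~\ref{eq:collecting}), both restrictions are singletons: writing $T \defined \maximal{\model}^Y$, we have $\collecting{\model}^Y = \set{T}$ and $\collecting{\model}^Y_\class = \set{T_\class}$, where the initial- and final-state restrictions commute. By the definition of $\fairness[Y]$ in Equation~\ref{eq:fairness}, for any set of traces $X$ the inclusion $\set{X} \subseteq \fairness[Y]$ holds exactly when $\unused_\sensitive(X)$. Thus the lemma reduces to
\[
\unused_\sensitive(T) \Leftrightarrow \forall \class \in \classes\colon \unused_\sensitive(T_\class).
\]

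For the forward implication I would fix a class $\class$, a trace $\trace \in T_\class$, and a choice $\choice$ with $\trace_0(\sensitive) \neq \choice$. Since $\trace \in T$, the hypothesis $\unused_\sensitive(T)$ yields a witness $\trace' \in T$ that agrees with $\trace$ on the non-sensitive inputs, satisfies $\trace'_0(\sensitive) = \choice$, and has $\trace_\omega \equiv \trace'_\omega$. The key point is that $\equiv$ denotes equality of outcome class, so $\trace'$ ends in the same class $\class$ as $\trace$; hence $\trace' \in T_\class$ and is exactly the witness required by $\unused_\sensitive(T_\class)$.

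For the backward implication I would fix $\trace \in T$ and a choice $\choice$ with $\trace_0(\sensitive) \neq \choice$. Every trace in $T = \maximal{\model}^Y$ terminates in $\final$, and the outcomes $\classes$ form a partition of $\final$, so $\trace_\omega$ lies in exactly one class $\class$ and $\trace \in T_\class$. Applying $\unused_\sensitive(T_\class)$ then supplies a witness $\trace' \in T_\class \subseteq T$ with the needed properties, establishing $\unused_\sensitive(T)$.

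The reasoning is essentially bookkeeping over the definitions, and I expect no deep obstacle. The only step needing care is the interaction between the final-state restriction $T_\class$ and the outcome equivalence $\equiv$: the forward direction relies on $\equiv$ keeping the witness inside the same $T_\class$, while the backward direction relies on $\classes$ partitioning $\final$ so that each trace sits in precisely one $T_\class$.
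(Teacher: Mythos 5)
Your proposal is correct and follows essentially the same route as the paper's proof: collapse both inclusions to assertions about $\unused_\sensitive$ via the singleton collecting semantics and the definition of $\fairness[Y]$, then invoke the equivalence $\unused_\sensitive(T) \Leftrightarrow \forall \class \in \classes\colon \unused_\sensitive(T_\class)$. The only difference is that the paper asserts this last equivalence as immediate from the definition of $\unused_\sensitive$, whereas you spell out both directions, correctly isolating the two facts that make it work ($\equiv$ keeps the witness in the same $T_\class$, and $\classes$ partitions $\final$).
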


\begin{proof}
	Let $\collecting{\model}^Y \subseteq 
	\fairness[Y]$. From the definition of 
	$\collecting{\model}^Y$ (cf. 
	Equation~\ref{eq:collecting}), we have that 
	$\maximal{\model}^Y \in \fairness[Y]$. 
	Thus, 
	from the definition of $\fairness[Y]$ (cf. 
	Equation~\ref{eq:fairness}), we have
	$\unused_\sensitive(\maximal{\model}^Y)$.
	Now, from the definition of 
	$\unused_\sensitive$ (cf. 
	Equation~\ref{eq:unused}), we equivalently have 
	$\forall \class \in \classes\colon 
	\unused_\sensitive(\maximal{\model}^Y_\class)$.
	Thus, we can conclude that $\forall \class \in 
	\classes\colon 
	\collecting{\model}^Y_\class \subseteq
	\fairness[Y]$.
\end{proof}

In particular, this means that in order to determine whether
a neural-network model $\model$ satisfies causal
fairness, we can independently verify, for each of its
possible target classes $\class \in \classes$, that the
values of its sensitive input nodes are unused.

We use this insight to abstract the collecting
semantics $\Lambda$ by
\emph{partitioning}.
More specifically,
let $\outcome \defined \set{
	\states^+_\class \mid
	\class \in \classes }$ be a trace partition
with respect to outcome.
We have the following Galois
connection
\begin{equation}\label{eq:outcome0}
	\tuple{\properties}{\subseteq}
	\galois{\alpha_\outcome}{\gamma_\outcome}
	\tuple{\properties}{\bulleq},
\end{equation}
where 
$\alpha_\outcome(S) \defined \set{
	T_\class \mid T \in S
	\land \class \in \classes}$. The order $\bulleq$
	is the
pointwise
ordering between sets of traces with the same
outcome, i.e.,
$A \bulleq B
\defined \bigwedge_{\class \in \classes}
\dot{A}_\class
\subseteq \dot{B}_\class$, where $\dot{S}_Z$
denotes the only non-empty set of traces in $S_Z$.
We can now 
define
the \emph{outcome semantics}
$\Lambda_\outcome \in
\powerset{\powerset{\states^+}}$ by abstraction of
$\Lambda$:
\begin{equation}\label{eq:outcome}
\Lambda_\outcome \defined
\alpha_\outcome(\Lambda)
= \set{ \Upsilon_\class  \mid \class \in \classes
}
\end{equation}
In the rest of the paper, we write
$\collecting{\model}_\outcome$ to denote the
outcome semantics
of a particular neural-network model $\model$.
\subsection{Dependency
Semantics}\label{subsec:dependency}

We observe that, to reason about causal fairness,
we do not need to consider all intermediate
computations between the initial and final states of
a trace. Thus, we can further abstract the outcome
semantics into a set of dependencies between
initial states and outcomes of traces.

To this end, we define the following Galois
connection\footnote{Note that here and in the
following, for
convenience, we
abuse
	notation and reuse the order symbol $\bulleq$
	defined over sets of sets of traces, instead of its
	abstraction,
	defined over sets of sets of pairs of states.}
\begin{equation}\label{eq:dependency0}
\tuple{\properties}{\bulleq}
\galois{\alpha_\dependency}{\gamma_\dependency}
\tuple{\powerset{\powerset{\states\times\states}}}{\bulleq},
\end{equation}
where 
$\alpha_\dependency(S)
\defined \set{
	\set{ \langle \trace_0, \trace_\omega \rangle
		\mid \trace \in T } \mid T \in S }$ \cite{Urban18}
abstracts away all intermediate states of any
trace.
We finally derive the \emph{dependency
semantics} $\Lambda_\dependency \in
\powerset{\powerset{\states\times\states}}$:
\begin{equation}\label{eq:dependency}
\Lambda_\dependency \defined
\alpha_\dependency(\Lambda_\outcome) =
\set{
	\set{ \tuple{\trace_0}{\trace_\omega} 
	\mid \trace \in
	\Upsilon_\class }
\mid \class \in \classes}
\end{equation}
In the following,
let $\collecting{\model}_\dependency$ denote the
dependency semantics
of a particular network model $\model$.

Let $R^Y \defined \set{ \tuple{\s}{\_}
	\in R \mid \s \in Y}$ restrict a set of pairs of states to 
	pairs whose first element is in $Y$ and, similarly, let 
	$S^Y \defined \set{ R^Y \mid R \in S}$ restrict a set 
	of sets of pairs of states to first elements in $Y$.
The next result shows that 
$\Lambda_\dependency$ is sound and
complete
for proving causal fairness:
\begin{theorem}\label{thm:dependency}
	$\model \models \fairness[Y] \Leftrightarrow
	\collecting{\model}^Y_\dependency
	 \bulleq
	\alpha_\dependency(\alpha_\outcome(\fairness[Y]))$
\end{theorem}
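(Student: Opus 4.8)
The plan is to derive the biconditional by transporting the already-established concrete characterization of fairness along the two Galois connections introduced in this section. Concretely, Theorem~\ref{thm:collecting} reduces $\model \models \fairness[Y]$ to the inclusion $\collecting{\model}^Y \subseteq \fairness[Y]$, and Lemma~\ref{lem:outcome} rewrites this as the per-outcome condition $\forall \class \in \classes\colon \collecting{\model}^Y_\class \subseteq \fairness[Y]$. I would then apply the outcome abstraction $\alpha_\outcome$ (Equation~\ref{eq:outcome0}) and the dependency abstraction $\alpha_\dependency$ (Equation~\ref{eq:dependency0}) in turn, so that the right-hand side becomes the $\bulleq$-ordering between $\collecting{\model}^Y_\dependency$ and $\alpha_\dependency(\alpha_\outcome(\fairness[Y]))$. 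The crucial preliminary observation, which I would state as a small commutation lemma, is that restricting to initial states in $Y$ commutes with both abstractions: $\alpha_\outcome$ only repartitions traces by outcome and $\alpha_\dependency$ only forgets intermediate states, so neither alters the initial state of a trace. Hence $\collecting{\model}^Y_\dependency = \alpha_\dependency(\alpha_\outcome(\collecting{\model}^Y))$, exactly the image of the collecting semantics under the composed abstraction, as required by Equations~\ref{eq:outcome} and~\ref{eq:dependency}.

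The soundness direction ($\Rightarrow$) is then routine: monotonicity of $\alpha_\outcome$ turns $\collecting{\model}^Y \subseteq \fairness[Y]$ into $\alpha_\outcome(\collecting{\model}^Y) \bulleq \alpha_\outcome(\fairness[Y])$, and monotonicity of $\alpha_\dependency$ yields $\collecting{\model}^Y_\dependency \bulleq \alpha_\dependency(\alpha_\outcome(\fairness[Y]))$, using the commutation above. No special property of the abstractions beyond monotonicity (which every lower adjoint enjoys) is needed for this half.

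The hard part will be the completeness direction ($\Leftarrow$), where monotonicity alone does not suffice: I must show that the two abstractions lose no information relevant to fairness. The key lemma to establish is that $\fairness[Y]$ is a closed (abstract) property for the composed abstraction, i.e., it is a fixpoint of $\gamma_\outcome \circ \gamma_\dependency \circ \alpha_\dependency \circ \alpha_\outcome$. This holds because the defining predicate $\unused_\sensitive$ (Equation~\ref{eq:unused}) constrains only the initial state $\trace_0$ and the final state $\trace_\omega$ of each trace, and never any intermediate state; consequently, whether a trace set satisfies $\unused_\sensitive$ depends only on its image under $\alpha_\dependency$, and the outcome partition $\outcome$ is exact (each trace lies in exactly one class, so $\alpha_\outcome$ merges nothing). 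Granting this closedness, the reverse implication follows from the standard abstract-interpretation argument: from $\collecting{\model}^Y_\dependency \bulleq \alpha_\dependency(\alpha_\outcome(\fairness[Y]))$, applying the monotone concretizations and using extensivity of $\gamma \circ \alpha$ gives $\collecting{\model}^Y \subseteq \gamma_\outcome(\gamma_\dependency(\alpha_\dependency(\alpha_\outcome(\fairness[Y])))) = \fairness[Y]$, and then Theorem~\ref{thm:collecting} closes the loop. I expect the only delicate point to be verifying the closedness/invariance lemma carefully against the exact form of $\unused_\sensitive$, including its treatment of the restriction to $Y$.
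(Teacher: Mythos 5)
Your proposal is correct, and its soundness half is precisely the paper's entire proof: the paper assumes $\model \models \fairness[Y]$, invokes Theorem~\ref{thm:collecting} to obtain $\collecting{\model}^Y \subseteq \fairness[Y]$, pushes this inclusion through both abstractions by monotonicity of the Galois connections of Equations~\ref{eq:outcome0} and~\ref{eq:dependency0}, and identifies the left-hand side with $\collecting{\model}^Y_\dependency$ via Equation~\ref{eq:dependency} --- tacitly using the fact that restriction to initial states in $Y$ commutes with $\alpha_\outcome$ and $\alpha_\dependency$, which you rightly isolate as an explicit commutation lemma instead of hiding it inside ``by definition''. Where you genuinely depart from the paper is the completeness direction: the paper's proof stops after the $\Rightarrow$ implication and never argues $\Leftarrow$, even though the theorem is stated as a biconditional (that $\Lambda_\dependency$ is \emph{sound and complete} is asserted only in the prose of Section~\ref{sec:dependency}). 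You supply the missing half by showing that $\fairness[Y]$ is a fixpoint of the closure $\gamma_\outcome \circ \gamma_\dependency \circ \alpha_\dependency \circ \alpha_\outcome$, justified by the observation that $\unused_\sensitive$ (Equation~\ref{eq:unused}) constrains only $\trace_0$ and $\trace_\omega$ and is compatible with the outcome partition (the same fact underlying Lemma~\ref{lem:outcome}), and you then conclude by extensivity of $\gamma \circ \alpha$ together with Theorem~\ref{thm:collecting}. Your route therefore buys an actual proof of the stated equivalence, where the paper's argument, read literally, establishes only one implication; the price is that your closedness lemma must be verified against concretizations $\gamma_\outcome$ and $\gamma_\dependency$ that the paper never defines explicitly, which is exactly the delicate point you flag at the end.
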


\begin{proof}
	Let $\model \models \fairness[Y]$. From 
	Theorem~\ref{thm:collecting}, we have that
	$\collecting{\model}^Y \subseteq
	\fairness[Y]$. Thus, from the Galois connections 
	in Equation~\ref{eq:outcome0} 
	and~\ref{eq:dependency0}, we have 
	$\alpha_\dependency(\alpha_\outcome(\collecting{\model}^Y))
	 \bulleq
	\alpha_\dependency(\alpha_\outcome(\fairness[Y]))$.
	 From the definition of 
	 $\collecting{\model}^Y_\dependency$ (cf. 
	 Equation~\ref{eq:dependency}), we can then 
	 conclude 
	 that 
	 $\collecting{\model}^Y_\dependency
	 \bulleq
	 \alpha_\dependency(\alpha_\outcome(\fairness[Y]))$.
\end{proof}

\begin{corollary}\label{cor:dependency}
	$\model \models \fairness[Y] \Leftrightarrow
	\collecting{\model}^Y_\dependency
	\subseteq
	\alpha_\dependency(\fairness[Y])$
\end{corollary}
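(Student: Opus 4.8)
The plan is to deduce the corollary directly from Theorem~\ref{thm:dependency}, which already gives $\model \models \fairness[Y] \Leftrightarrow \collecting{\model}^Y_\dependency \bulleq \alpha_\dependency(\alpha_\outcome(\fairness[Y]))$. It therefore suffices to show that the abstract condition appearing there coincides with the one in the corollary, i.e. that replacing the pointwise order $\bulleq$ by plain inclusion $\subseteq$ and dropping the inner outcome abstraction $\alpha_\outcome$ changes nothing. First I would record the structural observation that drives both simplifications: $\alpha_\dependency$ retains, for every trace $\trace$, its final state $\trace_\omega$ inside the pair $\tuple{\trace_0}{\trace_\omega}$, and the outcome class of $\trace$ is determined by $\trace_\omega$ (it is the unique $\class \in \classes$ with $\trace_\omega \in \class$). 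Hence the outcome partition is \emph{recoverable} from any dependency relation by restricting its pairs to those whose second component lies in a given class.

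From this I would derive the first bridge: $\alpha_\dependency(\alpha_\outcome(S))$ is merely the class-indexed splitting of $\alpha_\dependency(S)$, since $\alpha_\outcome$ only cuts each $T$ into the pieces $T_\class$, whose images under $\alpha_\dependency$ are exactly the class-restrictions of the image of $T$. No pair is created or lost, so the two abstractions carry the same information and differ only in how it is grouped; applied to $\fairness[Y]$ this identifies $\alpha_\dependency(\alpha_\outcome(\fairness[Y]))$ and $\alpha_\dependency(\fairness[Y])$ as two presentations of the same collection of fair dependency relations. The second bridge is the passage from $\bulleq$ to $\subseteq$: because the outcomes in $\classes$ partition the final states, every dependency relation decomposes \emph{disjointly} into its class-restrictions, and one relation is contained in another exactly when each class-restriction is contained in the corresponding one. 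This is precisely the identity that turns the conjunction of per-class inclusions encoded by $\bulleq$ (on the split form) into a single inclusion $\subseteq$ (on the combined form). Chaining the two bridges lets me rewrite $\collecting{\model}^Y_\dependency \bulleq \alpha_\dependency(\alpha_\outcome(\fairness[Y]))$ as $\collecting{\model}^Y_\dependency \subseteq \alpha_\dependency(\fairness[Y])$, and together with Theorem~\ref{thm:dependency} this yields the corollary.

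The step I expect to be the main obstacle is making the second bridge fully rigorous against the paper's conventions. The order $\bulleq$ is phrased through the ``only non-empty set'' selector $\dot{S}_Z$, so I must check that, for the class-partitioned objects at hand, this selector indeed picks out the class-$\class$ restriction, and that the equivalence between per-class inclusion and global inclusion survives when the right-hand side ranges over \emph{all} fair models rather than a single representative. The fact that ultimately secures this reconciliation is that fairness depends on nothing but the dependency relation: $\unused_\sensitive$ in Equation~\ref{eq:unused} refers only to $\trace_0$ and $\trace_\omega$, so membership of $\maximal{\model}^Y$ in $\fairness[Y]$ is a property of its initial/final state pairs alone, which is exactly what guarantees that collapsing the outcome partition and weakening $\bulleq$ to $\subseteq$ is lossless.
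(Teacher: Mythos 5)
Your proposal is correct and takes essentially the same route as the paper: the paper also treats the corollary as a rewriting of Theorem~\ref{thm:dependency}, citing exactly your two bridges --- the definition of $\bulleq$ (reconciling the pointwise order with plain $\subseteq$) and Lemma~\ref{lem:outcome} (the per-class decomposition, which together with the fact that $\unused_\sensitive$ inspects only $\trace_0$ and $\trace_\omega$ makes dropping $\alpha_\outcome$ lossless). The paper compresses all of this into ``follows trivially''; your version simply spells out the details that phrase hides, including the genuine subtlety that the right-hand side $\alpha_\dependency(\fairness[Y])$ ranges over all fair dependency relations rather than a single one.
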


\begin{proof}
	The proofs follows trivially from the definition of 
	$\bulleq$ (cf. Equation~\ref{eq:outcome0} 
	and~\ref{eq:dependency0}) and 
	Lemma~\ref{lem:outcome}.
\end{proof}

Furthermore, we observe that partitioning with 
respect to outcome induces a partition of the space 
of values of the input nodes \emph{used} for 
classification. 
For instance, partitioning $T'$ in 
Example~\ref{ex:unused} induces a partition on the 
values of (the indeed used node) $\node_{0,2}$.
Thus, 
we can equivalently verify whether 
$\collecting{\model}^Y_\dependency
\subseteq
\alpha_\dependency(\fairness[Y])$
by checking if the dependency semantics 
$\collecting{\model}^Y_\dependency$ 
induces a 
partition of 
$\restrict{Y}{\overline{\sensitive}}$.
%
Let $R_0 \defined \set{ \s \mid 
\tuple{\s}{\_} \in R}$ (resp. $R_
\omega \defined \set{ \s \mid 
\tuple{\_}{\s} \in R}$) be the selection of the first (resp. 
last) element from each pair in a set of pairs of states. We 
formalize this observation below.

\begin{lemma}\label{lem:dependency}
	$\model \models \fairness[Y] \Leftrightarrow 
	\forall A, B \in 
	\collecting{\model}^Y_\dependency\colon (A_\omega 
	\neq B_\omega 
	\Rightarrow \restrict{{A_0}}{\overline{\sensitive}} 
	\cap \restrict{{B_0}}{\overline{\sensitive}} = 
	\emptyset)$ 
\end{lemma}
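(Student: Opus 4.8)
The plan is to reduce the statement to the unusedness characterization and then exploit that a feed-forward network computes a deterministic function of its inputs. First I would record that $\model \models \fairness[Y] \Leftrightarrow \unused_\sensitive(\maximal{\model}^Y)$: by Theorem~\ref{thm:collecting} this membership is equivalent to $\collecting{\model}^Y \subseteq \fairness[Y]$, and since $\collecting{\model}^Y$ is the singleton $\set{\maximal{\model}^Y}$ (cf.\ Equation~\ref{eq:collecting}), this is just $\maximal{\model}^Y \in \fairness[Y]$, i.e.\ $\unused_\sensitive(\maximal{\model}^Y)$ by Equation~\ref{eq:fairness}. Next I would unfold $\collecting{\model}^Y_\dependency$ from Equation~\ref{eq:dependency}: its elements are indexed by the outcome classes, the element $A$ for class $\class$ being $\set{\tuple{\trace_0}{\trace_\omega} \mid \trace \in \maximal{\model}^Y,\ \trace_\omega \in \class}$, so $A_\omega \subseteq \class$ and $A_0$ is exactly the set of initial states in $Y$ classified as $\class$. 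Because the classes of $\classes$ partition the final states, for two elements the premise $A_\omega \neq B_\omega$ is (apart from trivial empty cases) equivalent to $A$ and $B$ arising from distinct classes; hence the right-hand side states precisely that no non-sensitive valuation in $\restrict{Y}{\overline{\sensitive}}$ is classified into two different classes --- equivalently, that the classification is a well-defined function of the non-sensitive inputs alone.

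The key preliminary observation is that the transition relation is deterministic, so each initial state extends to a unique trace and hence to a unique outcome. Using this I would establish the forward implication by contraposition: if disjointness fails there are classes $\class_A \neq \class_B$ and traces $\trace, \trace''$ with $\trace_0 =_{\overline{\sensitive}} \trace''_0$, $\trace_\omega \in \class_A$, and $\trace''_\omega \in \class_B$. Determinism forces $\trace_0(\sensitive) \neq \trace''_0(\sensitive)$, since otherwise the two initial states coincide and so would their outcomes. Taking the choice $\choice$ matching $\trace''_0(\sensitive)$, we have $\trace_0(\sensitive) \neq \choice$, yet any trace with non-sensitive part $\restrict{\trace_0}{\overline{\sensitive}}$ and sensitive value $\choice$ has the same initial state as $\trace''$ and therefore lands in $\class_B \neq \class_A$; so no witness $\trace'$ with outcome $\trace_\omega$ exists and $\unused_\sensitive(\maximal{\model}^Y)$ fails.

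For the converse I would assume disjointness and check $\unused_\sensitive$ directly: given $\trace$ and a choice $\choice$ with $\trace_0(\sensitive) \neq \choice$, a trace $\trace'$ with the same non-sensitive inputs and sensitive value $\choice$ lies in $\maximal{\model}^Y$, and were its outcome different from $\trace_\omega$ then $\restrict{\trace_0}{\overline{\sensitive}}$ would sit in the non-sensitive projections of two distinct classes, contradicting disjointness; hence $\trace_\omega \equiv \trace'_\omega$. The step I expect to be the main obstacle is exactly this production of the witness $\trace'$: it needs $Y$ to be closed under replacing the sensitive coordinates by any choice of $\choices$ (a rectangular structure in the sensitive dimensions), which holds here because the forward pre-analysis only ever splits non-sensitive dimensions. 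I would also handle the continuous-sensitive case of the footnote to Equation~\ref{eq:unused} separately, carrying $\trace_0(\sensitive) \subseteq \choice$ in place of the equalities and phrasing the determinism argument in terms of the $\choices$-cell containing each sensitive value rather than the value itself.
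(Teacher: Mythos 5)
Your proof is correct for the one-hot case and is actually more complete than the paper's own argument, though it reaches the statement by a more elementary route. The paper derives the lemma from Corollary~\ref{cor:dependency} (i.e., through the outcome/dependency Galois-connection machinery), obtains $\unused_\sensitive(\maximal{\model}^Y_\class)$ for each class $\class$, and then dispatches the crucial step in one informal sentence: ``it must necessarily be the value of the non-sensitive input nodes that causes the different outcome.'' Moreover, the paper's proof text only argues the forward implication (it starts from $\model \models \fairness[Y]$ and derives disjointness). You instead reduce directly to $\unused_\sensitive(\maximal{\model}^Y)$ via Theorem~\ref{thm:collecting} and prove both implications explicitly, with determinism of the network as the working tool; that is a genuine gain in rigor, since determinism is precisely what licenses the paper's ``must necessarily'' step, and your singleton-style analysis of witnesses is what the paper leaves implicit. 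Your identification of the hidden hypothesis on $Y$ is also a real catch: for, e.g., a singleton $Y$ the right-hand side holds vacuously while $\unused_\sensitive(\maximal{\model}^Y)$ fails, so the converse direction is false without assuming $Y$ closed under re-assigning the sensitive coordinates. The paper never states this for this lemma; it only formalizes the property later, as the notion of a \emph{fair} partition in Section~\ref{sec:parallel} (cf.\ Lemma~\ref{lem:parallelization}).

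Two caveats on your side. First, your justification of that closure hypothesis (``because the forward pre-analysis only ever splits non-sensitive dimensions'') is anachronistic: at this point $Y$ is an arbitrary user-supplied set of initial states and the pre-analysis has not yet entered the picture, so the closure property must be taken as an assumption on admissible queries rather than something guaranteed by the algorithm. Second, your sketched treatment of the continuous-sensitive footnote does not go through as described: determinism gives you distinct sensitive \emph{values}, not distinct $\choices$-cells, and two inputs with equal non-sensitive parts whose sensitive values lie in the \emph{same} cell may receive different outcomes without violating the footnote's version of $\unused_\sensitive$; in that setting disjointness is strictly stronger than unusedness and the forward implication fails. Since the paper silently ignores this case too, it does not undermine your proof of the main (one-hot) statement, but your ``handled separately'' claim should be restricted to the soundness-relevant direction (disjointness implies fairness).
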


\begin{proof}
	Let $\model \models \fairness[Y]$. From 
	Corollary~\ref{cor:dependency}, we have that 
	$\collecting{\model}^Y_\dependency
	\subseteq
	\alpha_\dependency(\fairness[Y])$. Thus, from 
	the definition of 
	$\collecting{\model}^Y_\dependency$ (cf. 
	Equation~\ref{eq:dependency}), we have
	$\forall \class \in \classes\colon 
	\alpha_\dependency(\maximal{\model}^Y_\class)
	\in
	\alpha_\dependency(\fairness[Y])$. 
	In particular, from the definition of 
	$\alpha_\dependency$ and $\fairness[Y]$ 
	(cf. Equation~\ref{eq:fairness}),
	we have that 
	$\unused_\sensitive(\maximal{\model}^Y_\class)$
	 for each 
	$\class \in \classes$.
	From the definition of 
	$\unused_\sensitive$ (cf. 
	Equation~\ref{eq:unused}),
	for each pair of \emph{non-empty} 
	$\maximal{\model}^Y_{\class_1}$ and 
	$\maximal{\model}^Y_{\class_2}$ for different 
	$\class_1, \class_2 \in \classes$ (the  
	case in which one or both are 
	empty is trivial), it must 
	necessarily be the 
	value of the non-sensitive input nodes in 
	$\overline{\sensitive}$ that causes the different 
	outcome $\class_1$ or $\class_2$. 
	We can thus conclude that $\forall A, B \in 
	\collecting{\model}^Y_\dependency\colon 
	(A_\omega 
	\neq B_\omega 
	\Rightarrow \restrict{{A_0}}{\overline{\sensitive}} 
	\cap \restrict{{B_0}}{\overline{\sensitive}} = 
	\emptyset)$. 
\end{proof}

\section{Na\"ive Causal-Fairness
Analysis}\label{sec:naive}

In this section, we present a first 
static analysis for 
causal fairness that computes a \emph{sound} 
over-approximation 
$\Lambda\abstraction_\dependency$ of the 
dependency semantics
$\Lambda_\dependency$, i.e., 
$\Lambda_\dependency \bulleq 
\Lambda\abstraction_\dependency$. This analysis corresponds to
the na\"ive approach we discussed in Section~\ref{sec:Overview}.
While it is 
too na\"ive to be practical, it is still useful
for building upon later in the paper.

\begin{algorithm}[t]
	\begin{algorithmic}[1]
		\Function{backward}{$\model$,
			$\abstractdomain$,
			$\node$}
		\State $\text{a} \gets
		\classification{\abstractdomain}{\node}(\textsc{new}_{\abstractdomain})$\label{naive:outcome}
		\For{$i \gets \layers-1 \text{ \textbf{down to} }
			0$}
		\For{$j \gets \size{\layer_i} \text{ \textbf{down
					to} }
			0$}
		\State $\text{a} \gets
		\bwdassign{\abstractdomain}{\node_{i,j}}(\bwdactivation[]{\abstractdomain}{\node_{i,j}}\text{a})$\label{naive:relu}
		\EndFor
		\EndFor
		\State\Return $\text{a}$
		\EndFunction
		
		\Function{check}{$\text{O}$}\label{naive:checkA}
		\State $\text{B} \gets
		\emptyset$\Comment{{\footnotesize
		 B: 
				biased}}
		\ForAll{$\text{o}_1, \text{a}_1 \in
			\text{O}$}
		\ForAll{$\text{o}_2 \neq \text{o}_1,
			\text{a}_2 \in \text{O}$}
		\If{$\text{a}_1
			\sqcap_{\abstractdomain_2}
			\text{a}_2 \neq
			\bot_{\abstractdomain_2}$}
		\State $\text{B} \gets \text{B} \cup
		\set{\text{a}_1
			\sqcap_{\abstractdomain_2}
			\text{a}_2}$
		\EndIf
		\EndFor
		\EndFor
		\State\Return $\text{B}$
		\EndFunction\label{naive:checkZ}
		
		\Function{analyze}{$\model$, 
			$\sensitive$, 
			$Y$, 
			$\abstractdomain$}\label{naive:input} %
		\State $\text{O} \gets
		\dot{\emptyset}$
		\For{$j \gets 0 \text{ 
		\textbf{ up to} }
			\size{\layer_\layers}$}\label{naive:parallelizable}
		\Comment{{\footnotesize perfectly parallelizable}}
		\State $\text{a} \gets \Call{backward}{\model,
			\abstractdomain,
			\node_{\layers,j}}$\label{naive:backward}
		\State $\text{O} \gets \text{O} \cup
		\set{\node_{\layers,j}
			\mapsto
			\restrict{(\assume{\abstractdomain}{Y}\text{a})}{\overline{\sensitive}}}$\label{naive:initial}
		\EndFor
		\State $\text{B} \gets
		\Call{check}{\text{O}}$\label{naive:check}
		\State\Return $\text{B} =
		\emptyset,
		\text{B}$\Comment{{\footnotesize fair:
				$\text{B} =
				\emptyset$, maybe biased: $\text{B} \neq 
				\emptyset$}}
		\EndFunction
	\end{algorithmic}
	\caption{: A Na\"ive Backward
		Analysis}\label{alg:naive}
\end{algorithm}

For simplicity, we consider $\relu$
activation functions. (We discuss extensions to
other activation functions in
Section~\ref{sec:analysis}.) The na\"ive static
analysis
is described in Algorithm~\ref{alg:naive}. It takes as
input (cf.
Line~\ref{naive:input}) a neural-network model $\model$, a set of
sensitive input nodes $\sensitive$ of $\model$, a
(representation of a) set of initial states of interest
$Y$, and an abstract domain $\abstractdomain$
to be used for the analysis. The analysis proceeds
backwards for each outcome (i.e., each target class
$\node_{\layers,j}$) of $\model$ (cf.
Line~\ref{naive:backward}) in order to determine an
over-approximation of the initial states that satisfy
$Y$ and lead to $\node_{\layers,j}$ (cf.
Line~\ref{naive:initial}).

More specifically, the transfer
function $\classification{\abstractdomain}{\node}$
(cf. Line~\ref{naive:outcome}) modifies a given
abstract-domain element to assume the given
outcome $\node$, that is, to assume that $\max
\nodes_\layers = \node$. The transfer
functions
$\bwdactivation{\abstractdomain}{\node_{i,j}}$ and
$\bwdassign{\abstractdomain}{\node_{i,j}}$
(cf.
Line~\ref{naive:relu}) respectively
consider a \textsc{ReLU} operation and
replace $\node_{i,j}$ with the corresponding linear
combination of nodes in the preceding layer (see
Section~\ref{sec:neuralnets}).

\looseness=-1
Finally, the analysis checks whether the computed
over-approximations satisfy causal fairness with respect 
to $\sensitive$ (cf. 
Line~\ref{naive:check}). In particular, it checks 
whether they induce a partition of 
$\restrict{Y}{\overline{\sensitive}}$ as observed 
for Lemma~\ref{lem:dependency} (cf. 
Lines~\ref{naive:checkA}-\ref{naive:checkZ}).
If so, we have proved that $\model$ satisfies 
causal fairness. If not, the analysis returns a set $B$ 
of abstract-domain elements over-approximating 
the input regions in which bias might occur.

\begin{theorem}
If $\Call{analyze}{\model, \sensitive, Y, 
\abstractdomain}$ of Algorithm~\ref{alg:naive} returns 
$\textsc{true}, \emptyset$ 
then $\model$ satisfies $\fairness[Y]$.
\end{theorem}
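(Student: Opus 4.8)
The plan is to establish soundness by showing that the abstract computation performed by the algorithm over-approximates the concrete dependency semantics, and then invoke Lemma~\ref{lem:dependency} to conclude fairness. The key conceptual claim is that the algorithm is a faithful abstract-interpretation realization of the characterization in Lemma~\ref{lem:dependency}: it computes, for each outcome class, an abstract element over-approximating exactly the set of initial states (restricted to $Y$) that lead to that class, and then checks the disjointness condition from that lemma on the non-sensitive projections.

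\medskip

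First I would argue the soundness of the \textsc{backward} subroutine. For a fixed output node $\node_{\layers,j}$, the routine starts from $\classification{\abstractdomain}{\node_{\layers,j}}(\textsc{new}_{\abstractdomain})$, which soundly over-approximates the set of final states whose outcome is the class $\class_j$ associated with $\node_{\layers,j}$, i.e. the set $\Upsilon_{\class_j}$ restricted to final states. Then, proceeding layer by layer and node by node in reverse, each application of $\bwdassign{\abstractdomain}{\node_{i,j}}$ and $\bwdactivation{\abstractdomain}{\node_{i,j}}$ is assumed to be a sound backward transfer function for the corresponding linear assignment and $\relu$ operation. By composing these sound transfer functions over the acyclic transition relation $\transitions$, the returned element $\text{a}$ over-approximates all initial states from which execution reaches the outcome $\class_j$ --- that is, it over-approximates $(\maximal{\model}_{\class_j})_0$, the initial projection of the dependency pairs for class $\class_j$ (cf. Equation~\ref{eq:dependency}). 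The subsequent $\assume{\abstractdomain}{Y}$ and the restriction to $\overline{\sensitive}$ in \textsc{analyze} (Line~\ref{naive:initial}) then give a sound over-approximation of $\restrict{(\maximal{\model}^Y_{\class_j})_0}{\overline{\sensitive}}$.

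\medskip

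Next I would connect the \textsc{check} routine to the condition of Lemma~\ref{lem:dependency}. Writing $A_j$ for the abstract element stored in $\text{O}$ for class $\class_j$, the routine computes, for every pair of distinct outcomes, the abstract meet $\text{a}_1 \sqcap_{\abstractdomain_2} \text{a}_2$ and records it in $B$ whenever it is not $\bot_{\abstractdomain_2}$. The crucial soundness step is that the abstract meet over-approximates the concrete intersection: if the true sets $\restrict{(\maximal{\model}^Y_{\class_1})_0}{\overline{\sensitive}}$ and $\restrict{(\maximal{\model}^Y_{\class_2})_0}{\overline{\sensitive}}$ actually intersected, then their abstract over-approximations would intersect, and hence $\text{a}_1 \sqcap_{\abstractdomain_2} \text{a}_2 \neq \bot_{\abstractdomain_2}$, placing a nonempty element in $B$. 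Contrapositively, if the algorithm returns $B = \emptyset$, then $\text{a}_1 \sqcap_{\abstractdomain_2} \text{a}_2 = \bot_{\abstractdomain_2}$ for every pair, which forces $\restrict{(A_0)}{\overline{\sensitive}} \cap \restrict{(B_0)}{\overline{\sensitive}} = \emptyset$ for all distinct outcomes in $\collecting{\model}^Y_\dependency$. By Lemma~\ref{lem:dependency}, this is exactly equivalent to $\model \models \fairness[Y]$, completing the proof.

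\medskip

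The main obstacle I expect is making the soundness of the backward composition precise. Specifically, I need the fact that $\classification{\abstractdomain}{\node}$, $\bwdassign{\abstractdomain}{\node_{i,j}}$, and $\bwdactivation{\abstractdomain}{\node_{i,j}}$ are each sound backward abstract transformers, and that $\sqcap_{\abstractdomain_2}$ soundly over-approximates concrete intersection --- these are the standard abstract-interpretation soundness assumptions on the chosen domain $\abstractdomain$, and the argument is only as rigorous as those assumptions are spelled out. Because the theorem statement is purely a soundness direction (it asserts only that $B=\emptyset$ \emph{implies} fairness, not the converse), I do not need exactness of the transformers, only over-approximation; this keeps the proof to a clean composition-of-sound-transfer-functions argument culminating in an appeal to Lemma~\ref{lem:dependency}.
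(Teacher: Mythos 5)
Your proposal is correct and takes essentially the same route as the paper's own proof sketch: both argue that the backward pass computes a sound over-approximation of the dependency semantics $\collecting{\model}^Y_\dependency$, and then use the disjointness characterization of Lemma~\ref{lem:dependency} to transfer emptiness of the abstract intersections (i.e., $\text{B} = \emptyset$) to the concrete sets, concluding $\model \models \fairness[Y]$. Your write-up merely spells out the transfer-function and meet-soundness assumptions on $\abstractdomain$ that the paper's ``by transitivity'' step leaves implicit.
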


	\begin{proof}[Proof (Sketch)]
		$\Call{analyze}{\model, \sensitive, Y, 
			\abstractdomain}$ in 
			Algorithm~\ref{alg:naive} computes an 
			\emph{over-approximation} $a$ of the 
			regions of 
			the input space that yield each target class 
			$\node_{\layers,j}$ (cf. 
			Line~\ref{naive:backward}). 
			Thus, it actually computes an 
			over-approximation 
			${\collecting{\model}_\dependency^{Y\abstraction}}$
			 of the 
			dependency semantics
			$\collecting{\model}^Y_\dependency$, i.e., 
			$\collecting{\model}^Y_\dependency 
			\bulleq 
			\collecting{\model}^{Y\abstraction}_\dependency$.
			Thus, if 
			$\collecting{\model}^{Y\abstraction}_\dependency$
satisfies $\fairness[Y]$, i.e., 
$\forall A, B \in 
\collecting{\model}^{Y\abstraction}_\dependency\colon
(A_\omega 
\neq B_\omega 
\Rightarrow \restrict{{A_0}}{\overline{\sensitive}} 
\cap \restrict{{B_0}}{\overline{\sensitive}} = 
\emptyset)$
(according to Lemma~\ref{lem:dependency}, cf. 
Line~\ref{naive:check}), then by transitivity we can 
conclude that also 
$\collecting{\model}^{Y\abstraction}_\dependency$
necessarily satisfies $\fairness[Y]$.
	\end{proof}

In the analysis implementation, there is a tradeoff
between performance and precision, which is reflected in the choice of
abstract domain $\abstractdomain$ and its transfer functions.
Unfortunately, existing numerical abstract domains that are less
expressive than polyhedra~\cite{Cousot78}
would make for a rather fast but too imprecise analysis.
This is because they are not able to precisely handle 
constraints 
like $\max \nodes_\layers = \node$, which are introduced by 
$\classification{\abstractdomain}{\node}$ to partition 
with respect to outcome.

Furthermore, even polyhedra 
would not be precise enough in general. Indeed, each 
$\bwdactivation{\abstractdomain}{\node_{i,j}}$ would 
over-approximate what effectively is a conditional 
branch. 
Let $\size{\model} \defined \size{\layer_1} + \dots + 
\size{\layer_{\layers-1}}$ denote 
the number of hidden nodes (i.e., the number of 
\textsc{ReLU}s) in a model 
$\model$. 
On the other side of the spectrum, one could 
use a disjunctive completion~\cite{Cousot79} of 
polyhedra, thus keeping a separate polyhedron for each 
branch of a \textsc{ReLU}. This would yield a 
precise (in fact, exact) but extremely slow 
analysis: even with parallelization  
(cf. Lines~\ref{naive:parallelizable}), each of the 
\size{\layer_\layers} processes would have to effectively 
explore 
$2^{\size{\model}}$ 
paths!

In the rest of the paper, we improve on this na\"ive 
analysis and show how far we can go 
all the while remaining exact by using disjunctive 
polyhedra.

\section{Parallel Semantics}\label{sec:parallel}

We first have to take a step back and return
to reasoning at 
the concrete-semantics level. At the 
end of Section~\ref{sec:dependency}, we observed that the 
dependency semantics of a neural-network model 
$\model$ satisfying $\fairness[Y]$ effectively 
induces a partition of 
$\restrict{Y}{\overline{\sensitive}}$. We call this 
input partition 
\emph{fair}. 

More formally, given a set $Y$ of 
initial states of interest, we say that an input 
partition 
$\partitions$ of $Y$ is fair if all value choices 
$\choices$ for the sensitive input nodes $\sensitive$ of 
$\model$ are 
possible in all elements of the partitions: $\forall 
\partition \in \partitions, \choice \in \choices\colon 
\exists \s \in \partitions\colon \s(\sensitive) = \choice$.
For instance, $\partitions = \set{T_0, T'_0}$, with   
$T$ and $T'$ 
in Example~\ref{ex:unused} is a fair input partition 
of $Y 
= \set{ \s \mid \s(\node_{0,1}) = 0.5 \vee 
\s(\node_{0,1}) = 0.75}$. 

Given a fair input partition $\partitions$ of $Y$, the 
following 
result shows that we can verify whether a 
model $\model$ 
satisfies $\fairness[Y]$ for each element 
$\partition$ of $\partitions$, \emph{independently}.

\begin{lemma}\label{lem:parallelization}
	$\model \models \fairness[Y] \Leftrightarrow 
	\forall \partition \in \partitions\colon \forall A, B \in 
	\collecting{\model}^\partition_\dependency\colon 
	(A_\omega 
	\neq B_\omega 
	\Rightarrow \restrict{{A_0}}{\overline{\sensitive}} 
	\cap \restrict{{B_0}}{\overline{\sensitive}} = 
	\emptyset)$ 
\end{lemma}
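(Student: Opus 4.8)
The plan is to reduce the claim to a statement about the predicate $\unused_\sensitive$ and then exploit the defining property of a fair partition. First I would observe that Lemma~\ref{lem:dependency} holds for any set of initial states of interest, so instantiating it with each block $\partition \in \partitions$ in place of $Y$ rewrites the right-hand side of the claim as $\forall \partition \in \partitions\colon \model \models \fairness[\partition]$. Combined with the characterization of $\fairness[Y]$ (and of each $\fairness[\partition]$) through $\unused_\sensitive$ (cf. Equation~\ref{eq:fairness}), the whole statement reduces to the equivalence
\[
\unused_\sensitive(\maximal{\model}^Y) \;\Longleftrightarrow\; \forall \partition \in \partitions\colon \unused_\sensitive(\maximal{\model}^\partition),
\]
which I would prove directly from the definition of $\unused_\sensitive$ in Equation~\ref{eq:unused}.

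For the backward direction ($\Leftarrow$) I would take an arbitrary trace $\trace \in \maximal{\model}^Y$ and any choice $\choice \in \choices$ with $\trace_0(\sensitive) \neq \choice$. Since $\partitions$ is a partition of $Y$, the initial state $\trace_0$ lies in exactly one block $\partition$, so $\trace \in \maximal{\model}^\partition$; applying the local hypothesis $\unused_\sensitive(\maximal{\model}^\partition)$ yields a witness $\trace' \in \maximal{\model}^\partition \subseteq \maximal{\model}^Y$ with $\trace_0 =_{\overline{\sensitive}} \trace'_0$, $\trace'_0(\sensitive) = \choice$, and $\trace_\omega \equiv \trace'_\omega$, which is exactly what $\unused_\sensitive(\maximal{\model}^Y)$ demands. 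This direction consumes only that the blocks cover $Y$.

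The forward direction ($\Rightarrow$) is where the fairness of the partition is essential, and I expect it to be the main obstacle. Starting from a trace $\trace$ in a fixed block $\partition$ and a choice $\choice \neq \trace_0(\sensitive)$, the global hypothesis $\unused_\sensitive(\maximal{\model}^Y)$ still supplies a witness $\trace'$ with $\trace'_0 =_{\overline{\sensitive}} \trace_0$ and $\trace'_0(\sensitive) = \choice$, but a priori $\trace'$ need only belong to $\maximal{\model}^Y$, not to $\maximal{\model}^\partition$. The crux is therefore to argue that $\trace'_0$ falls in the \emph{same} block $\partition$ as $\trace_0$. This is precisely what the defining property of a fair input partition guarantees: each block contains all value choices in $\choices$ for the sensitive nodes and is carved out purely along the non-sensitive dimensions, hence is closed under replacing the sensitive value of one of its states by any $\choice$ while keeping the non-sensitive values fixed. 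Consequently $\trace'_0 =_{\overline{\sensitive}} \trace_0 \in \partition$ together with $\trace'_0(\sensitive) = \choice$ forces $\trace'_0 \in \partition$, so $\trace' \in \maximal{\model}^\partition$ and $\unused_\sensitive(\maximal{\model}^\partition)$ holds.

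Chaining the two directions with the per-block instantiation of Lemma~\ref{lem:dependency} then closes the proof. I would be careful to treat the degenerate cases where some $\maximal{\model}^\partition$ is empty (making the local condition vacuous) and, above all, to make the closure property of fair partitions fully explicit, since that is the single point at which the hypothesis on $\partitions$ is actually used and the only place where a too-weak reading of the fairness condition would break the equivalence.
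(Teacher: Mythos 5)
Your proposal is correct under the reading of ``fair partition'' that the paper actually intends --- blocks are carved only along the non-sensitive dimensions, hence closed under replacing the sensitive values of any of their states by any $\choice \in \choices$ --- and your closing remark is well taken: the literal condition written in Section~\ref{sec:parallel} (every block merely \emph{contains} every choice) is strictly weaker, is satisfied by ``diagonal'' pairings of sensitive and non-sensitive values, and for such partitions the per-block condition can hold while the model is biased, so the closure reading is indeed what makes the lemma true. Your route does differ from the paper's: the paper's one-line proof combines a single application of Lemma~\ref{lem:dependency} at $Y$ with fairness of $\partitions$, working directly with the disjointness conditions on pairs of states (restriction gives the global-to-per-block direction for free; closure lifts a common non-sensitive point of two outcome regions into a single block for the converse). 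You instead instantiate Lemma~\ref{lem:dependency} at every block, which introduces the intermediate statement $\forall \partition \in \partitions\colon \model \models \fairness[\partition]$, and then prove compatibility of $\unused_\sensitive$ with the partition directly from Equation~\ref{eq:unused} (covering in one direction, closure in the other). That is finer-grained than the paper's argument and isolates exactly the statement that licenses verifying each block independently, which is the point of Section~\ref{sec:parallel}.

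There is one inaccuracy worth correcting: Lemma~\ref{lem:dependency} does \emph{not} hold ``for any set of initial states of interest.'' Its $\Leftarrow$ direction fails when the set is not closed under replacement of sensitive values. For instance, take non-sensitive values $v_1 \neq v_2$ and choices $\choice_1 \neq \choice_2$, let $Y$ contain only the two states with coordinates $(v_1,\choice_1)$ and $(v_2,\choice_2)$, and let $\model$ classify by the non-sensitive value alone: every pair of distinct outcome regions restricted to $Y$ has disjoint non-sensitive projections, yet $\unused_\sensitive(\maximal{\model}^Y)$ fails because the required witness state with coordinates $(v_1,\choice_2)$ is absent from $Y$. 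Consequently, your per-block instantiation of Lemma~\ref{lem:dependency} is legitimate only \emph{because} the blocks of a fair partition are sensitive-closed; the fairness hypothesis is therefore consumed in two places in your argument --- once to validate the per-block lemma and once in your forward direction --- not at the ``single point'' you claim. Since fair blocks do satisfy closure, no error propagates to the conclusion, but the blanket claim and the localization statement should be amended.
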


\begin{proof}
	The proof follows trivially from 
	Lemma~\ref{lem:dependency} and the fact that 
	$\partitions$ is a fair partition.
\end{proof}

We use this new insight to further abstract the 
dependency semantics $\Lambda_\dependency$.
We have the following Galois connection
\begin{equation}\label{eq:parallel0}
\tuple{\powerset{\powerset{\states\times\states}}}{\bulleq}
\galois{\alpha_\partitions}{\gamma_\partitions}
\tuple{\powerset{\powerset{\states\times\states}}}{\bulleq_\partitions},
\end{equation}
where 
$\alpha_\partitions(S) \defined \set{
	R^\partition \mid R \in S \land \partition \in 
	\partitions}$. 
Here the order $\bulleq_\partitions$
is the
pointwise
ordering between sets of pairs of states restricted to 
first 
elements in the same $\partition \in \partitions$, i.e.,
$A \bulleq_\partitions B
\defined \bigwedge_{\partition \in \partitions}
\dot{A}^\partition
\subseteq \dot{B}^\partition$, where 
$\dot{S}^\partition$
denotes the only non-empty set of pairs in 
$S^\partition$.
We can now derive the 
\emph{parallel semantics} 
$\Pi^\partitions_\dependency \in
\powerset{\powerset{\states\times\states}}$: 

\begin{equation}\label{eq:parallel}
\begin{aligned}
\Pi^\partitions_\dependency &\defined
\alpha_\partitions(\Lambda_\dependency) = \set{
	\set{ \tuple{\trace_0}{\trace_\omega} 
		\mid \trace \in
		\Upsilon^\partition_\class }
	\mid \partition \in \partitions \land \class \in
	\classes} 
\end{aligned}
\end{equation}
In fact, we derive a hierarchy of semantics,  
as depicted in 
Figure~\ref{fig:hierarchy}. 
We write 
$\parallelizing{\model}^\partitions_\dependency$ to 
denote the 
parallel 
semantics of a particular neural-network model 
$\model$.
%
\begin{figure}[t]
\includegraphics[width=0.15\textwidth]{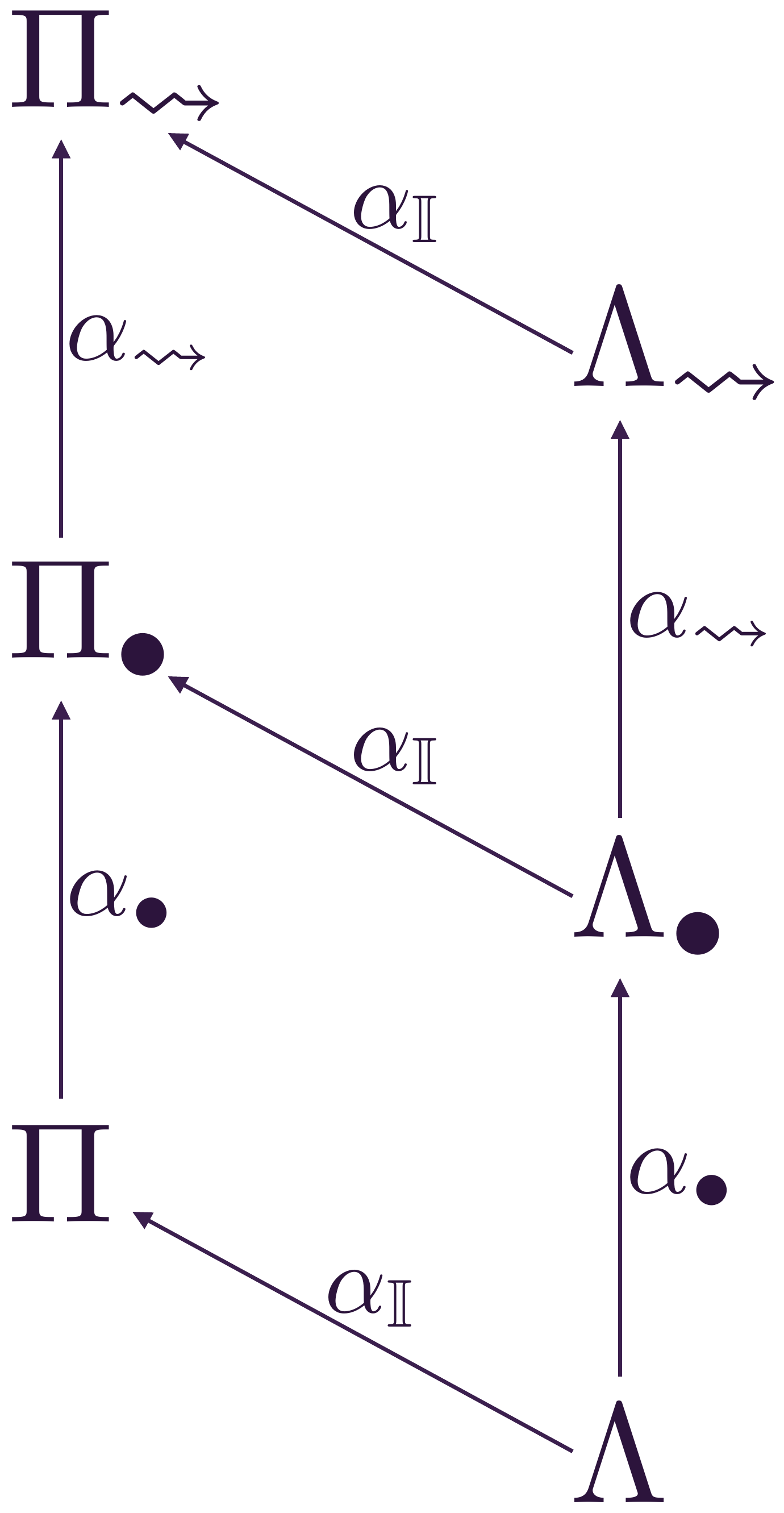}
	\caption{Hierarchy of
		semantics.}\label{fig:hierarchy}
\end{figure}
%
It remains to show
soundness and 
completeness for $\Pi^\partitions_\dependency$.

\begin{theorem}\label{thm:parallel}
	$\model \models \fairness[Y] \Leftrightarrow
	\parallelizing{\model}^\partitions_\dependency
	\bulleq_\partitions
	\alpha_\partitions(\alpha_\dependency(\alpha_\outcome(\fairness[Y])))$
\end{theorem}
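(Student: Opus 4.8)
The plan is to mirror the proof of Theorem~\ref{thm:dependency}, lifting the dependency-level characterization of fairness through the new Galois connection of Equation~\ref{eq:parallel0}. The key preliminary observation I would record is that, because $\partitions$ is a partition of $Y$, applying $\alpha_\partitions$ to the (unrestricted) dependency semantics already coincides with the parallel semantics on $Y$, that is $\parallelizing{\model}^\partitions_\dependency = \alpha_\partitions(\collecting{\model}^Y_\dependency)$. This follows by unfolding Equation~\ref{eq:parallel} together with the definition of $\alpha_\partitions$ and the fact that $\bigcup \partitions = Y$, so that restricting each relation to a $\partition \in \partitions$ automatically restricts its first elements to $Y$.

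For the forward direction ($\Rightarrow$), I would assume $\model \models \fairness[Y]$ and invoke Theorem~\ref{thm:dependency} to get $\collecting{\model}^Y_\dependency \bulleq \alpha_\dependency(\alpha_\outcome(\fairness[Y]))$. Applying the monotone lower adjoint $\alpha_\partitions$ of the Galois connection in Equation~\ref{eq:parallel0} to both sides preserves the order, now measured by $\bulleq_\partitions$, and by the identity above the left-hand side is exactly $\parallelizing{\model}^\partitions_\dependency$. This yields $\parallelizing{\model}^\partitions_\dependency \bulleq_\partitions \alpha_\partitions(\alpha_\dependency(\alpha_\outcome(\fairness[Y])))$ and is essentially routine monotonicity, requiring no new ideas.

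The backward direction ($\Leftarrow$), i.e.\ completeness, is where the fairness of $\partitions$ must enter, and I expect it to be the main obstacle. Rather than trying to invert $\alpha_\partitions$ directly, I would route the argument through Lemma~\ref{lem:parallelization}: I would unfold the definitions of $\bulleq_\partitions$ and $\alpha_\partitions$ to show that the abstract inequality $\parallelizing{\model}^\partitions_\dependency \bulleq_\partitions \alpha_\partitions(\alpha_\dependency(\alpha_\outcome(\fairness[Y])))$ is equivalent, partition by partition, to the non-interference condition $\forall \partition \in \partitions\colon \forall A, B \in \collecting{\model}^\partition_\dependency\colon (A_\omega \neq B_\omega \Rightarrow \restrict{{A_0}}{\overline{\sensitive}} \cap \restrict{{B_0}}{\overline{\sensitive}} = \emptyset)$, exactly as Lemma~\ref{lem:dependency} relates the dependency-level order to non-interference on $Y$. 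Lemma~\ref{lem:parallelization} then delivers $\model \models \fairness[Y]$, closing the equivalence.

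The delicate point to get right is that this per-partition reconstruction is lossless \emph{only} because $\partitions$ is a fair partition: every value choice $\choice \in \choices$ for the sensitive nodes $\sensitive$ is realized inside each $\partition$, so restricting the dependency semantics to a single element cannot spuriously make a used feature appear unused. I would therefore be careful to cite the fairness of $\partitions$ precisely where Lemma~\ref{lem:parallelization} is applied, and to verify that the pointwise order $\bulleq_\partitions$ correctly isolates each $\partition$ so that empty or degenerate $\collecting{\model}^\partition_\dependency$ contribute trivially, just as the empty-class cases are dispatched in the proof of Lemma~\ref{lem:dependency}.
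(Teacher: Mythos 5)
Your forward direction is exactly the paper's proof: assume $\model \models \fairness[Y]$, invoke Theorem~\ref{thm:dependency}, apply the monotone abstraction $\alpha_\partitions$ from the Galois connection of Equation~\ref{eq:parallel0}, and identify the left-hand side with $\parallelizing{\model}^\partitions_\dependency$ via Equation~\ref{eq:parallel}. Your preliminary observation that $\parallelizing{\model}^\partitions_\dependency = \alpha_\partitions(\collecting{\model}^Y_\dependency)$ because $\bigcup\partitions = Y$ is correct and in fact makes explicit a step the paper glosses over silently (Equation~\ref{eq:parallel} applies $\alpha_\partitions$ to the unrestricted $\Lambda_\dependency$, while the proof applies it to the $Y$-restricted semantics).

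Where you genuinely diverge is the backward direction. The paper's proof, like its proofs of Theorem~\ref{thm:collecting} and Theorem~\ref{thm:dependency}, only writes out the $\Rightarrow$ half and leaves the converse implicit. Your route for $\Leftarrow$ --- unfolding $\bulleq_\partitions$ and $\alpha_\partitions$ to reduce the abstract inequality to the per-partition non-interference condition and then invoking Lemma~\ref{lem:parallelization} --- is sound, and you correctly identify the one place where the hypothesis that $\partitions$ is \emph{fair} must be cited: without it, restricting to a single $\partition$ could make a used sensitive feature spuriously appear unused, so the per-partition reconstruction would not be lossless. Note that the reduction step you describe is itself of the same difficulty as Lemma~\ref{lem:dependency} (it requires unpacking $\unused_\sensitive$ and the definition of $\fairness[Y]$, not mere symbol-pushing), so it deserves the same level of care as that lemma's proof rather than being labeled a pure unfolding; but the structure is right, and on this point your write-up is more complete than the paper's.
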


\begin{proof}
	Let $\model \models \fairness[Y]$. From 
	Theorem~\ref{thm:dependency}, we have that
	$\collecting{\model}^Y_\dependency
	\bulleq
	\alpha_\dependency(\alpha_\outcome(\fairness[Y]))$.
	Thus, from the Galois connections 
	in Equation~\ref{eq:parallel0}, we have 
	$\alpha_\partitions(\collecting{\model}^Y_\dependency)
	\bulleq
	\alpha_\partitions(\alpha_\dependency(\alpha_\outcome(\fairness[Y])))$.
	From the definition of 
	$\parallelizing{\model}^\partitions_\dependency$
	(cf. Equation~\ref{eq:parallel}), we can then 
	conclude that 
	$\parallelizing{\model}^\partitions_\dependency
	\bulleq_\partitions
	\alpha_\partitions(\alpha_\dependency(\alpha_\outcome(\fairness[Y])))$.
\end{proof}

\begin{corollary}\label{cor:parallel}
	%
	$\model \models \fairness[Y] \Leftrightarrow
	\parallelizing{\model}^\partitions_\dependency
	\subseteq
	\alpha_\partitions(\alpha_\dependency(\fairness[Y]))$
\end{corollary}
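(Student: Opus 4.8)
The plan is to replay, one level further down the abstraction hierarchy, exactly the derivation that takes Theorem~\ref{thm:dependency} to Corollary~\ref{cor:dependency}. First I would invoke Theorem~\ref{thm:parallel}, which already supplies the equivalence $\model \models \fairness[Y] \Leftrightarrow \parallelizing{\model}^\partitions_\dependency \bulleq_\partitions \alpha_\partitions(\alpha_\dependency(\alpha_\outcome(\fairness[Y])))$. It then remains only to rewrite this pointwise statement as the flat inclusion $\parallelizing{\model}^\partitions_\dependency \subseteq \alpha_\partitions(\alpha_\dependency(\fairness[Y]))$, i.e., to simultaneously collapse the order $\bulleq_\partitions$ into $\subseteq$ and discard the outermost $\alpha_\outcome$.

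Next I would unfold the definition of $\bulleq_\partitions$ from Equation~\ref{eq:parallel0}, under which $A \bulleq_\partitions B$ is the conjunction $\bigwedge_{\partition \in \partitions} \dot{A}^\partition \subseteq \dot{B}^\partition$ over the unique non-empty component associated with each $\partition$. Because the parallel semantics in Equation~\ref{eq:parallel} is indexed by pairs $(\partition, \class)$ while $\dot{S}^\partition$ singles out the one non-empty component per partition, this per-partition conjunction of subset relations carries precisely the same information as a single set inclusion between the two sets of sets of pairs; this is the identical mechanism by which $\bulleq$ turns into $\subseteq$ in Corollary~\ref{cor:dependency}. The accompanying removal of $\alpha_\outcome$ is justified by Lemma~\ref{lem:outcome}, now invoked within each partition: fairness of a partition is equivalent to its per-class fairness, so the explicit outcome partitioning on the right-hand side becomes redundant once the relation is stated directly.

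The final step is to glue these observations together via Lemma~\ref{lem:parallelization}, whose guarantee that $\fairness[Y]$ can be checked on each $\partition \in \partitions$ independently is what licenses replaying the per-partition argument uniformly and concluding the stated equivalence. I expect the argument to be essentially bookkeeping, as the proof of Corollary~\ref{cor:dependency} already is. The only point demanding care --- and hence the main, if minor, obstacle --- is tracking which quantifiers ($\partition$ and $\class$) get absorbed by the flat $\subseteq$ versus the pointwise $\bulleq_\partitions$, so that dropping $\alpha_\outcome$ and switching orders in a single move does not silently weaken the biconditional.
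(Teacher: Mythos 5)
Your proposal is correct and follows essentially the same route as the paper: the paper's own proof likewise appeals to the definition of $\bulleq_\partitions$ (Equations~\ref{eq:outcome0}, \ref{eq:dependency0}, and~\ref{eq:parallel0}) together with Lemma~\ref{lem:outcome} and Lemma~\ref{lem:parallelization}, downstream of Theorem~\ref{thm:parallel}. You merely spell out more explicitly the bookkeeping (collapsing the pointwise order into $\subseteq$ and discarding $\alpha_\outcome$) that the paper dismisses as trivial.
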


\begin{proof}
	The proofs follows trivially from the definition of 
	$\bulleq_\partitions$ (cf. 
	Equation~\ref{eq:outcome0} 
	and~\ref{eq:dependency0} and~\ref{eq:parallel0}) 
	and 
	Lemma~\ref{lem:outcome} 
	and~\ref{lem:parallelization}.
\end{proof}

Finally, from Lemma~\ref{lem:parallelization}, we have 
that we 
can equivalently verify whether 
$\parallelizing{\model}^\partitions_\dependency
\subseteq
\alpha_\partitions(\alpha_\dependency(\fairness[Y]))$ 
by checking if the parallel semantics
$\parallelizing{\model}^\partitions_\dependency$ 
induces a 
partition of each
$\restrict{\partition}{\overline{\sensitive}}$.

\begin{lemma}\label{lem:parallel}
	$\model \models \fairness[Y] \Leftrightarrow 
	\forall \partition \in \partitions\colon\forall A, B \in 
	\parallelizing{\model}^\partitions_\dependency\colon 
	(A^\partition_\omega 
	\neq B^\partition_\omega 
	\Rightarrow 
	\restrict{{A^\partition_0}}{\overline{\sensitive}} 
	\cap 
	\restrict{{B^\partition_0}}{\overline{\sensitive}} = 
	\emptyset)$ 
\end{lemma}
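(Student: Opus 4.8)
The plan is to lift the proof of Lemma~\ref{lem:dependency} to the parallel semantics, using Corollary~\ref{cor:parallel} and Lemma~\ref{lem:parallelization} in place of Corollary~\ref{cor:dependency} and Lemma~\ref{lem:dependency}. First I would assume $\model \models \fairness[Y]$ and invoke Corollary~\ref{cor:parallel} to obtain $\parallelizing{\model}^\partitions_\dependency \subseteq \alpha_\partitions(\alpha_\dependency(\fairness[Y]))$. Unfolding the definition of $\parallelizing{\model}^\partitions_\dependency$ (cf. Equation~\ref{eq:parallel}) together with $\alpha_\partitions$, this says that for every $\partition \in \partitions$ and every $\class \in \classes$ the $\partition$-restricted dependency set of each outcome belongs to $\alpha_\partitions(\alpha_\dependency(\fairness[Y]))$; hence, by the definitions of $\alpha_\dependency$, $\alpha_\partitions$, and $\fairness[Y]$ (cf. Equation~\ref{eq:fairness}), we recover $\unused_\sensitive(\maximal{\model}^\partition_\class)$ for each block and each class, exactly as in the proof of Lemma~\ref{lem:dependency}.

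The second step is a notational reconciliation. Because $\partitions$ is a partition of $Y$, its blocks are pairwise disjoint, so for any $A \in \parallelizing{\model}^\partitions_\dependency$ the component $A^\partition$ is either $A$ itself (when $A$ is the $\partition$-block of some outcome) or the empty set (when $A$ belongs to a different, disjoint block). In the empty case we have $\restrict{A^\partition_0}{\overline{\sensitive}} = \emptyset$, so the required disjointness $\restrict{A^\partition_0}{\overline{\sensitive}} \cap \restrict{B^\partition_0}{\overline{\sensitive}} = \emptyset$ holds trivially. Consequently, for each fixed $\partition$ the family $\set{A^\partition \mid A \in \parallelizing{\model}^\partitions_\dependency}$ coincides, modulo these inert empty terms, with $\collecting{\model}^\partition_\dependency$, and the condition in the statement collapses to precisely the per-block condition characterized by Lemma~\ref{lem:parallelization}.

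From here the equivalence closes by directly appealing to Lemma~\ref{lem:parallelization}: disjointness of the non-sensitive projections across distinct outcomes inside each block $\partition$ is exactly what that lemma identifies with $\model \models \fairness[Y]$ for a fair $\partitions$, and the reverse implication follows by reading the same chain of definitional equalities backward. I expect the only delicate point to be the bookkeeping in the second step, namely making explicit that re-restricting an already block-indexed element $A$ to a block $\partition$ returns the empty set whenever the two blocks differ, and checking that every such empty component satisfies the disjointness vacuously (including the subcase where only one of $A^\partition_\omega, B^\partition_\omega$ is empty, in which case the antecedent may hold but the empty projection still forces an empty intersection). Once this identification between $\set{A^\partition}$ and $\collecting{\model}^\partition_\dependency$ is in place, every substantive inclusion is inherited verbatim from Lemma~\ref{lem:parallelization}, so no further $\unused_\sensitive$-level reasoning is needed.
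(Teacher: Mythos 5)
Your proof is correct and takes essentially the same route as the paper, whose entire proof is the remark that the lemma ``follows trivially from Lemma~\ref{lem:parallelization}''; your reduction of the stated condition to the per-block condition of that lemma is precisely what makes this step trivial. Your second and third paragraphs simply spell out the bookkeeping the paper leaves implicit (that $A^\partition$ is either $A$ or $\emptyset$ by disjointness of the blocks, that empty components satisfy the disjointness vacuously, and that the nonempty components are exactly the elements of $\collecting{\model}^\partition_\dependency$), while your first paragraph's unfolding via Corollary~\ref{cor:parallel} is sound but redundant given the direct appeal to Lemma~\ref{lem:parallelization}.
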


\begin{proof}
	The proof follows trivially from 
	Lemma~\ref{lem:parallelization}.
\end{proof}

\section{Parallel Causal-Fairness 
Analysis}\label{sec:analysis}

In this section, we build on the parallel semantics to design our novel
\emph{perfectly parallel} static analysis for 
causal 
fairness, which automatically finds a fair partition 
$\partitions$ and computes a sound 
over-approximation 
$\Pi^{\partitions\abstraction}_\dependency$
 of ${\Pi^\partitions}_\dependency$, i.e., 
 ${\Pi^\partitions}_\dependency \bulleq_\partitions 
 \Pi^{\partitions\abstraction}_\dependency$. 

\begin{algorithm}[t]
	\begin{algorithmic}[1]
		\Function{forward}{$\model$, 
			$\abstractdomain$, 
			$\partition$} 
		\State $\text{a}, \text{p} \gets
		\assume{\abstractdomain}{\partition}(
		\init{\abstractdomain}),
		\epsilon$\label{patterns:empty}
		\For{$i \gets 1 \text{ \textbf{up to} } \layers$}
		\For{$j \gets 0 \text{ \textbf{up to} }
			\size{\layer_i}$}
		\State $\text{a}, \text{p} \gets
		\fwdactivation[\text{p}]{\abstractdomain}{\node_{i,j}}(\fwdassign{\abstractdomain}{\node_{i,j}}\text{a})$\label{patterns:pattern}
		\EndFor
		\EndFor
		\State\Return $\text{a}, \text{p}$
		\EndFunction
		
		\Function{backward}{$\model$,
			$\abstractdomain$,
			$\mathrm{O}$, $\text{p}$}
		\State $\text{a} \gets
		\classification{\abstractdomain}{\mathrm{O}}(\textsc{new}_{\abstractdomain})$
		\For{$i \gets \layers-1 \text{ \textbf{down to} }
			0$}
		\For{$j \gets \size{\layer_i} \text{ \textbf{down
					to} }
			0$}
		\State $\text{a} \gets
		\bwdassign{\abstractdomain}{\node_{i,j}}(\bwdactivation[\text{p}]{\abstractdomain}{\node_{i,j}}\text{a})$\label{patterns:relu}
		\EndFor
		\EndFor
		\State\Return $\text{a}$
		\EndFunction
		
		\Function{analyze}{$\model$, 
			$\sensitive$, 
			$Y$, 
			$\abstractdomain_1$, 
			$\abstractdomain_2$, 
			$\lowerbound$, 
			$\upperbound$}\label{patterns:input} %
		\State $\text{F}, \text{E}, \text{C} \gets
		\dot{\emptyset}, \dot{\emptyset},
		\emptyset$\label{patterns:sets}\Comment{{\footnotesize
		 F: feasible,
				E: excluded,
				C: completed}}
		\State $\partitions \gets 
		\set{Y}$\label{patterns:start}
		\While{$\partitions \neq 
			\emptyset$}\label{patterns:parallelizableA}
		\Comment{{\footnotesize perfectly parallelizable}}
		\State $\partition \gets 
		\partitions.\Call{get()}{}$\label{patterns:get}
		\State $\text{a}, \text{p} \gets
		\Call{forward}{\model, \abstractdomain_1,
			\partition}$\label{patterns:forward}
		\If{\Call{uniquely-classified}{a}}\label{patterns:fair}\Comment{{\footnotesize
				$\partition$ is already
				fair}}
		\State $\text{C} \gets \text{C} \cup
		\set{\partition}$\label{patterns:completed}
		\ElsIf{$\size{\model} - \size{\text{p}} \leq
			\upperbound$}\label{patterns:upperbound}\Comment{{\footnotesize
				$\partition$ is feasible}}
		\State $\text{F} \gets \text{F} \uplus
		\set{\text{p} \mapsto 
		\partition}$\label{patterns:feasible}
		\ElsIf{$\size{\partition} \leq
			\lowerbound$}\label{patterns:lowerbound}\Comment{{\footnotesize
				$\partition$ is
				excluded}}
		\State $\text{E} \gets \text{E} \uplus
		\set{\text{p} \mapsto 
		\partition}$\label{patterns:excluded}
		\Else\label{patterns:continue}\Comment{{\footnotesize
		 $\partition$ must
				be partitioned further}}
		\State $\partitions \gets \partitions \cup
		\textsc{partition}_{\overline{\sensitive}}(\partition)$\label{patterns:partitioned}
		\EndIf
		\EndWhile
		
		\State $\text{B} \gets
		\emptyset$\Comment{{\footnotesize B: biased}}
		\ForAll{$\text{p}, \partitions \in
			\text{F}$}\label{patterns:parallelizableB}
		\Comment{{\footnotesize perfectly parallelizable}}
		\State $\text{O} \gets \dot{\emptyset}$
		\For{$j \gets 0 \text{ \textbf{up to} }
			\size{\layer_\layers}$}
		\State $\text{a} \gets \Call{backward}{\model,
			\abstractdomain_2, \node_{\layers,j},
			\text{p}}$\label{patterns:backward}
		\State $\text{O} \gets \text{O} \cup
		\set{\node_{\layers,j}
			\mapsto \text{a}}$
		\EndFor
		\ForAll{$\partition \in
			\partitions$}\label{patterns:parallelizableC}
		\State $\text{O'} \gets \dot{\emptyset}$
		%
		%
		\ForAll{$\text{o}, \text{a} \in \text{O}$}
		\State $\text{O'} \gets \text{O'} \cup
		\set{\text{o} \mapsto
			\restrict{(\assume{\abstractdomain_2}{\partition}\text{a})}{\overline{\sensitive}}}$\label{patterns:initial}
		\EndFor
		\State $\text{B} \gets \text{B} \cup
		\Call{check}{\text{O'}}$\label{patterns:check}
		\State $\text{C} \gets \text{C} \cup
		\set{\partition}$\label{patterns:end}
		\EndFor
		\EndFor
		\State\Return $\text{C}, \text{B} = \emptyset,
		\text{B},
		\text{E}$\label{patterns:return}\Comment{{\footnotesize
		 fair: $\text{B} =
				\emptyset$, maybe biased: $\text{B} \neq
				\emptyset$}}
		\EndFunction
	\end{algorithmic}
	\caption{: Our Analysis Based on Activation Patterns}\label{alg:patterns}
\end{algorithm}

\paragraph{\textbf{\relu{} Activation Functions.}}
We again only consider 
\textsc{ReLU} activation functions for now and postpone 
the discussion of other activation functions to the 
end of the section. 
The analysis is 
described in Algorithm~\ref{alg:patterns}. It combines
a forward pre-analysis 
(Lines~\ref{patterns:start}-\ref{patterns:excluded}) 
with a 
backward analysis 
(Lines~\ref{patterns:parallelizableB}-\ref{patterns:end}).
The forward pre-analysis uses an abstract domain 
$\abstractdomain_1$ and builds 
partition $\partitions$, while the backward analysis 
uses an abstract domain $\abstractdomain_2$ and 
performs the actual causal-fairness analysis of a 
neural-network model $\model$ with respect to its
sensitive input nodes $\sensitive$ and a 
(representation of a) set of initial states 
$Y$ (cf. Line~\ref{patterns:input}).

More specifically, the forward pre-analysis bounds 
the number of paths that the backward analysis has 
to explore. Indeed, not all of the 
$2^{\size{\model}}$ paths of a
model $\model$ are necessarily viable starting from 
its input space.

In the rest of this section, we represent each path by an 
\emph{activation pattern}, which determines 
the 
activation status of every $\relu$ operation in 
$\model$. More precisely, an activation pattern is a 
sequence of flags. Each flag $\pattern_{i,j}$ 
represents the 
activation status of the $\relu$ 
operation used to compute the value of hidden 
node $\node_{i,j}$. If $\pattern_{i,j}$ is 
$\node_{i,j}$, the 
$\relu$ is always active, otherwise the $\relu$ is 
always inactive and $\pattern_{i,j}$ is 
$\overline{\node_{i,j}}$. 

An \emph{abstract 
activation pattern} gives the activation status of 
only a subset of the $\relu$s of 
$\model$, and thus, represents a set of activation 
patterns. $\relu$s whose corresponding flag does 
not appear in an abstract activation pattern have an 
unknown (i.e., not fixed) activation status. 
Typically, \emph{only a relatively
small number of abstract activation patterns is 
sufficient for covering the entire input space of a
neural-network model}. The 
design of our analysis builds on this key 
observation.

We set an analysis \emph{budget} by 
providing an upper bound $\upperbound$ (cf. 
Line~\ref{patterns:input}) on the number 
of tolerated $\relu$s with an unknown activation 
status for each element $\partition$ of 
$\partitions$, i.e., on the number of paths 
that are to be explored by the backward analysis in each 
$\partition$. The forward pre-analysis starts with 
the trivial partition $\partitions = \set{Y}$ (cf. 
Line~\ref{patterns:start}). It proceeds forward for 
each element $\partition$ in $\partitions$ (cf. 
Lines~\ref{patterns:get}-\ref{patterns:forward}). The 
transfer function  
$\fwdactivation[\text{p}]{\abstractdomain}{\node_{i,j}}$
considers a $\relu$ operation and 
additionally builds an abstract activation pattern 
$p$ for $\partition$ (cf. 
Line~\ref{patterns:pattern}) starting from the empty 
pattern $\epsilon$ (cf. Line~\ref{patterns:empty}).

If $\partition$ leads to a unique outcome (cf. 
Line~\ref{patterns:fair}), then causal fairness is 
already proved for $\partition$, and 
there is no need for a backward analysis;
$\partition$ is added to the set of 
\emph{completed} 
partitions (cf. Line~\ref{patterns:completed}). 
Instead, if abstract activation pattern $p$ 
fixes the activation status of enough $\relu$s (cf. 
Line~\ref{patterns:upperbound}), we 
say that the backward analysis for $\partition$ is 
\emph{feasible}. In this case, the pair of $p$ and 
$\partition$ is inserted into a map $F$ from 
abstract 
activation patterns to feasible partitions (cf. 
Line~\ref{patterns:feasible}). The insertion takes 
care of merging abstract activation 
patterns that are subsumed by other (more) abstract
patterns. In other words, it groups partitions 
whose abstract activation 
patterns fix 
more $\relu$s with partitions whose patterns fix fewer 
$\relu$s, and therefore, represent a superset of 
(concrete) patterns.

Otherwise, $\partition$ needs to be partitioned 
further, with respect to 
$\overline{\sensitive}$ (cf. 
Line~\ref{patterns:continue}). Partitioning may 
continue until the size of $\partition$ is smaller than the 
given lower bound $\lowerbound$ (cf. 
Lines~\ref{patterns:input} and 
\ref{patterns:lowerbound}). At this point, 
$\partition$ is set aside and excluded from the 
analysis until more resources (a larger upper bound
$\upperbound$ or a smaller lower bound 
$\lowerbound$) 
become available (cf. Line~\ref{patterns:excluded}).

Note that the forward pre-analysis lends itself 
to choosing a relatively cheap abstract domain 
$\abstractdomain_1$ since it does not need to 
precisely handle polyhedral constraints (like $\max 
\nodes_\layers = \node$, needed to partition with 
respect to outcome, cf. Section~\ref{sec:naive}).

The analysis then proceeds backwards, 
independently 
for each abstract activation path $p$ and 
associated group of partitions $\partitions$ (cf. 
Lines~\ref{patterns:parallelizableB} and 
\ref{patterns:backward}). The transfer function 
$\bwdactivation[\text{p}]{\abstractdomain}{\node_{i,j}}$
uses $p$ to choose
which path(s) to explore at each $\relu$ operation, 
i.e., only the active (resp. inactive) path if 
$\node_{i,j}$ (resp. $\overline{\node_{i,j}}$) appears 
in $p$, or both if 
the activation status of the $\relu$ corresponding 
to hidden node $\node_{i,j}$ is unknown. The 
(as we have seen, necessarily) expensive backward 
analysis only 
needs to run for 
each abstract activation pattern in the feasible map 
$F$. This is also why it is advantageous to merge 
subsumed abstract activation paths as described 
above.

Finally, the analysis checks causal fairness of 
each element $\partition$ associated to $p$ (cf. 
Line~\ref{patterns:check}). The analysis returns the 
set of input-space regions $C$ that have been completed
and a set $B$ of abstract-domain elements over-approximating the regions 
in which bias might occur (cf. 
Line~\ref{patterns:return}). If $B$ is empty, then the 
given model $\model$ satisfies 
causal fairness with respect to $\sensitive$ and 
$Y$ over $C$.

\begin{theorem}\label{thm:soundness}
	If function $\Call{analyze}{\model, 
		\sensitive, 
		Y, 
		\abstractdomain_1, 
		\abstractdomain_2, 
		\lowerbound, 
		\upperbound}$ 
	in Algorithm~\ref{alg:patterns} 
		returns 
	$\text{C}, \textsc{true}, \emptyset$, 
	then $\model$ satisfies 
	$\fairness[Y]$ 
	over the input-space fraction $C$.
\end{theorem}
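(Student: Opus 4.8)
The plan is to lift the soundness argument used for the naïve analysis (the theorem immediately preceding this one) to the parallel-semantics hierarchy of Section~\ref{sec:parallel}, exploiting the per-partition, per-pattern structure of Algorithm~\ref{alg:patterns}. Concretely, I would let $\partitions$ be the family of regions collected in $\text{C}$ and argue that the algorithm computes a sound over-approximation $\Pi^{\partitions\abstraction}_\dependency$ of the parallel semantics $\parallelizing{\model}^\partitions_\dependency$, i.e.\ $\parallelizing{\model}^\partitions_\dependency \bulleq_\partitions \Pi^{\partitions\abstraction}_\dependency$; the passing check (return value $\text{B} = \emptyset$) then verifies the fairness condition of Lemma~\ref{lem:parallel} on the over-approximation, and monotonicity transfers it to the concrete semantics.

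First I would show that $\partitions$ is a \emph{fair} partition of the completed fraction in the sense of Section~\ref{sec:parallel}: since the forward pre-analysis splits a region only along non-sensitive dimensions (the call $\textsc{partition}_{\overline{\sensitive}}$ on Line~\ref{patterns:partitioned}), no split ever constrains the inputs in $\sensitive$, so every value choice in $\choices$ remains realizable inside each $\partition$, which is exactly the fairness requirement on the partition. Lemma~\ref{lem:parallelization} then licenses verifying $\fairness$ independently, region by region. A region enters $\text{C}$ in two ways. If it is added on Line~\ref{patterns:completed} it is \textsc{uniquely-classified} (Line~\ref{patterns:fair}), so only one outcome is reachable and the antecedent $A^\partition_\omega \neq B^\partition_\omega$ of Lemma~\ref{lem:parallel} is vacuous; fairness holds trivially. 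If it is added on Line~\ref{patterns:end}, the backward pass has computed, for each target class, an over-approximation of the initial states in $\partition$ leading to that class --- exactly as in the naïve proof --- and the subsequent \textsc{check} (Line~\ref{patterns:check}) verifies precisely the emptiness-of-intersection condition of Lemma~\ref{lem:parallel} on this over-approximation. Since $\text{B} = \emptyset$ holds for \emph{every} feasible pattern and region, the condition holds uniformly over $\partitions$.

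The crux --- and the main obstacle --- is justifying that pruning the backward exploration with the abstract activation pattern $\text{p}$ is sound, i.e.\ that $\bwdactivation[\text{p}]{\abstractdomain}{\node_{i,j}}$ discards no feasible behaviour over $\partition$. Here I would invoke the correctness of the forward pre-analysis that produced $\text{p}$: whenever $\text{p}$ fixes a flag $\pattern_{i,j}$, the corresponding $\relu$ really is always active (resp.\ always inactive) for \emph{all} inputs in $\partition$, so dropping the opposite branch loses nothing; for each of the remaining (at most $\upperbound$) unknown $\relu$s, $\text{p}$ forces both branches to be explored, so no path through $\partition$ is omitted. This yields that the backward over-approximation dominates $\collecting{\model}^\partition_\dependency$ for every feasible $\partition$, and hence $\parallelizing{\model}^\partitions_\dependency \bulleq_\partitions \Pi^{\partitions\abstraction}_\dependency$. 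Combining this over-approximation with the passing \textsc{check} through Lemma~\ref{lem:parallel} and transitivity of $\bulleq_\partitions$ --- a smaller concrete set inherits the empty non-sensitive intersection established for the larger abstract set --- we conclude that $\model$ satisfies $\fairness[Y]$ over the completed fraction $\text{C}$, while the regions set aside in $\text{E}$ fall outside $\text{C}$ and carry no fairness claim.
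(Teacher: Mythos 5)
Your proposal is correct and follows essentially the same route as the paper's own proof sketch: show that Algorithm~\ref{alg:patterns} computes a sound over-approximation of the parallel semantics $\parallelizing{\model}^{\partitions}_\dependency$, observe that the passing check (Lines~\ref{patterns:parallelizableC}--\ref{patterns:check}) establishes the condition of Lemma~\ref{lem:parallel} on that over-approximation, and transfer the conclusion to the concrete semantics by monotonicity/transitivity. Your write-up is in fact more detailed than the paper's sketch, which leaves implicit the three points you spell out --- that the partition is fair because $\textsc{partition}_{\overline{\sensitive}}$ splits only along non-sensitive dimensions, the case split between uniquely-classified regions (Line~\ref{patterns:completed}) and backward-analyzed regions (Line~\ref{patterns:end}), and the soundness of pruning the backward exploration with the abstract activation pattern $\text{p}$ produced by the forward pre-analysis.
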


	\begin{proof}[Proof (Sketch)]
		$\Call{analyze}{\model, 
			\sensitive, 
			Y, 
			\abstractdomain_1, 
			\abstractdomain_2, 
			\lowerbound, 
			\upperbound}$ in 
		Algorithm~\ref{alg:patterns} first computes the 
		abstract activation patterns that cover a 
		fraction $C$ of the input space in which the 
		analysis is feasible 
		(Lines~\ref{patterns:start}-\ref{patterns:excluded}).
		 Then, it computes an 
		\emph{over-approximation} $a$ of the 
		regions of $C$
		that yield each target class 
		$\node_{\layers,j}$ (cf. 
		Line~\ref{patterns:backward}). 
		Thus, it actually computes an 
		over-approximation 
		${\parallelizing{\model}_\dependency^{\partitions\abstraction}}$
		of the 
		parallel semantics
		$\parallelizing{\model}^\partitions_\dependency$,
		 i.e., 
		$\parallelizing{\model}^\partitions_\dependency
		\bulleq 
		\parallelizing{\model}^{\partitions\abstraction}_\dependency$.
		Thus, if 
		$\parallelizing{\model}^{\partitions\abstraction}_\dependency$
		satisfies $\fairness[Y]$, i.e., 
		$\forall \partition \in \partitions\colon\forall 
		A, B \in 
		\parallelizing{\model}^{\partitions\abstraction}_\dependency\colon
		(A^\partition_\omega 
		\neq B^\partition_\omega 
		\Rightarrow 
		\restrict{{A^\partition_0}}{\overline{\sensitive}} 
		\cap 
		\restrict{{B^\partition_0}}{\overline{\sensitive}} 
		= 
		\emptyset)$ 
		(according to Lemma~\ref{lem:parallel}, cf. 
		Lines~\ref{patterns:parallelizableC}-\ref{patterns:check}),
		 then by transitivity 
		we 
		can 
		conclude that also 
		$\parallelizing{\model}^{\partitions\abstraction}_\dependency$
		necessarily satisfies $\fairness[Y]$.
	\end{proof}

\begin{remark}
Recall that we assumed neural-network nodes to have real values (cf. 
Section~\ref{sec:traces}). Thus, 
Theorem~\ref{thm:soundness} is true for all choices 
of classical numerical abstract domains 
\cite[etc.]{Cousot76,Cousot78,GhorbalGP09,Mine06a}
for $\abstractdomain_1$ and $\abstractdomain_2$.
If we were to consider floating-point values instead, 
the only sound choices would be floating-point 
abstract domains \cite{Chen08,Mine04,Singh19}.
\end{remark}

\paragraph{\textbf{Other Activation Functions.}}
Let us discuss how activation functions other than 
$\relu$s would be handled. The only difference in 
Algorithm~\ref{alg:patterns} would be the transfer 
functions 
$\fwdactivation[\text{p}]{\abstractdomain}{\node_{i,j}}$
 (cf. Line~\ref{patterns:pattern}) and 
$\bwdactivation[\text{p}]{\abstractdomain}{\node_{i,j}}$
 (cf. Line~\ref{patterns:relu}),
which would have to be replaced with the transfer 
functions corresponding to the considered activation 
function. 

Piecewise-linear activation functions, like
$\textsc{Leaky ReLU}(x) = \max(x, k \cdot x)$ or 
$\textsc{Hard TanH}(x) = \max(-1, \min(x, 1))$, 
can be treated analogously to $\relu$s. 
The case of \textsc{Leaky ReLU}s is trivial.
For \textsc{Hard TanH}s, the patterns $p$ 
used in Algorithm~\ref{alg:patterns} will consist of 
flags $\pattern_{i,j}$ with three possible values, 
depending on whether the corresponding hidden 
node $\node_{i,j}$ has value less than or equal to 
$-1$, greater than or equal to $1$, or between 
$-1$ and $1$.
For these activation functions, our approach remains 
sound and, in practice, exact when using 
disjunctive polyhedra for the backward analysis. 

Other activation
functions, e.g., $\textsc{Sigmoid}(x) = 
\frac{1}{1+e^{-x}}$, can be 
soundly over-approximated 
\cite{Singh19} and similarly treated in a piecewise manner. In 
this case, however, we necessarily lose the 
exactness of the 
analysis, even when using disjunctive polyhedra.

\section{Implementation}\label{sec:implementation}

We implemented our causal-fairness analysis described in the previous
section in a tool called \tool. The implementation is written in
\python and is open 
source\footnote{\url{https://github.com/caterinaurban/Libra}}.

\paragraph{\textbf{Tool Inputs.}} \tool takes as 
input a 
neural-network model 
$\model$ expressed as a \python program (cf.
Section~\ref{sec:neuralnets}), a specification of the input layer
$\layer_0$ of $\model$, an abstract domain for the forward
pre-analysis, and budget constraints $\lowerbound$ and
$\upperbound$. The specification for $\layer_0$ determines which
input nodes correspond to continuous and (one-hot encoded) categorical
data and, among them, which should be considered bias sensitive. We
assume that continuous data is in the range $[0,1]$. A set $Y$ of
initial states of interest is specified using an assumption at the
beginning of the program representation of $\model$.

\paragraph{\textbf{Abstract Domains.}} For the 
forward 
pre-analysis, choices of the abstract domain are 
either 
boxes \cite{Cousot76} (i.e., 
\textsc{boxes} in the following), or a combination of 
boxes and symbolic 
constant
propagation~\cite{Li19,Mine06b} (i.e., 
\textsc{symbolic} in the following), or the 
\textsc{deeppoly} domain~\cite{Singh19}, which is designed for
proving local robustness of neural networks. As 
previously 
mentioned, we use
disjunctive polyhedra for the backward analysis. All abstract domains
are built on top of the \apron abstract-domain library
\cite{Jeannet09}.

\paragraph{\textbf{Parallelization.}} Both the forward and 
backward 
analyses are parallelized to run on multiple CPU cores. The pre-analysis uses a queue from
which each process draws a fraction $\partition$ of $Y$
(cf. Line~\ref{patterns:get}). Fractions that need to be
partitioned further are split in half along one of the non-sensitive
dimensions (in a round-robin fashion), and the resulting (sub)fractions
are put back into the queue (cf.  Line~\ref{patterns:partitioned}).
Feasible $\partition$s (with their corresponding abstract activation
pattern $p$) are put into another queue
(cf. Line~\ref{patterns:feasible}) for the 
backward
analysis. 

\paragraph{\textbf{Tool Outputs.}} The analysis 
returns 
the fractions of $Y$ that were 
analyzed and any (sub)regions of these where bias was found. It also
reports the percentage of the input space that was analyzed and (an
estimate of) the percentage that was found biased according to a given
probability distribution of the input space (uniform by default). To obtain the latter, we simply
use the size of a box wrapped around each biased
region. More precise but also costlier solutions exist
\cite{Barvinok94}.

\section{Experimental Evaluation}

In this section, we evaluate our approach by focusing on the following
research questions:
\begin{description}
\item[RQ1:] Can our analysis detect seeded
(i.e., injected) bias?
\item[RQ2:] Is our analysis able to answer specific 
bias queries?
\item[RQ3:] How does the model structure
affect 
the scalability of the analysis?
\item[RQ4:] How does the analyzed input-space size
affect the scalability of the analysis?
\item[RQ5:] How does the analysis budget affect 
the scalability-vs-precision tradeoff?
\item[RQ6:] Can our analysis effectively
leverage multiple CPUs?
\end{description}





\subsection{Data}

For our evaluation, we used 
public datasets from the UCI Machine Learning
Repository  and ProPublica
(see below for more details) to
train several neural-network models. We primarily 
focused on datasets 
discussed in the
literature~\cite{MehrabiMSLG-19} or used by
related
techniques (e.g., \cite{Galhotra17,UdeshiAC18,TramerAGHHHJL17,Bastani18,Albarghouthi17a,Albarghouthi17b,Albarghouthi19,Datta17}).

We pre-processed these datasets both to make them fair with respect to
a certain sensitive input feature as well as to seed 
bias. We describe how
we seeded bias in each particular dataset later on. 

Our methodology for making the data fair was common
across datasets.
In particular, given an original dataset and a sensitive feature (say,
race), we selected the largest population with a particular value for
this feature (say, Caucasian) from the dataset (and discarded all
others). We removed any duplicate or inconsistent entries from this
population. We then duplicated the population for every other value of
the sensitive feature (say, Asian and Hispanic). For example, assuming
the largest population was 500 Caucasians, we created 500 Asians and
500 Hispanics, and any two of these populations differ only in the
value of race. Consequently, the new dataset is causally fair because
there do not exist two inputs $k$ and $k'$ that differ only in the
value of the sensitive feature for which the classification outcomes
are different.

We define the \emph{causal-unfairness score} of a dataset as the
percentage of inputs $k$ in the dataset for which there exists another
input $k'$ that differs from $k$ only in the value of 
the sensitive feature and the classification
outcome. Our fair datasets have an unfairness score of 0\%.

All datasets used in our experiments are open 
source as part of \tool. 

\subsection{Setup}

Since neural-network training is non-deterministic, we typically train
eight neural networks on each dataset, unless 
stated otherwise. The model
sizes range from 2 hidden layers with 5 nodes each 
to 32
hidden layers with 40 nodes each. All  models used 
in our experiments
are open source as part of \tool. For
each model, we assume a uniform distribution of 
the input space.

We performed all experiments on a 12-core
\mbox{Intel}
\textregistered~Xeon \textregistered~X5650 
CPU~@~2.67GHz machine with $48$GB of 
memory, running Debian GNU/Linux 9.6 (stretch).

\subsection{Results}

In the following, we present our experimental results for each of the
above research questions.

\paragraph{\textbf{RQ1: Detecting Seeded Bias.}}
This research question focuses on detecting seeded bias by comparing
the analysis results for models trained with fair 
versus biased
data. 

For this experiment, we used the German Credit
dataset\footnote{\url{https://archive.ics.uci.edu/ml/datasets/Statlog+(German+Credit+Data)}}.
 This
dataset classifies 
creditworthiness into two categories,
``good'' and ``bad''. An input feature is age, which we consider
sensitive to bias. (Recall that this could also be an input feature
that the user considers indirectly sensitive to bias.)
We seeded bias in the fair dataset by randomly assigning a bad credit
score to people of age 60 and above who request a 
credit amount of
more than EUR 1~000 until we reached a 20\% causal-unfairness score of
the dataset.
The median classification accuracy of the models 
(17 inputs and 4 hidden layers with 5 nodes each) 
trained on fair and
biased data was 71\% and 65\%, respectively. 
Note that accuracy does
not improve by adding more layers or nodes per 
layer --- we tried up to
100 hidden layers with 100 nodes each.

\begin{table}[t]
        \caption{Analysis of Models Trained on Fair 
        and \{Age, Credit > 1000\}-Biased Data 
        (German Credit Data)} 
        \label{tbl:germanT}
\centering
\resizebox{0.96\textwidth}{!}{%
	\begin{tabular}{|c|cc|cc|cc|cc|cc|cc|c}
		\cline{1-13}
		\multirow{3}{*}{$\textsc{credit}$} & 
		\multicolumn{4}{c|}{\textsc{boxes}} 
		& 
		\multicolumn{4}{c|}{\textsc{symbolic}} 
		& 
		\multicolumn{4}{c|}{\textsc{deeppoly}} &
		\\
		&  
		\multicolumn{2}{c|}{\textsc{fair data} } 
	& 
	\multicolumn{2}{c|}{\textsc{biased data}} &
		\multicolumn{2}{c|}{\textsc{fair data} } 
		& 
		\multicolumn{2}{c|}{\textsc{biased data}} &
		\multicolumn{2}{c|}{\textsc{fair data} } 
		& 
		\multicolumn{2}{c|}{\textsc{biased data}} &
		\\
		& 
		\textsc{bias} & 
		\textsc{time} &
		\textsc{bias} & 
		\textsc{time} & 
		\textsc{bias} & 
		\textsc{time} &
		\textsc{bias} & 
		\textsc{time} & 
		\textsc{bias} & 
		\textsc{time} &
		\textsc{bias} & 
		\textsc{time} &\\
\hline%
%
\rowcolor{gray!10}
\cellcolor{mygreen!20} & {\small 
	$0.09\%$} &  {\small 
	47s} & {\small 
	$0.09\%$} &  
	{\small 
	2m 17s} 
& {\small 
	$0.09\%$} &  {\small 
	13s} & {\small $0.09\%$} & {\small 
	1m 10s} 
&  {\small 
	$\mathbf{0.09\%}$} &  \cellcolor{mygreen!20} 
	{\small 
	\textbf{10s}} & {\small $\mathbf{0.09\%}$} & 
	\cellcolor{mygreen!20} {\small 
	\textbf{39s}} & \multicolumn{1}{c|}{\textsc{min}}
\\
\cellcolor{mygreen!20} & {\small 
	$0.19\%$} &  {\small 
	5m 46s} & {\small $0.45\%$} & {\small 
	13m 2s} 
&  {\small 
	$\mathbf{0.19\%}$} &  \cellcolor{mygreen!20} 
	{\small 
	\textbf{1m 5s}} & {\small $0.45\%$} & {\small 
	2m 41s} 
& {\small 
	$0.19\%$} &  {\small 
	1m 12s} &  {\small 
	$\mathbf{0.45\%}$} & 
	\cellcolor{mygreen!20} {\small 
	\textbf{1m 46s}} & 
	\multicolumn{1}{c|}{\textsc{median}}
\\
\rowcolor{gray!10}
\multirow{-3}{*}{\cellcolor{mygreen!20}$\leq 
1000$} & 
{\small 
	$0.33\%$} & {\small 
	30m 59s} & {\small $0.95\%$} & {\small 
	1h 56m 57s} 
& 
 {\small 
	$\mathbf{0.33\%}$} & \cellcolor{mygreen!20} 
	{\small 
	\textbf{4m 8s}} & {\small 
	$\mathbf{0.95\%}$} & \cellcolor{mygreen!20} 
	{\small 
	\textbf{13m 16s}} 
& {\small 
	$0.33\%$} &  {\small 
	5m 45s} & {\small $0.95\%$} & {\small 
	18m 18s} & \multicolumn{1}{c|}{\textsc{max}}
\\
\hline%
\cellcolor{mypurple!20} & {\small 
	$2.21\%$} &  {\small  
	1m 42s} &  {\small 
	$4.52\%$} &
	{\small 
	21m 11s} 
&  {\small 
	$\mathbf{2.21\%}$} & \cellcolor{mypurple!20} 
	{\small  
	\textbf{38s}} & {\small $\mathbf{4.52\%}$} & 
	\cellcolor{mypurple!20} {\small 
	\textbf{3m 7s}} 
& {\small 
	$2.21\%$} &  {\small 
	39s} & {\small $4.52\%$} & {\small 
	4m 44s} & \multicolumn{1}{c|}{\textsc{min}}
\\
\rowcolor{gray!10}
\cellcolor{mypurple!20} & {\small 
	$6.72\%$} &{\small  
	31m 42s} & {\small $23.41\%$} & {\small 
	1h 36m 51s} 
& {\small 
	$6.72\%$} &{\small  
	8m 59s} & {\small $23.41\%$} & {\small 
	41m 44s} 
&  {\small 
	$\mathbf{6.63\%}$} & \cellcolor{mypurple!20} 
	{\small 
	\textbf{4m 58s}} & 
	{\small $\mathbf{23.41\%}$} & 
	\cellcolor{mypurple!20} {\small 
	\textbf{15m 39s}} & 
	\multicolumn{1}{c|}{\textsc{median}}
\\
\multirow{-3}{*}{\cellcolor{mypurple!20}$> 1000$} 
& 
{\small 
	$14.96\%$} & {\small  
	7h 7m 12s} & {\small $33.19\%$} & {\small 
	16h 50m 48s} 
& 
{\small 
	$14.96\%$} & {\small  
	4h 16m 52s} & {\small $33.19\%$} & {\small 
	8h 5m 14s} 
&  {\small 
	$\mathbf{14.96\%}$} & \cellcolor{mypurple!20} 
	{\small 
	\textbf{1h 9m 45s}} &  
	{\small $\mathbf{31.17\%}$} 
	&\cellcolor{mypurple!20} {\small 
	\textbf{6h 51m 50s}} & 
	\multicolumn{1}{c|}{\textsc{max}}
\\
\hline
\end{tabular}
}
\end{table}

To analyze these models, we set $\lowerbound = 
0$ to be sure to complete 
the analysis on $100\%$ of the input space. The 
drawback with this is that the pre-analysis might 
end up splitting input partitions endlessly. To 
counteract, for each model, we chose the smallest 
upper bound $\upperbound$ that did not cause this 
issue. 
Table~\ref{tbl:germanT} shows the analysis results 
for the different choices of domain used for the 
forward pre-analysis.
In particular, it shows whether the models are 
biased with respect to
age for credit requests of 1~000 or less as well as 
for credit
requests of over 1~000.
Columns \textsc{bias} and \textsc{time} show the 
detected bias (in 
percentage of
the entire input space) and the analysis running 
time. We show minimum, 
median, and maximum bias percentage and running 
time for each 
credit 
request group. For each line in 
Table~\ref{tbl:germanT}, we highlighted the choice 
of the abstract domain that entailed the shortest 
analysis time.
The analysis results for all models 
are shown in the appendix (cf. 
Tables~\ref{tbl:germanTfull0}-\ref{tbl:germanTfull2}).
 
For all models, 
the analysis finds little bias for small 
credit amounts, as intended.
Instead, for large credit amounts, the analysis finds 
significantly more bias (i.e., about three times as 
much 
median bias)
for the models trained 
on biased data in comparison to models trained on 
fair data.
This
demonstrates that \emph{our approach is able to 
effectively detect seeded bias}.

For the models trained on fair data, we observe 
a maybe unexpected difference in the bias found for 
small credit 
amounts compared to larger amounts. This is in 
part due to the fact that bias is given in 
percentage of the entire input space and not scaled 
with respect to the \emph{analyzed} input space. 
When considering the 
analyzed input space (small credit amounts 
correspond to a mere 
4\% of the input space), the difference is less 
marked: 
the median bias is 
0.19\% / 4\% = 4.75\% for small credit amounts 
and 6.72\% / 96\% = 7\% (or 6.63\% / 96\% = 
6.9\% for the \textsc{deeppoly} domain) for large 
credit amounts.
The remaining difference indicates that the models 
contain bias that does not necessarily depend
on the credit amount. The bias is introduced by the 
training process
itself (as explained in the Introduction) and is not 
due to imprecision
of our analysis. Recall that our approach is exact, and imprecision is
only introduced when estimating the bias 
percentage
(cf. Section~\ref{sec:implementation}).

\begin{toappendix}	
		\begin{table}[t]
		\caption{Analysis of Neural
			Networks Trained on Fair and \{Age, Credit 
			> 1000\}-Biased Data
			(German Credit
			Data) --- Full
			Table (\textsc{boxes} 
			Domain)}\label{tbl:germanTfull0}
\centering
\resizebox{0.75\textwidth}{!}{%
\begin{tabular}{|c|c|cc|cc|c|c|cc|cc|c|}
	\hline
		\multirow{3}{*}{$\textsc{credit}$} &  	
	\multicolumn{12}{c|}{\textsc{boxes} }  
	\\
	 &  
	\multicolumn{6}{c|}{\textsc{fair data} } 
	& 
	\multicolumn{6}{c|}{\textsc{biased data}}
	\\
	& $\upperbound$ &
	\textsc{bias} & \size{\text{C}} & 
	\multicolumn{2}{c|}{$\size{\text{F}}$} & 
	\textsc{time} & $\upperbound$ &
	\textsc{bias} & \size{\text{C}} & 
	\multicolumn{2}{c|}{$\size{\text{F}}$} & 
	\textsc{time} \\
	\hline%

\rowcolor{gray!10}
\cellcolor{mygreen!20} & $8$ & {\small $0.33\%$} 
& 
{\small 
$144$} & {\small $32$} &
{\small $39$} & {\small 7m 7s} 		
& $13$ & {\small $0.79\%$} & {\small $212$} & 
{\small 
$56$} & {\small $78$} & {\small 
1h 56m 57s} \\

\cellcolor{mygreen!20} & $9$ & {\small $0.17\%$} 
& 
{\small 
$182$} & {\small $35$} 
& {\small $56$} & 
{\small 30m 59s}		
& $5$ & {\small $0.31\%$} & {\small $166$} & 
{\small 
$26$} & {\small $49$} & {\small 
2m 17s} \\

\rowcolor{gray!10}
\cellcolor{mygreen!20} & $2$ & {\small $0.09\%$} 
& 
{\small 
$167$} & {\small $13$} 
& {\small $24$} & 
{\small 2m 2s} 		
& $12$ & {\small $0.90\%$} & {\small $202$} & 
{\small 
$67$} & {\small $81$} & {\small 
24m 2s} \\		

\cellcolor{mygreen!20} & $13$ & {\small $0.15\%$} 
& 
{\small 
$157$} & {\small $30 $} 
& {\small $34$} & 
{\small 17m 30s} 		
& $9$ & {\small $0.42\%$} & {\small $187$} & 
{\small 
$47$} & {\small $71$} & {\small 
17m 57s} \\		

\rowcolor{gray!10}
\cellcolor{mygreen!20} & $6$ & {\small $0.23\%$} 
& 
{\small 
$169$} & {\small $36$} 
& {\small $67$} & 
{\small 4m 24s}		
&  $10$ & 
 {\small $0.95\%$} 
& 
{\small $260$} 
& 
{\small 
$88$} & {\small 
$210$} & {\small 
1h 13m 14s} \\

\cellcolor{mygreen!20} & $13$ & {\small $0.30\%$} 
& 
{\small 
$173$} & {\small $57$} 
& {\small $82$} & 
{\small 12m 36s} 		
& $7$ & {\small $0.41\%$} & {\small $190$} & 
{\small 
$56$} & {\small $66$} & {\small 
8m 7s} \\		

\rowcolor{gray!10}
\cellcolor{mygreen!20} & $9$ & {\small $0.20\%$} 
& 
{\small 
$134$} & {\small $30$} 
& {\small $38$} & 
{\small 3m 13s} 		
& $9$ & {\small $0.48\%$} & {\small $189$} & 
{\small 
$39$} & {\small $59$} & {\small 
3m 2s}  \\		

\multirow{-8}{*}{\cellcolor{mygreen!20}$\leq 
1000$}
& $6$ & {\small $0.16\%$} & {\small $172$} & 
{\small 
$16$} & {\small $19$} & {\small 
47s}
& $3$ & {\small $0.09\%$} & {\small $200$} & 
{\small 
$18$} & {\small $21$} & {\small 
5m 23s}
	\\ \cline{2-13}
	
\cellcolor{mygreen!20}\textsc{min} && {\small 
$0.09\%$} &  &  &  & {\small 
		47s} && {\small $0.09\%$} &  &  &  
		& {\small 
		2m 17s} \\
\rowcolor{gray!10}
\cellcolor{mygreen!20}\textsc{median} && {\small 
$0.19\%$} &  &  &  & {\small 
		5m 46s} && {\small $0.45\%$} &  &  & 
		 & {\small 
		13m 2s} \\
\cellcolor{mygreen!20}\textsc{max} && {\small 
$0.33\%$} &  &  &  & {\small 
		30m 59s} && {\small $0.95\%$} &  &  & 
		 & {\small 
		1h 56m 57s} \\

	\hline%
	
\rowcolor{gray!10}
\cellcolor{mypurple!20} & $13$ & {\small 
$12.20\%$ } & 
{\small 
$208$} & 
{\small $76$} & {\small $139$} & 
{\small 53m 27s} 			
& $16$ & {\small $27.59\%$ } & {\small $285$} & 
{\small 
$140$} & {\small $270$} & {\small 
16h 50m 48s} \\	

\cellcolor{mypurple!20} & $15$ & {\small 
$7.43\%$  } & 
{\small 
$211$} & {\small $86$} 
& {\small $185$} & 
{\small 3h 45m 20s}		
& $10$ & {\small $30.77\%$ } & {\small $387$} & 
{\small 
$122$} & {\small $312$} & {\small 
36m 39s} \\	

\rowcolor{gray!10}
\cellcolor{mypurple!20} & $3$ & {\small $2.21\%$  } 
& 
{\small 
$207$} & {\small $23$} 
& {\small $42$} & 
{\small 1m 42s} 		
& $16$ & {\small $33.19\%$ } & {\small $273$} & 
{\small 
$122$} & {\small $260$} & {\small 
16h 49m 33s} \\	

\cellcolor{mypurple!20} & $13$ & {\small $4.29\%$  
} & 
{\small 
$180$} & {\small $45$} 
& {\small $75$} & 
{\small 36m 36s} 		
& $10$ & {\small $16.45\%$ } & {\small $397$} & 
{\small 
$198$} & {\small $389$} & {\small 
2h 25m 20s} \\	

\rowcolor{gray!10}
\cellcolor{mypurple!20} & $7$ & {\small $9.73\%$  } 
& 
{\small 
$433$} & {\small $139$} 
& {\small $329$} & 
{\small 16m 14s}		
&  $14$ & 
 {\small $30.27\%$  
} &  {\small $257$} 
& 
{\small 
$120$} &  {\small 
$253$} &  {\small 
2h 13m 36s} \\

\cellcolor{mypurple!20} & $16$ & {\small 
$14.96\%$ } & 
{\small 
$230$} & 
{\small $92$} & {\small $197$} & 
{\small 7h 7m 12s} 	
& $9$ & {\small $17.24\%$ } & {\small $417$} & 
{\small 
$169$} & {\small $337$} & {\small 
1h 0m 6s} \\	

\rowcolor{gray!10}
\cellcolor{mypurple!20} & $10$ & {\small $6.00\%$  
} & 
{\small 
$243$} & {\small $99$} 
& {\small $145$} & 
{\small 22m 1s} 		
& $11$ & {\small $19.23\%$ } & {\small $288$} & 
{\small 
$99$} & {\small $193$} & {\small 
28m 34s} \\

\multirow{-8}{*}{\cellcolor{mypurple!20}$> 1000$}
& $10$ & {\small $4.61\%$  } & {\small $237$} & 
{\small 
$63$} & {\small $96$} & {\small 
26m 48s}	
& $3$ & {\small $4.52\%$  } & {\small $618$} & 
{\small 
$83$} & {\small $240$} & {\small 
21m 11s} 
	\\\cline{2-13}
	
\cellcolor{mypurple!20} \textsc{min} && {\small 
$2.21\%$} &  &  &  & {\small  
	1m 42s} && {\small $4.52\%$} &  &  &  
	& {\small 
	21m 11s} \\
\rowcolor{gray!10}
\cellcolor{mypurple!20} \textsc{median} && {\small 
$6.72\%$} &  &  &  & {\small  
		31m 42s} && {\small $23.41\%$} &  &  & 
		 & {\small 
		1h 36m 51s} \\
\cellcolor{mypurple!20} \textsc{max} && {\small 
$14.96\%$} &  &  &  & {\small  
	7h 7m 12s} && {\small $33.19\%$} &  &  
	&  & {\small 
	16h 50m 48s} \\
	\hline
\end{tabular}
}
	\end{table}
	
	\begin{table}[t]
		\caption{Analysis of Neural
			Networks Trained on Fair and \{Age, Credit 
			> 1000\}-Biased Data
			(German Credit
			Data) --- Full
			Table (\textsc{symbolic} 
			Domain)}\label{tbl:germanTfull1}
\centering
\resizebox{0.75\textwidth}{!}{%
\begin{tabular}{|c|c|cc|cc|c|c|cc|cc|c|}
	\hline
		\multirow{3}{*}{$\textsc{credit}$} &  	
	\multicolumn{12}{c|}{\textsc{symbolic} }  
	\\
	 &  
	\multicolumn{6}{c|}{\textsc{fair data} } 
	& 
	\multicolumn{6}{c|}{\textsc{biased data}}
	\\
	& $\upperbound$ &
	\textsc{bias} & \size{\text{C}} & 
	\multicolumn{2}{c|}{$\size{\text{F}}$} & 
	\textsc{time} & $\upperbound$ &
	\textsc{bias} & \size{\text{C}} & 
	\multicolumn{2}{c|}{$\size{\text{F}}$} & 
	\textsc{time} \\
	\hline%

\rowcolor{gray!10}
\cellcolor{mygreen!20} &  
$\mathbf{7}$ &  {\small 
$\mathbf{0.33\%}$} &  
{\small 
$\mathbf{138}$} &  {\small 
$\mathbf{22}$} &  
{\small $\mathbf{32}$} &  \cellcolor{mygreen!20} 
{\small \textbf{52s}} 		
& $10$ & {\small $0.79\%$} & {\small $196$} & 
{\small 
$47$} & {\small $56$} & {\small 
7m 
9s} \\

\cellcolor{mygreen!20} & $6$ & {\small $0.17\%$} 
& 
{\small 
$165$} & {\small $19$} 
& {\small $23$} & 
{\small 4m 8s}		
&  $\mathbf{4}$ &  
 {\small $\mathbf{0.31\%}$} 
&   {\small $\mathbf{141}$} 
&  
{\small 
$\mathbf{17}$} &   {\small 
$\mathbf{26}$} &  \cellcolor{mygreen!20} {\small 
\textbf{1m 10s}} \\

\rowcolor{gray!10}
\cellcolor{mygreen!20} & $2$ & {\small $0.09\%$} 
& 
{\small 
$140$} & {\small $8$} 
& {\small $10$} & 
{\small 29s} 		
&  $\mathbf{12}$ &  
 {\small $\mathbf{0.90\%}$} 
&   {\small $\mathbf{198}$} 
& 
  {\small 
$\mathbf{52}$} &   {\small 
$\mathbf{59}$} &  \cellcolor{mygreen!20} {\small 
\textbf{13m 16s}} \\		

\cellcolor{mygreen!20} & $9$ & {\small $0.15\%$} 
& 
{\small 
$159$} & {\small $21 $} 
& {\small $22$} & 
{\small 2m 5s} 		
& $5$ & {\small $0.42\%$} & {\small $194$} & 
{\small 
$28$} & {\small $38$} & {\small 
3m 19s} \\		

\rowcolor{gray!10}
\cellcolor{mygreen!20} & $3$ & {\small $0.23\%$} 
& 
{\small 
$157$} & {\small $14$} 
& {\small $25$} & 
{\small 1m 49s}		
& $8$ & {\small $0.95\%$} & {\small $173$} 
& 
{\small 
$52$} & {\small $77$} & {\small 
10m 40s} \\

\cellcolor{mygreen!20} &  
$\mathbf{8}$ &   {\small 
$\mathbf{0.30\%}$} &  
{\small 
$\mathbf{173}$} &   {\small 
$\mathbf{23}$} 
&   {\small $\mathbf{32}$} &  
\cellcolor{mygreen!20}
{\small \textbf{1m 10s}} 		
& $2$ & {\small $0.41\%$} & {\small $182$} & 
{\small 
$24$} & {\small $33$} & {\small 
2m 3s} \\		

\rowcolor{gray!10}
\cellcolor{mygreen!20} & $6$ & {\small $0.20\%$} 
& 
{\small 
$135$} & {\small $23$} 
& {\small $25$} & 
{\small 1m 0s} 		
& $12$ & {\small $0.48\%$} & {\small $181$} & 
{\small 
$23$} & {\small $39$} & {\small 
1m 21s}  \\		

\multirow{-8}{*}{\cellcolor{mygreen!20}$\leq 
1000$}
& $5$ & {\small $0.16\%$} & {\small $168$} & 
{\small 
$13$} & {\small $14$} & {\small 
13s}
& $2$ & {\small $0.09\%$} & {\small $196$} & 
{\small 
$10$} & {\small $10$} & {\small 
1m 42s}
	\\ \cline{2-13}
	
\cellcolor{mygreen!20}\textsc{min} && {\small 
$0.09\%$} &  &  &  & {\small 
		13s} && {\small $0.09\%$} &  &  &  
		& {\small 
		1m 10s} \\
\rowcolor{gray!10}
\cellcolor{mygreen!20}\textsc{median} && {\small 
$0.19\%$} &  &  &  & {\small 
		1m 5s} && {\small $0.45\%$} &  &  & 
		 & {\small 
		2m 41s} \\
\cellcolor{mygreen!20}\textsc{max} && {\small 
$0.33\%$} &  &  &  & {\small 
		4m 8s} && {\small $0.95\%$} &  &  & 
		 & {\small 
		13m 16s} \\

	\hline%
	
\rowcolor{gray!10}
\cellcolor{mypurple!20} & $12$ & {\small 
$12.20\%$ } & 
{\small 
$202$} & 
{\small $56$} & {\small $101$} & 
{\small 32m 1s} 			
& $13$ & {\small $27.59\%$ } & {\small $412$} & 
{\small 
$189$} & {\small $334$} & {\small 
4h 50m 24s} \\	

\cellcolor{mypurple!20} & $15$ & {\small 
$7.43\%$  } & 
{\small 
$215$} & {\small $60$} 
& {\small $103$} & 
{\small 2h 28m 9s}		
& $6$ & {\small $30.77\%$ } & {\small $371$} & 
{\small 
$75$} & {\small $179$} & {\small 
11m 52s} \\	

\rowcolor{gray!10}
\cellcolor{mypurple!20} &  
$\mathbf{2}$ &  {\small 
$\mathbf{2.21\%}$  } 
& 
{\small 
$\mathbf{161}$} & {\small 
$\mathbf{11}$} 
&  {\small $\mathbf{18}$} & 
\cellcolor{mypurple!20}
{\small \textbf{38s}} 		
& $15$ & {\small $33.19\%$ } & {\small $309$} & 
{\small 
$126$} & {\small $257$} & {\small 
8h 5m 14s} \\	

\cellcolor{mypurple!20} & $9$ & {\small $4.29\%$  
} & 
{\small 
$203$} & {\small $41$} 
& {\small $54$} & 
{\small 6m 53s} 		
& $8$ & {\small $16.45\%$ } & {\small $324$} & 
{\small 
$136$} & {\small $229$} & {\small 
1h 4m 52s} \\	

\rowcolor{gray!10}
\cellcolor{mypurple!20} & 
$\mathbf{3}$ &  {\small 
$\mathbf{9.73\%}$  } 
& 
{\small 
$\mathbf{234}$} &  {\small 
$\mathbf{38}$} 
&  {\small $\mathbf{74}$} & 
\cellcolor{mypurple!20}
{\small \textbf{2m 56s}}		
& $14$ & {\small $30.27\%$  } & {\small $241$} & 
{\small 
$98$} & {\small $219$} & {\small 
1h 39m 34s} \\

\cellcolor{mypurple!20} & $16$ & {\small 
$14.96\%$ } & 
{\small 
$228$} & 
{\small $82$} & {\small $168$} & 
{\small 4h 16m 52s} 	
& $6$ & {\small $17.24\%$ } & {\small $389$} & 
{\small 
$76$} & {\small $162$} & {\small 
18m 36s} \\	

\rowcolor{gray!10}
\cellcolor{mypurple!20} & $8$ & {\small $6.00\%$  } 
& 
{\small 
$261$} & {\small $106$} 
& {\small $80$} & 
{\small 6m 6s} 		
&  $\mathbf{6}$ & 
 {\small $\mathbf{19.23\%}$ 
} & 
 {\small $\mathbf{340}$} & 
{\small 
$\mathbf{66}$} &  {\small 
$\mathbf{134}$} & 
\cellcolor{mypurple!20} {\small 
\textbf{4m 12s}} \\

\multirow{-8}{*}{\cellcolor{mypurple!20}$> 1000$}
& $9$ & {\small $4.61\%$  } & {\small $228$} & 
{\small 
$51$} & {\small $66$} & {\small 
11m 4s}	
&  $\mathbf{2}$ & 
 {\small $\mathbf{4.52\%}$  
} & 
 {\small $\mathbf{325}$} & 
 {\small 
$\mathbf{45}$} &  {\small 
$\mathbf{90}$} & 
\cellcolor{mypurple!20} {\small 
\textbf{11m 4s}} 
	\\\cline{2-13}
	
\cellcolor{mypurple!20} \textsc{min} && {\small 
$2.21\%$} &  &  &  & {\small  
	38s} && {\small $4.52\%$} &  &  &  
	& {\small 
	3m 7s} \\
\rowcolor{gray!10}
\cellcolor{mypurple!20} \textsc{median} && {\small 
$6.72\%$} &  &  &  & {\small  
		8m 59s} && {\small $23.41\%$} &  &  & 
		 & {\small 
		41m 44s} \\
\cellcolor{mypurple!20} \textsc{max} && {\small 
$14.96\%$} &  &  &  & {\small  
	4h 16m 52s} && {\small $33.19\%$} &  &  
	&  & {\small 
	8h 5m 14s} \\
	\hline
\end{tabular}
}
	\end{table}

\begin{table}[t]
	\caption{Analysis of Neural
		Networks Trained on Fair and \{Age, Credit 
		> 1000\}-Biased Data
		(German Credit
		Data) --- Full
		Table (\textsc{deeppoly} 
		Domain)}\label{tbl:germanTfull2}
\centering
\resizebox{0.75\textwidth}{!}{%
\begin{tabular}{|c|c|cc|cc|c|c|cc|cc|c|}
	\hline
		\multirow{3}{*}{$\textsc{credit}$} &  	
	\multicolumn{12}{c|}{\textsc{deeppoly} }  
	\\
	 &  
	\multicolumn{6}{c|}{\textsc{fair data} } 
	& 
	\multicolumn{6}{c|}{\textsc{biased data}}
	\\
	& $\upperbound$ &
	\textsc{bias} & \size{\text{C}} & 
	\multicolumn{2}{c|}{$\size{\text{F}}$} & 
	\textsc{time} & $\upperbound$ &
	\textsc{bias} & \size{\text{C}} & 
	\multicolumn{2}{c|}{$\size{\text{F}}$} & 
	\textsc{time} \\
	\hline%

\rowcolor{gray!10}
\cellcolor{mygreen!20} & $8$ & {\small $0.33\%$} 
& 
{\small 
$170$} & {\small $21$} &
{\small $25$} & {\small 3m 40s} 		
&  $\mathbf{8}$ & 
 {\small $\mathbf{0.79\%}$} 
&  {\small $\mathbf{260}$} & 
 {\small 
$\mathbf{42}$} &  {\small 
$\mathbf{53}$} & \cellcolor{mygreen!20} {\small 
\textbf{5m 
42s}} \\

\cellcolor{mygreen!20} &  
$\mathbf{6}$ &  {\small 
$\mathbf{0.17\%}$} 
& 
{\small 
$\mathbf{211}$} &  {\small 
$\mathbf{10}$} 
&  {\small $\mathbf{10}$} & 
\cellcolor{mygreen!20}
{\small \textbf{4m 5s}}		
& $4$ & {\small $0.31\%$} & {\small $218$} & 
{\small 
$9$} & {\small $20$} & {\small 
1m 6s} \\

\rowcolor{gray!10}
\cellcolor{mygreen!20} &  
$\mathbf{2}$ &  {\small 
$\mathbf{0.09\%}$} 
& 
{\small 
$\mathbf{176}$} &  {\small 
$\mathbf{4}$} 
&  {\small $\mathbf{5}$} & 
\cellcolor{mygreen!20}
{\small \textbf{14s}} 		
& $12$ & {\small $0.82\%$} & {\small $271$} & 
{\small 
$53$} & {\small $61$} & {\small 
18m 18s} \\		

\cellcolor{mygreen!20} & 
$\mathbf{7}$ &  {\small 
$\mathbf{0.15\%}$} 
& 
{\small 
$\mathbf{212}$} &  {\small 
$\mathbf{9} $} 
&  {\small $\mathbf{9}$} & 
\cellcolor{mygreen!20}
{\small \textbf{1m 31s}} 		
&  $\mathbf{4}$ & 
 {\small $\mathbf{0.42\%}$} 
&  {\small $\mathbf{242}$} & 
{\small 
$\mathbf{21}$} &  {\small 
$\mathbf{28}$} & \cellcolor{mygreen!20} {\small 
\textbf{1m 36s}} \\		

\rowcolor{gray!10}
\cellcolor{mygreen!20} & 
$\mathbf{3}$ & {\small 
$\mathbf{0.23\%}$} 
& 
{\small 
$\mathbf{217}$} &  {\small 
$\mathbf{8}$} 
& {\small $\mathbf{15}$} & 
\cellcolor{mygreen!20}
{\small \textbf{32s}}		
& $\mathbf{10}$ & {\small $\mathbf{0.95\%}$} & 
{\small $\mathbf{260}$} 
& 
{\small 
$\mathbf{42}$} & {\small $\mathbf{67}$} & 
\cellcolor{mygreen!20} {\small 
\textbf{3m 2s}} \\

\cellcolor{mygreen!20} & $12$ & {\small $0.30\%$} 
& 
{\small 
$213$} & {\small $17$} 
& {\small $23$} & 
{\small 5m 45s} 		
&  $\mathbf{2}$ & 
 {\small $\mathbf{0.41\%}$} 
&  {\small $\mathbf{226}$} & 
 {\small 
$\mathbf{20}$} &  {\small 
$\mathbf{26}$} & \cellcolor{mygreen!20} {\small 
\textbf{1m 56s}} \\		

\rowcolor{gray!10}
\cellcolor{mygreen!20} &  
$\mathbf{6}$ & {\small 
$\mathbf{0.20\%}$} 
& 
{\small 
$\mathbf{193}$} &  {\small 
$\mathbf{11}$} 
&  {\small $\mathbf{11}$} & 
\cellcolor{mygreen!20}
{\small \textbf{52s}} 		
&  $\mathbf{3}$ &  {\small $\mathbf{0.48\%}$} 
& 
 {\small $\mathbf{228}$} & 
{\small 
$\mathbf{19}$} &  {\small 
$\mathbf{34}$} & 
\cellcolor{mygreen!20} {\small 
\textbf{39s}}  \\		

\multirow{-8}{*}{\cellcolor{mygreen!20}$\leq 
1000$}
&  $\mathbf{5}$ & 
 {\small $\mathbf{0.16\%}$} 
& 
 {\small $\mathbf{193}$} & 
{\small 
$\mathbf{9}$} &  {\small 
$\mathbf{10}$} & 
\cellcolor{mygreen!20} {\small 
\textbf{10s}}
&  $\mathbf{1}$ & 
 {\small $\mathbf{0.09\%}$} 
& 
 {\small $\mathbf{206}$} & 
{\small 
$\mathbf{5}$} &  {\small 
$\mathbf{5}$} & 
\cellcolor{mygreen!20} {\small 
\textbf{51s}}
	\\ \cline{2-13}
	
\cellcolor{mygreen!20}\textsc{min} && {\small 
$0.09\%$} &  &  &  & {\small 
		10s} && {\small $0.09\%$} &  &  &  
		& {\small 
		39s} \\
\rowcolor{gray!10}
\cellcolor{mygreen!20}\textsc{median} && {\small 
$0.19\%$} &  &  &  & {\small 
		1m 12s} && {\small $0.45\%$} &  &  & 
		 & {\small 
		1m 46s} \\
\cellcolor{mygreen!20}\textsc{max} && {\small 
$0.33\%$} &  &  &  & {\small 
		5m 45s} && {\small $0.95\%$} &  &  & 
		 & {\small 
		18m 18s} \\

	\hline%
	
\rowcolor{gray!10}
\cellcolor{mypurple!20} &  
$\mathbf{10}$ &  {\small 
$\mathbf{12.08\%}$ } & 
 {\small 
$\mathbf{321}$} & 
{\small $\mathbf{85}$} & 
{\small 
$\mathbf{150}$} & \cellcolor{mypurple!20}
{\small \textbf{10m 30s}} 			
&  $\mathbf{11}$ & 
 {\small $\mathbf{27.59\%}$ 
} & 
 {\small $\mathbf{498}$} & 
{\small 
$\mathbf{234}$} &  {\small 
$\mathbf{333}$} & 
\cellcolor{mypurple!20} {\small 
\textbf{1h 16m 41s}} \\	

\cellcolor{mypurple!20} &  
$\mathbf{11}$ &  {\small 
$\mathbf{7.43\%}$  } 
& 
{\small 
$\mathbf{329}$} &  {\small 
$\mathbf{75}$} 
& {\small $\mathbf{125}$} 
& 
\cellcolor{mypurple!20}
{\small \textbf{22m 33s}}		
&  $\mathbf{7}$ & 
 {\small $\mathbf{30.77\%}$ 
} & 
 {\small $\mathbf{394}$} & 
 {\small 
$\mathbf{70}$} &  {\small 
$\mathbf{228}$} & 
\cellcolor{mypurple!20} {\small 
\textbf{6m 34s}} \\	

\rowcolor{gray!10}
\cellcolor{mypurple!20} & $2$ & {\small $2.21\%$  
} & 
{\small 
$217$} & {\small $15$} 
& {\small $16$} & 
{\small 39s} 		
& $\mathbf{7}$ & 
 {\small $\mathbf{33.17\%}$ 
} & 
 {\small $\mathbf{435}$} & 
 {\small 
$\mathbf{185}$} &  {\small 
$\mathbf{327}$} & 
\cellcolor{mypurple!20} {\small 
\textbf{6h 51m 50s}} \\	

\cellcolor{mypurple!20} &  
$\mathbf{10}$ &  {\small 
$\mathbf{4.29\%}$  } 
& 
{\small 
$\mathbf{239}$} &  {\small 
$\mathbf{24}$} 
&  {\small $\mathbf{33}$} & 
\cellcolor{mypurple!20}
{\small \textbf{4m 4s}} 		
&  $\mathbf{6}$ & 
{\small $\mathbf{16.45\%}$ 
} & 
 {\small $\mathbf{448}$} & 
 {\small 
$\mathbf{162}$} &  {\small 
$\mathbf{260}$} & 
\cellcolor{mypurple!20} {\small 
\textbf{18m 25s}} \\	

\rowcolor{gray!10}
\cellcolor{mypurple!20} & $4$ & {\small $9.73\%$  
} & 
{\small 
$268$} & {\small $29$} 
& {\small $87$} & 
{\small 4m 0s}		
& $\mathbf{13}$ & {\small $\mathbf{30.17\%}$  } & 
{\small $\mathbf{418}$} & 
{\small 
$\mathbf{141}$} & {\small $\mathbf{332}$} & 
\cellcolor{mypurple!20} {\small 
\textbf{43m 12s}} \\

\cellcolor{mypurple!20} & 
$\mathbf{14}$ &  {\small 
$\mathbf{14.96\%}$ } 
& 
{\small 
$\mathbf{403}$} & 
{\small $\mathbf{116}$} &  
{\small 
$\mathbf{231}$} & \cellcolor{mypurple!20}
{\small \textbf{1h 9m 45s}} 	
&  $\mathbf{5}$ & 
 {\small $\mathbf{17.24\%}$ 
} & 
 {\small $\mathbf{460}$} & 
 {\small 
$\mathbf{91}$} & {\small 
$\mathbf{217}$} & 
\cellcolor{mypurple!20} {\small 
\textbf{12m 53s}} \\	

\rowcolor{gray!10}
\cellcolor{mypurple!20} &  
$\mathbf{7}$ &  {\small 
$\mathbf{5.83\%}$  } & 
 {\small 
$\mathbf{313}$} &  {\small 
$\mathbf{92}$} 
&  {\small $\mathbf{115}$} 
& 
\cellcolor{mypurple!20} {\small \textbf{4m 
17s}} 		
& $8$ & {\small $19.23\%$ } & {\small $363$} & 
{\small 
$79$} & {\small $189$} & {\small 
7m 24s} \\

\multirow{-8}{*}{\cellcolor{mypurple!20}$> 1000$}
&  $\mathbf{9}$ & 
 {\small $\mathbf{4.61\%}$  
} & 
 {\small $\mathbf{264}$} & 

{\small 
$\mathbf{50}$} &  {\small 
$\mathbf{74}$} & 
\cellcolor{mypurple!20} {\small 
\textbf{5m 38s}}	
& $2$ & {\small $4.52\%$  } & {\small $331$} & 
{\small 
$45$} & {\small $95$} & {\small 
4m 44s} 
	\\\cline{2-13}
	
\cellcolor{mypurple!20} \textsc{min} && {\small 
$2.21\%$} &  &  &  & {\small  
	39s} && {\small $4.52\%$} &  &  &  
	& {\small 
	4m 44s} \\
\rowcolor{gray!10}
\cellcolor{mypurple!20} \textsc{median} && {\small 
$6.63\%$} &  &  &  & {\small  
		4m 58s} && {\small $23.41\%$} &  &  & 
		 & {\small 
		15m 39s} \\
\cellcolor{mypurple!20} \textsc{max} && {\small 
$14.96\%$} &  &  &  & {\small  
	1h 9m 45s} && {\small $31.17\%$} &  &  
	&  & {\small 
	6h 51m 50s} \\
	\hline
\end{tabular}
}
\end{table}

\subsection{RQ1: Detecting Seeded Bias}
Tables~\ref{tbl:germanTfull0},~\ref{tbl:germanTfull1}
and~\ref{tbl:germanTfull2} show the analysis 
results for all eight models trained on the German 
Credit
dataset.
Column $\upperbound$ shows the chosen 
	upper 
	bound for each model. 
As before,
	column \textsc{bias} shows the detected bias, 
in 
	percentage of
	the entire input space. We also again show 
	minimum, 
	median, and maximum bias percentage for each 
	credit 
	request group.
	Column $\size{\text{C}}$ shows the 
	total number of analyzed (i.e., completed) input 
	space partitions. Column $\size{\text{F}}$ 
shows 
	the total number of abstract activation patterns 
	(left) and feasible input partitions (right) that the 
	backward analysis had to explore. 
	Finally, column \textsc{time} shows the analysis 
	running time. Again, we also show minimum, 
	median, and maximum running time for each 
	credit 
	request group.
For all models, we 
highlighted across all tables the choice 
of the abstract domain that 
entailed the shortest 
analysis time.
\end{toappendix}

\paragraph{\textbf{RQ2: Answering Bias 
Queries.}}
To further evaluate the precision of our approach, we created queries
concerning bias within specific groups of people, 
each corresponding
to a subset of the entire input space.
We used the \textsc{compas}
dataset\footnote{\url{https://www.propublica.org/datastore/dataset/compas-recidivism-risk-score-data-and-analysis}}
from ProPublica for this experiment. The data assigns a
three-valued recidivism-risk score (high, medium, 
and low) indicating how likely criminals are to 
re-offend. 
The data includes both
personal attributes (e.g., age and race) as well as criminal history (e.g.,
number of priors and violent crimes). As for RQ1, we trained models
both on fair and biased data. Here, we considered race as the
sensitive feature. We seeded bias in the fair data by randomly
assigning high recidivism risk to African Americans until we reached a
20\% causal-unfairness score of the dataset.
The median classification accuracy of the 3-class models (19 inputs and 4 hidden layers with 5 nodes each) trained on
fair and biased data was 55\% and 56\%, 
respectively. Accuracy does
not improve with larger networks --- we tried up to
100 hidden layers with 100 nodes each.

\begin{table}[t]
	\caption{Queries on Models Trained
		on Fair and Race-Biased Data
		(ProPublica's \textsc{compas}
		Data)}\label{tbl:compasT}
\centering
\resizebox{\textwidth}{!}{%
\begin{tabular}{|c|cc|cc|cc|cc|cc|cc|c}
\cline{1-13}
\multirow{3}{*}{$\textsc{query}$} & 
\multicolumn{4}{c|}{\textsc{boxes}} & 
\multicolumn{4}{c|}{\textsc{symbolic}} & 
\multicolumn{4}{c|}{\textsc{deeppoly}} &
\\
 &  
 \multicolumn{2}{c|}{\textsc{fair data} } 
 & 
 \multicolumn{2}{c|}{\textsc{biased data}} &
\multicolumn{2}{c|}{\textsc{fair data} } 
& 
\multicolumn{2}{c|}{\textsc{biased data}} &
\multicolumn{2}{c|}{\textsc{fair data} } 
& 
\multicolumn{2}{c|}{\textsc{biased data}} &
\\
& 
\textsc{bias} & 
	\textsc{time} &
\textsc{bias} & 
	\textsc{time} & 
\textsc{bias} & 
\textsc{time} &
\textsc{bias} & 
\textsc{time} & 
\textsc{bias} & 
\textsc{time} &
\textsc{bias} & 
\textsc{time} &\\
\hline%
\rowcolor{gray!10}
\cellcolor{myyellow!20}
 & {\small $0.22\%$} & {\small 
	24m 32s} & {\small $0.12\%$} & {\small 
	14m 53s} & 
{\small $0.22\%$} & {\small 
	11m 34s} &  {\small 
	$\mathbf{0.12\%}$} & \cellcolor{myyellow!20} 
	{\small 
	\textbf{7m 14s}} & 
 {\small $\mathbf{0.22\%}$} 
& \cellcolor{myyellow!20} {\small 
	\textbf{5m 18s}} & {\small $0.12\%$} & {\small 
	8m 46s} & 
	\multicolumn{1}{c|}{\textsc{min}} \\	
\multirow{-2}{*}{\cellcolor{myyellow!20}$\textsc{age}
	< 
	25$} 
& {\small $0.31\%$} & 
{\small 
	1h 54m 48s} & {\small $0.99\%$} & {\small 
	57m 33s} & 
 {\small $\mathbf{0.32\%}$} 
& \cellcolor{myyellow!20}
{\small 
	\textbf{36m 0s}} & {\small $0.99\%$} & {\small 
	20m 43s} & 
{\small $0.32\%$} & {\small 
	47m 16s} &  {\small 
	$\mathbf{0.99\%}$} & \cellcolor{myyellow!20} 
	{\small 
	\textbf{16m 38s}} & 
\multicolumn{1}{c|}{\textsc{median}} 
	\\	
\rowcolor{gray!10}
\multirow{-2}{*}{\cellcolor{myyellow!20}$\textsc{race
 bias}?$}
& {\small $2.46\%$} & {\small 
	2h 44m 11s} & {\small $8.33\%$} & {\small 
	5h 29m 19s} &
{\small $2.46\%$} & {\small 
	2h 17m 3s} & {\small $8.50\%$} & {\small 
	3h 34m 50s} &
 {\small $\mathbf{2.12\%}$} 
& \cellcolor{myyellow!20} {\small 
	\textbf{1h 11m 43s}} & 
	{\small $\mathbf{6.48\%}$} & 
	\cellcolor{myyellow!20} {\small 
	\textbf{2h 5m 5s}} & 
\multicolumn{1}{c|}{\textsc{max}} 
	\\	
\hline%
\cellcolor{mygreen!20}
 & 
 {\small $2.60\%$} & {\small 
 	24m 14s} & {\small $4.51\%$} & {\small 
 	34m 23s} &
{\small $2.64\%$} & {\small 
	25m 13s} & {\small $5.20\%$} & {\small 
	29m 19s} &
 {\small $\mathbf{2.70\%}$} 
& \cellcolor{mygreen!20} {\small 
	\textbf{19m 47s}} & 
	{\small $\mathbf{5.22\%}$} & 
	\cellcolor{mygreen!20} {\small 
	\textbf{20m 51s}} & 
\multicolumn{1}{c|}{\textsc{min}} \\	
\rowcolor{gray!10}
\cellcolor{mygreen!20} & {\small $6.08\%$} & 
{\small 
	1h 49m 42s} & {\small $6.95\%$} & {\small 
	2h 3m 39s} &
 {\small $\mathbf{6.77\%}$} 
& \cellcolor{mygreen!20} {\small 
	\textbf{1h 1m 51s}} & {\small $7.02\%$} & 
	{\small 
	1h 2m 26s} &
{\small $6.77\%$} & {\small 
	1h 13m 31s} &  {\small 
	$\mathbf{7.00\%}$} & \cellcolor{mygreen!20} 
	{\small 
	\textbf{47m 28s}} & 
\multicolumn{1}{c|}{\textsc{median}} 
	\\	
\multirow{-3}{*}{\cellcolor{mygreen!20}$\begin{matrix}\textsc{male}\\
	\textsc{age bias}?\end{matrix}$}
& {\small $8.00\%$} & {\small 
	5h 56m 6s} & {\small $12.56\%$} & {\small 
	8h 26m 55s} &
 {\small $\mathbf{8.40\%}$} 
& \cellcolor{mygreen!20} {\small 
	\textbf{2h 2m 22s}} & {\small $12.71\%$} & 
	{\small 
	4h 55m 35s} &
{\small $8.84\%$} & {\small 
	2h 20m 23s} &  {\small 
	$\mathbf{12.88\%}$} & \cellcolor{mygreen!20} 
	{\small 
	\textbf{3h 25m 21s}} & 
\multicolumn{1}{c|}{\textsc{max}} 
	\\	
\hline%
\rowcolor{gray!10}
\cellcolor{mypurple!20} 
& {\small $2.18\%$} & 
{\small 2h 54m 18s} & {\small $2.92\%$} & 
{\small 
	46m 53s} &
{\small $2.18\%$} & 
{\small 1h 20m 41s} & {\small $2.92\%$} & 
{\small 
	30m 23s} &
  {\small $\mathbf{2.18\%}$} 
& \cellcolor{mypurple!20}  {\small 
	\textbf{18m 26s}} &  
	{\small 
	$\mathbf{2.92\%}$} & \cellcolor{mypurple!20}  
	{\small 
	\textbf{15m 29s}} & 
\multicolumn{1}{c|}{\textsc{min}} \\	
\multirow{-2}{*}{\cellcolor{mypurple!20}$\textsc{caucasian}$}
& {\small $2.95\%$} & 
{\small 6h 56m 44s} & {\small $4.21\%$} & 
{\small 
	3h 50m 38s} & 
{\small $2.95\%$} & 
{\small 4h 12m 28s} & {\small $4.21\%$} & 
{\small 
	3h 32m 52s} & 
  {\small $\mathbf{2.95\%}$} 
& \cellcolor{mypurple!20}  {\small 
	\textbf{2h 36m 1s}} & 
	{\small 
	$\mathbf{4.21\%}$} & \cellcolor{mypurple!20}  
	{\small 
	\textbf{1h 34m 7s}} & 
	\multicolumn{1}{c|}{\textsc{median}} \\	
\rowcolor{gray!10}
\multirow{-2}{*}{\cellcolor{mypurple!20}$\textsc{priors
 bias}?$}
& {\small $\mathbf{5.36\%}$} & 
\cellcolor{mypurple!20}
{\small \textbf{45h 2m 12s}} & {\small $6.98\%$} & 
{\small 
	70h 50m 10s} & 
{\small $5.36\%$} & 
{\small 60h 53m 6s} & {\small $6.98\%$} & 
{\small 
	49h 51m 42s} & 
  {\small $5.36\%$} 
& {\small 
	52h 10m 2s} &  
	{\small 
	$\mathbf{6.95\%}$} & \cellcolor{mypurple!20}  
	{\small 
	\textbf{17h 48m 22s}} & 
	\multicolumn{1}{c|}{\textsc{max}} \\	
\hline
\end{tabular}
}
\end{table}

To analyze these models, we used a lower bound 
$\lowerbound$ of 0, and an
upper bound $\upperbound$ between 7 and 19.
Table~\ref{tbl:compasT} shows the results of our 
analysis (i.e.,
columns shown as in Table~\ref{tbl:germanT})
for three queries:
\begin{description}
\item[$Q_A$:] Is there bias with respect to 
\emph{race}
for people younger than 25?
\item[$Q_B$:] Is there bias 
with respect to \emph{age}
for males?
\item[$Q_C$:] Is there bias 
with respect to the number of priors 
for Caucasians?
\end{description}
%
%
For $Q_A$, the analysis detects only a small 
percentage of race bias in the
fair models, but as intended, the race bias is 
found to be
  significantly higher (about three times as much 
  median bias) for the biased models. 
In contrast,
for $Q_B$, the analysis finds a comparable amount 
of age bias across both sets of models. This 
becomes more evident when scaling the median 
bias with respect to the analyzed input space (males 
correspond to $50\%$ of the input space): the 
smallest median bias for the models trained on fair 
data is 
$12.16\%$ (for the \textsc{boxes} domain) and the 
largest median bias for the models trained on biased 
data is 
$14.04\%$ (for the \textsc{symbolic} domain).
This bias is not intended and was either present in 
the original data or introduced by the 
training process (or both).
Finally, for $Q_C$, the
  analysis detects significant bias across both sets 
  of models with respect to the number of
  priors. 
When considering the analyzed input space 
(Caucasians represent $1/6$ of the entire
input 
space), this translates to $17.7\%$ median bias for 
the models trained on fair data and $25.26\%$ for 
the models trained on biased data.
This
bias is intended and present in the original data: as 
one would expect,
recidivism risk differs for different numbers of priors.
Overall, these results \emph{demonstrate the effectiveness of our
  analysis in answering specific bias queries}.

\begin{toappendix}
	\begin{table}[t]
		\caption{Queries on Neural Networks Trained
			on Fair and Race-Biased Data
			(ProPublica's \textsc{compas}
			Data) --- Full
			Table (\textsc{boxes} 
			Domain)}\label{tbl:compasTfull0}
\centering
\resizebox{0.75\textwidth}{!}{%
\begin{tabular}{|c|c|cc|cc|c|c|cc|cc|c|}
	\hline
	\multirow{3}{*}{$\textsc{query}$} &  	
	\multicolumn{12}{c|}{\textsc{boxes} }  
	\\
	& \multicolumn{6}{c|}{\textsc{fair data} } 
	& 
	\multicolumn{6}{c|}{\textsc{biased data}}
	\\
	& $\upperbound$ &
	\textsc{bias} & \size{\text{C}} & 
	\multicolumn{2}{c|}{$\size{\text{F}}$} & 
	\textsc{time} & $\upperbound$ &
	\textsc{bias} & \size{\text{C}} & 
	\multicolumn{2}{c|}{$\size{\text{F}}$} & 
	\textsc{time} \\
	\hline%

	\rowcolor{gray!10}
\cellcolor{myyellow!20} 
	& $10$ & {\small $0.22\%$} & {\small $93$} & 
	{\small $83$} & {\small $46$} &  {\small 2h 0m 
	58s} 		
& $10$ & {\small $0.83\%$} & {\small $65$} & 
{\small $27$} & {\small $64$} &  {\small 5h 29m 
19s} 
\\

\cellcolor{myyellow!20}
& $10$ & {\small $0.64\%$} & {\small $98$} & 
{\small $60$} & {\small $95$} &  {\small 1h 48m 
37s  
}		
& $10$ & {\small $8.33\%$} & {\small $66$} & 
{\small $37$} & {\small $65$} &  {\small 27m 14s} 
\\
\rowcolor{gray!10}
\cellcolor{myyellow!20}
& $10$ & {\small $0.22\%$} & {\small $51$} & 
{\small $22$} & {\small $30$} &  {\small 24m 32s   
} 		
& $10$ & {\small $1.15\%$} & {\small $28$} & 
{\small $12$} & {\small $20$} &  {\small 14m 53s} 
\\		
\cellcolor{myyellow!20}
& $10$ & {\small $0.23\%$} & {\small $191$} & 
{\small $85$} & {\small $104$} &  {\small 2h 44m 
11s   
} 		
& $10$ & {\small $0.42\%$} & {\small $21$} & 
{\small $12$} & {\small $20$} &  {\small 44m 55s} 
\\		
\rowcolor{gray!10}
\cellcolor{myyellow!20}
& $10$ & {\small $0.29\%$} & {\small $221$} & 
{\small $113$} & {\small $169$} &  {\small 2h 34m 
6s   
}		
& $10$ & {\small $0.12\%$} & {\small $70$} & 
{\small $34$} & {\small $69$} &  {\small 26m 0s} 
\\
\cellcolor{myyellow!20}
& $10$ & {\small $0.33\%$} & {\small $107$} & 
{\small $56$} & {\small $84$} &  {\small 2h 30m 
28s   
} 		
& $10$ & {\small $1.54\%$} & {\small $60$} & 
{\small $33$} & {\small $59$} &  {\small 1h 17m 
56s} 
\\		
\rowcolor{gray!10}
\cellcolor{myyellow!20}
& $10$ & {\small $1.19\%$} & {\small $70$} & 
{\small $28$} & {\small $69$} &  {\small 41m 20s   
} 		
& $10$ & {\small $3.25\%$} & {\small $206$} & 
{\small $155$} & {\small $205$} &  {\small 1h 10m 
10s} 
\\		
\multirow{-8}{*}{\cellcolor{myyellow!20}$\begin{matrix}\textsc{age}
	 < 
	25\\\textsc{race bias}?\end{matrix}$}
& $10$ & {\small $2.46\%$} & {\small $32$} & 
{\small $20$} & {\small $31$} &  {\small 36m 6s   }
& $10$ & {\small $0.18\%$} & {\small $28$} & 
{\small $13$} & {\small $27$} &  {\small 3h 8m 
10s} 
	\\\cline{2-13}
\cellcolor{myyellow!20}\textsc{min} && {\small 
$0.22\%$} &&&& {\small 
	24m 32s} && {\small $0.12\%$} &&&& {\small 
	14m 53s} \\	
\rowcolor{gray!10}
\cellcolor{myyellow!20}\textsc{median} && {\small 
$0.31\%$} &&&& {\small 
	1h 54m 48s} && {\small $0.99\%$} &&&& {\small 
	57m 33s} \\	
\cellcolor{myyellow!20}\textsc{max} && {\small 
$2.46\%$} &&&& {\small 
	2h 44m 11s} && {\small $8.33\%$} &&&& {\small 
	5h 29m 19s} \\	
	\hline%

	\rowcolor{gray!10}
\cellcolor{mygreen!20}
	&   $\mathbf{10}$ 
	&   {\small 
	$\mathbf{3.68\%}$} &  
	{\small $\mathbf{282}$} 
	& 
	{\small 
	$\mathbf{115}$} & 
	{\small $\mathbf{281}$} &  
	\cellcolor{mygreen!20} 
	{\small  \textbf{1h 23m 52s}       }
& $10$ & {\small $4.51\%$} & {\small $336$} & 
{\small 
$111$} & {\small $335$} &  {\small  
6h 36m 43s    } 
\\	
\cellcolor{mygreen!20} & $10$ & {\small $7.00\%$} 
& 
{\small 
$358$} & {\small $117$} & {\small 
$357$} &  
{\small  1h 55m 55s     }
& $10$ & {\small $12.56\%$} & {\small $478$} & 
{\small 
$135$} & {\small $477$} &  {\small  
1h 3m 18s   } \\	
\rowcolor{gray!10}
\cellcolor{mygreen!20} 
&  $\mathbf{10}$ 
&   {\small 
$\mathbf{7.92\%}$} &   
{\small $\mathbf{237}$} & 
  {\small 
$\mathbf{57}$} &  {\small 
$\mathbf{232}$} &  \cellcolor{mygreen!20}  {\small  
\textbf{24m 14s}      }
& $10$ & {\small $7.00\%$} & {\small $179$} & 
{\small 
$57$} & {\small $172$} &  {\small  
34m 23s   } \\	
\cellcolor{mygreen!20} 
& $10$ & {\small $2.60\%$} & 
{\small $776$} 
& 
{\small 
$265 $} & {\small $478$} &  {\small  
3h 24m 34s       }
& $10$ & {\small $6.90\%$} & {\small $119$} & 
{\small 
$35$} & {\small $118$} &  {\small  
4h 1m 53s    } \\	
\rowcolor{gray!10}
\cellcolor{mygreen!20} 
& $10$ & {\small $4.29\%$} & {\small $1175$} & 
{\small 
$410$} & {\small $951$} &  {\small  
3h 32m 8s    }
& $10$ & {\small $4.96\%$} & {\small $303$} & 
{\small 
$75$} & {\small $264$} &  {\small  
1h 41m 39s   } \\	
\cellcolor{mygreen!20} 
& $10$ & {\small $5.16\%$} & {\small $397$} & 
{\small 
$100$} & {\small $306$} &  {\small  
5h 56m 6s   }
& $10$ & {\small $7.89\%$} & {\small $294$} & 
{\small 
$50$} & {\small $293$} &  {\small  
8h 26m 55s } 
\\	
\rowcolor{gray!10}
\cellcolor{mygreen!20} 
& $10$ & {\small $7.54\%$} & {\small $338$} & 
{\small 
$84$} & {\small $337$} &  {\small  
1h 1m 14s     }
& $10$ & {\small $6.31\%$} & {\small $484$} & 
{\small 
$228$} & {\small $483$} &  {\small  
1h 46m 51s   } \\	
	\multirow{-8}{*}{\cellcolor{mygreen!20} 
	$\begin{matrix}\textsc{male}\\\textsc{age
		bias}?\end{matrix}$}
& $10$ & {\small $8.00\%$} & {\small $415$} & 
{\small 
$103$} & {\small $414$} &  {\small  
1h 43m 28s       }
& $10$ & {\small $12.24\%$} & {\small $377$} & 
{\small 
$143$} & {\small $376$} &  {\small  
2h 20m 27s   }
	\\\cline{2-13}
\cellcolor{mygreen!20}\textsc{min} && {\small 
$2.60\%$} &&&& {\small 
	24m 14s} && {\small $4.51\%$} &&&& {\small 
	34m 23s} \\	
\rowcolor{gray!10}
\cellcolor{mygreen!20} \textsc{median} && {\small 
$6.08\%$} &&&& {\small 
	1h 49m 42s} && {\small $6.95\%$} &&&& {\small 
	2h 3m 39s} \\	
\cellcolor{mygreen!20} \textsc{max} && {\small 
$8.00\%$} &&&& {\small 
	5h 56m 6s} && {\small $12.56\%$} &&&& 
	{\small 
	8h 26m 55s} \\	
	\hline%
	
	\rowcolor{gray!10}
\cellcolor{mypurple!20}
	& $12$ & {\small $2.18\%$} & {\small $75$} & 
	{\small 
		$29$} & {\small $74$} &  {\small  7h 3m 17s     
		}
& $14$ & {\small $2.92\%$} & {\small $35$} & 
{\small 
	$16$} & {\small $34$} &  {\small     5h 22m 42s  
} 
\\	
\cellcolor{mypurple!20}
& $14$ & {\small $3.66\%$} & {\small $76$} & 
{\small 
	$39$} & {\small $75$} &  {\small     6h 50m 
	10s     
}
& $14$ & {\small $6.98\%$} & {\small $53$} & 
{\small 
	$23$} & {\small $52$} &  {\small       1h 38m 57s
} 
\\	
\rowcolor{gray!10}
\cellcolor{mypurple!20}
& $14$ & {\small $2.73\%$} & {\small $51$} & 
{\small 
	$25$} & {\small $46$} &  {\small    2h 54m 18s   }
& $13$ & {\small $4.43\%$} & {\small $40$} & 
{\small 
	$11$} & {\small $39$} &  {\small       1h 8m 37s  } 
\\	
\cellcolor{mypurple!20}
& $18$ & {\small $2.19\%$} & {\small $46$} & 
{\small 
	$16$} & {\small $45$} &  {\small     37h 15m 28s
}
& $9$ & {\small $3.40\%$} & {\small $67$} & 
{\small 
	$23$} & {\small $66$} &  {\small       46m 53s   } 
\\	
\rowcolor{gray!10}
\cellcolor{mypurple!20}
& $\mathbf{19}$ & {\small $\mathbf{3.17\%}$} & 
{\small $\mathbf{34}$} & 
{\small 
	$\mathbf{11}$} & {\small $\mathbf{33}$} &  
	\cellcolor{mypurple!20} {\small     \textbf{45h 
	2m 12s}         }
& $14$ & {\small $3.09\%$} & {\small $54$} & 
{\small 
	$21$} & {\small $53$} &  {\small   2h 29m 32s
} 
\\	
\cellcolor{mypurple!20}
& $12$ & {\small $2.45\%$} & {\small $128$} & 
{\small 
	$42$} & {\small $110$} &  {\small    8h 41m 
	43s    }
& $15$ & {\small $5.79\%$} & {\small $57$} & 
{\small 
	$32$} & {\small $56$} &  {\small         5h 11m 44s
} 
\\	
\rowcolor{gray!10}
\cellcolor{mypurple!20}
& $13$ & {\small $3.94\%$} & {\small $62$} & 
{\small 
	$28$} & {\small $61$} &  {\small     3h 7m 59s   }
& $19$ & {\small $5.10\%$} & {\small $47$} & 
{\small 
	$30$} & {\small $46$} &  {\small      70h 50m 10s
} \\	
	
\multirow{-8}{*}{\cellcolor{mypurple!20}$\begin{matrix}\textsc{caucasian}\\\textsc{priors
		bias}?\end{matrix}$}
& $15$ & {\small $5.36\%$} & {\small $49$} & 
{\small 
	$20$} & {\small $48$} &  
	{\small          6h 16m 33s     
}
& $17$ & {\small $3.99\%$} & {\small $46$} & 
{\small 
	$22$} & {\small $45$} &  {\small      15h 1m 10s  }
	\\\cline{2-13}
\cellcolor{mypurple!20}\textsc{min} && {\small 
$2.18\%$} &&&& 
{\small 2h 54m 18s} && {\small $2.92\%$} &&&& 
{\small 
	46m 53s} \\	
\rowcolor{gray!10}
\cellcolor{mypurple!20}\textsc{median} && {\small 
$2.95\%$} &&&& 
	{\small 6h 56m 44s} && {\small $4.21\%$} &&&& 
	{\small 
	3h 50m 28s} \\	
\cellcolor{mypurple!20}\textsc{max} && {\small 
$5.36\%$} &&&& 
{\small 45h 2m 12s} && {\small $6.98\%$} &&&& 
{\small 
	70h 50m 10s} \\	
	\hline
\end{tabular}
}
	\end{table}
	\begin{table}[t]
		\caption{Queries on Neural Networks Trained
			on Fair and Race-Biased Data
			(ProPublica's \textsc{compas}
			Data) --- Full
			Table (\textsc{symbolic} 
			Domain)}\label{tbl:compasTfull1}
\centering
\resizebox{0.75\textwidth}{!}{%
\begin{tabular}{|c|c|cc|cc|c|c|cc|cc|c|}
	\hline
	\multirow{3}{*}{$\textsc{query}$} &  	
	\multicolumn{12}{c|}{\textsc{symbolic} }  
	\\
	& \multicolumn{6}{c|}{\textsc{fair data} } 
	& 
	\multicolumn{6}{c|}{\textsc{biased data}}
	\\
	& $\upperbound$ &
	\textsc{bias} & \size{\text{C}} & 
	\multicolumn{2}{c|}{$\size{\text{F}}$} & 
	\textsc{time} & $\upperbound$ &
	\textsc{bias} & \size{\text{C}} & 
	\multicolumn{2}{c|}{$\size{\text{F}}$} & 
	\textsc{time} \\
	\hline%

	\rowcolor{gray!10}
\cellcolor{myyellow!20} 
	& $10$ & {\small $0.23\%$} & {\small $57$} & 
	{\small $17$} & {\small $24$} &  {\small 2h 17m 
	3s} 		
&   $\mathbf{10}$ & 
  {\small $\mathbf{0.83\%}$} 
&  {\small $\mathbf{25}$} & 
  {\small $\mathbf{11}$} & 
  {\small $\mathbf{24}$} & 
\cellcolor{myyellow!20}  {\small \textbf{1h 32m 
10s}} 
\\

\cellcolor{myyellow!20}
& $10$ & {\small $0.75\%$} & {\small $44$} & 
{\small $20$} & {\small $24$} &  {\small 19m 16s   
}		
& $10$ & {\small $8.50\%$} & {\small $60$} & 
{\small $27$} & {\small $34$} &  {\small 18m 48s} 
\\
\rowcolor{gray!10}
\cellcolor{myyellow!20}
&   $\mathbf{10}$ & 
 {\small $\mathbf{0.22\%}$} 
&  {\small $\mathbf{41}$} & 
 {\small $\mathbf{13}$} & 
 {\small $\mathbf{15}$} &  
\cellcolor{myyellow!20} {\small \textbf{11m 34s}   
} 		
& $10$ & {\small $1.15\%$} & {\small $24$} & 
{\small $8$} & {\small $14$} &  {\small 19m 50s} 
\\		
\cellcolor{myyellow!20}
& $10$ & {\small $0.26\%$} & {\small $122$} & 
{\small $33$} & {\small $36$} &  {\small 54m 19s   
} 		
&   $\mathbf{10}$ & 
 {\small $\mathbf{0.42\%}$} 
&  {\small $\mathbf{17}$} & 
 {\small $\mathbf{13}$} & 
 {\small $\mathbf{16}$} &  
\cellcolor{myyellow!20} {\small \textbf{7m 14s}} 
\\		
\rowcolor{gray!10}
\cellcolor{myyellow!20}
&  $\mathbf{10}$ & 
 {\small $\mathbf{0.30\%}$} 
&  {\small $\mathbf{148}$} 
& 
 {\small $\mathbf{54}$} & 
 {\small $\mathbf{63}$} & 
\cellcolor{myyellow!20}  {\small \textbf{50m 54s}   
}		
& $10$ & {\small $0.12\%$} & {\small $32$} & 
{\small $9$} & {\small $14$} &  {\small 21m 35s} 
\\
\cellcolor{myyellow!20}
&  $\mathbf{10}$ & 
  {\small $\mathbf{0.33\%}$} 
&   {\small $\mathbf{64}$} & 
 {\small $\mathbf{15}$} & 
 {\small $\mathbf{17}$} & 
\cellcolor{myyellow!20}  {\small \textbf{53m 14s}   
} 		
& $10$ & {\small $1.59\%$} & {\small $34$} & 
{\small $19$} & {\small $30$} &  {\small 3h 34m 
50s} 
\\		
\rowcolor{gray!10}
\cellcolor{myyellow!20}
& $10$ & {\small $1.19\%$} & {\small $27$} & 
{\small $14$} & {\small $24$} &  {\small 12m 38s   
} 		
& $10$ & {\small $3.36\%$} & {\small $162$} & 
{\small $96$} & {\small $122$} &  {\small 41m 
9s} 
\\		
\multirow{-8}{*}{\cellcolor{myyellow!20}$\begin{matrix}\textsc{age}
	 < 
	25\\\textsc{race bias}?\end{matrix}$}
& $10$ & {\small $2.46\%$} & {\small $17$} & 
{\small $13$} & {\small $16$} &  {\small 21m 6s   }
& $10$ & {\small $0.18\%$} & {\small $17$} & 
{\small $11$} & {\small $16$} &  {\small 15m 8s} 
	\\\cline{2-13}
\cellcolor{myyellow!20}\textsc{min} && {\small 
$0.22\%$} &&&& {\small 
	11m 34s} && {\small $0.12\%$} &&&& {\small 
	7m 14s} \\	
\rowcolor{gray!10}
\cellcolor{myyellow!20}\textsc{median} && {\small 
$0.32\%$} &&&& {\small 
	36m 0s} && {\small $0.99\%$} &&&& {\small 
	20m 43s} \\	
\cellcolor{myyellow!20}\textsc{max} && {\small 
$2.46\%$} &&&& {\small 
	2h 17m 3s} && {\small $8.50\%$} &&&& {\small 
	3h 34m 50s} \\	
	\hline%

	\rowcolor{gray!10}
\cellcolor{mygreen!20}
	& $10$ & {\small $4.27\%$} & {\small $185$} 
	& 
	{\small 
	$49$} & {\small $136$} &  
	{\small  1h 46m 28s       }
& $10$ & {\small $5.20\%$} & {\small $205$} & 
{\small 
$55$} & {\small $186$} &  {\small  
4h 55m 35s    } 
\\	
\cellcolor{mygreen!20} & $10$ & {\small $7.93\%$} 
& 
{\small 
$166$} & {\small $60$} & {\small 
$127$} &  
{\small  30m 13s       }
& $10$ & {\small $12.71\%$} & {\small $404$} & 
{\small 
$120$} & {\small $359$} &  {\small  
50m 53s   } \\	
\rowcolor{gray!10}
\cellcolor{mygreen!20} 
& $10$ & {\small $8.36\%$} & {\small $197$} & 
{\small 
$39$} & {\small $135$} &  {\small  
38m 46s      }
& $10$ & {\small $7.09\%$} & {\small $168$} & 
{\small 
$59$} & {\small $133$} &  {\small  
29m 19s   } \\	
\cellcolor{mygreen!20} 
& $10$ & {\small $2.64\%$} & 
{\small $734$} 
& 
{\small 
$170 $} & {\small $322$} &  {\small  
2h 2m 22s       }
&  $\mathbf{10}$ & 
 {\small $\mathbf{6.95\%}$} 
&  {\small $\mathbf{70}$} & 
 {\small 
$\mathbf{22}$} &  {\small 
$\mathbf{69}$} &  \cellcolor{mygreen!20} {\small  
\textbf{1h 8m 47s}    } \\	
\rowcolor{gray!10}
\cellcolor{mygreen!20} 
& $10$ & {\small $4.54\%$} & {\small $706$} & 
{\small 
$269$} & {\small $497$} &  {\small  
1h 35m 54s      }
& $10$ & {\small $5.47\%$} & {\small $200$} & 
{\small 
$62$} & {\small $166$} &  {\small  
34m 15s   } \\	
\cellcolor{mygreen!20} 
&  $\mathbf{10}$ & 
 {\small $\mathbf{5.69\%}$} 
&  {\small $\mathbf{227}$} & 
 {\small 
$\mathbf{70}$} &  {\small 
$\mathbf{159}$} &  \cellcolor{mygreen!20} {\small  
\textbf{1h 21m 58s}    }
& $10$ & {\small $8.49\%$} & {\small $200$} & 
{\small 
$47$} & {\small $198$} &  {\small  
1h 21m 58s    } 
\\	
\rowcolor{gray!10}
\cellcolor{mygreen!20} 
& $10$ & {\small $7.84\%$} & {\small $276$} & 
{\small 
$61$} & {\small $249$} &  {\small  
25m 13s       }
& $10$ & {\small $6.64\%$} & {\small $332$} & 
{\small 
$139$} & {\small $262$} &  {\small  
56m 4s   } \\	
	\multirow{-8}{*}{\cellcolor{mygreen!20} 
	$\begin{matrix}\textsc{male}\\\textsc{age
		bias}?\end{matrix}$}
& $\mathbf{10}$ & 
 {\small $\mathbf{8.40\%}$} 
&  {\small $\mathbf{318}$} & 
 {\small 
$\mathbf{92}$} & {\small 
$\mathbf{317}$} &  \cellcolor{mygreen!20} {\small  
\textbf{41m 44s}       }
& $10$ & {\small $12.68\%$} & {\small $286$} & 
{\small 
$101$} & {\small $246$} &  {\small  
1h 39m 0s    }
	\\\cline{2-13}
\cellcolor{mygreen!20}\textsc{min} && {\small 
$2.64\%$} &&&& {\small 
	25m 13s} && {\small $5.20\%$} &&&& {\small 
	29m 19s} \\	
\rowcolor{gray!10}
\cellcolor{mygreen!20} \textsc{median} && {\small 
$6.77\%$} &&&& {\small 
	1h 1m 51s} && {\small $7.02\%$} &&&& {\small 
	1h 2m 26s} \\	
\cellcolor{mygreen!20} \textsc{max} && {\small 
$8.40\%$} &&&& {\small 
	2h 2m 22s} && {\small $12.71\%$} &&&& 
	{\small 
	4h 55m 35s} \\	
	\hline%
	
	\rowcolor{gray!10}
\cellcolor{mypurple!20}
	& $12$ & {\small $2.18\%$} & {\small $46$} & 
	{\small 
		$14$} & {\small $39$} &  {\small         4h 30m 
		18s     }
& $14$ & {\small $2.92\%$} & {\small $44$} & 
{\small 
	$13$} & {\small $43$} &  {\small     5h 29m 22s
} 
\\	
\cellcolor{mypurple!20}
& $12$ & {\small $3.66\%$} & {\small $68$} & 
{\small 
	$34$} & {\small $57$} &  {\small          2h 26m 
43s      
}
& $14$ & {\small $6.98\%$} & {\small $50$} & 
{\small 
	$22$} & {\small $44$} &  {\small          1h 
16m 36s   
} 
\\	
\rowcolor{gray!10}
\cellcolor{mypurple!20}
& $14$ & {\small $2.73\%$} & {\small $51$} & 
{\small 
	$22$} & {\small $43$} &  {\small        2h 
17m 
42s     }
& $\mathbf{12}$ & {\small $\mathbf{4.43\%}$} & 
{\small $\mathbf{45}$} & 
{\small 
	$\mathbf{17}$} & {\small $\mathbf{39}$} & 
	\cellcolor{mypurple!20} {\small          
	\textbf{30m 
23s}   } 
\\	
\cellcolor{mypurple!20}
& $18$ & {\small $2.19\%$} & {\small $47$} & 
{\small 
	$18$} & {\small $46$} &  {\small      35h 44m 
	27s   
}
& $9$ & {\small $3.40\%$} & {\small $46$} & 
{\small 
	$17$} & {\small $45$} &  {\small          35m 
28s   } 
\\	
\rowcolor{gray!10}
\cellcolor{mypurple!20}
& $19$ & {\small $3.17\%$} & {\small $40$} & 
{\small 
	$10$} & {\small $39$} &  {\small     60h 54m 
	6s       }
& $14$ & {\small $3.09\%$} & {\small $51$} & 
{\small 
	$18$} & {\small $48$} &  {\small        1h 40m 33s
} 
\\	
\cellcolor{mypurple!20}
& $12$ & {\small $2.45\%$} & {\small $57$} & 
{\small 
	$21$} & {\small $43$} &  {\small        3h 54m 
	37s     }
& $15$ & {\small $5.79\%$} & {\small $54$} & 
{\small 
	$34$} & {\small $53$} &  {\small         5h 
25m 11s   
} 
\\	
\rowcolor{gray!10}
\cellcolor{mypurple!20}
& $\mathbf{13}$ & {\small $\mathbf{3.94\%}$} & 
{\small $\mathbf{61}$} & 
{\small 
	$\mathbf{28}$} & {\small $\mathbf{53}$} &  
	\cellcolor{mypurple!20} {\small      \textbf{1h 
	20m 
	41s}     }
& $19$ & {\small $5.10\%$} & {\small $49$} & 
{\small 
	$33$} & {\small $48$} &  {\small         49h 
51m 42s 
} \\	
	
\multirow{-8}{*}{\cellcolor{mypurple!20}$\begin{matrix}\textsc{caucasian}\\\textsc{priors
		bias}?\end{matrix}$}
& $15$ & {\small $5.36\%$} & {\small $47$} & 
{\small 
	$18$} & {\small $46$} &  
	{\small          7h 50m 
23s      
}
& $17$ & {\small $3.99\%$} & {\small $46$} & 
{\small 
	$19$} & {\small $44$} &  {\small          13h 
5m 34s   }
	\\\cline{2-13}
\cellcolor{mypurple!20}\textsc{min} && {\small 
$2.18\%$} &&&& 
{\small 1h 20m 41s} && {\small $2.92\%$} &&&& 
{\small 
	30m 23s} \\	
\rowcolor{gray!10}
\cellcolor{mypurple!20}\textsc{median} && {\small 
$2.95\%$} &&&& 
	{\small 4h 12m 28s} && {\small $4.21\%$} &&&& 
	{\small 
	3h 32m 52s} \\	
\cellcolor{mypurple!20}\textsc{max} && {\small 
$5.36\%$} &&&& 
{\small 60h 53m 6s} && {\small $6.98\%$} &&&& 
{\small 
	49h 51m 42s} \\	
	\hline
\end{tabular}
}
	\end{table}
	\begin{table}[t]
	\caption{Queries on Neural Networks Trained
		on Fair and Race-Biased Data
		(ProPublica's \textsc{compas}
		Data) --- Full
		Table (\textsc{deeppoly} 
		Domain)}\label{tbl:compasTfull2}
\centering
\resizebox{0.75\textwidth}{!}{%
\begin{tabular}{|c|c|cc|cc|c|c|cc|cc|c|}
	\hline
	\multirow{3}{*}{$\textsc{query}$} &  	
	\multicolumn{12}{c|}{\textsc{deeppoly} }  
	\\
	& \multicolumn{6}{c|}{\textsc{fair data} } 
	& 
	\multicolumn{6}{c|}{\textsc{biased data}}
	\\
	& $\upperbound$ &
	\textsc{bias} & \size{\text{C}} & 
	\multicolumn{2}{c|}{$\size{\text{F}}$} & 
	\textsc{time} & $\upperbound$ &
	\textsc{bias} & \size{\text{C}} & 
	\multicolumn{2}{c|}{$\size{\text{F}}$} & 
	\textsc{time} \\
	\hline%

	\rowcolor{gray!10}
\cellcolor{myyellow!20} 
	&  $\mathbf{10}$ & 
	 {\small $\mathbf{0.23\%}$} & 
	 {\small $\mathbf{71}$} & 
	{\small $\mathbf{18}$} & 
 {\small $\mathbf{20}$} &  
\cellcolor{myyellow!20} {\small 
\textbf{1h 11m 
	43s}} 		
& $10$ & {\small $0.83\%$} & {\small $43$} & 
{\small $15$} & {\small $33$} &  {\small 2h 5m 
5s} 
\\

\cellcolor{myyellow!20}
&  $\mathbf{10}$ & 
 {\small $\mathbf{0.75\%}$} & 
{\small $\mathbf{33}$} & 
{\small $\mathbf{14}$} &  {\small 
$\mathbf{16}$} & \cellcolor{myyellow!20} {\small 
\textbf{10m 33s}   
}		
&  $\mathbf{10}$ & 
 {\small $\mathbf{6.48\%}$} & 
 {\small $\mathbf{63}$} & 
{\small $\mathbf{25}$} &  {\small 
$\mathbf{34}$} & \cellcolor{myyellow!20} {\small 
\textbf{8m 
46s}} 
\\
\rowcolor{gray!10}
\cellcolor{myyellow!20}
& $10$ & {\small $0.22\%$} & {\small $34$} & 
{\small $17$} & {\small $22$} &  {\small 52m 29s   
} 		
& $\mathbf{10}$ & 
 {\small $\mathbf{1.15\%}$} & 
 {\small $\mathbf{33}$} & 
 {\small $\mathbf{10}$} & 
 {\small $\mathbf{14}$} & 
\cellcolor{myyellow!20} {\small \textbf{11m 58s}} 
\\		
\cellcolor{myyellow!20}
& $\mathbf{10}$ & 
 {\small $\mathbf{0.24\%}$} & 
 {\small $\mathbf{118}$} & 
 {\small $\mathbf{28}$} & 
 {\small $\mathbf{29}$} &  
\cellcolor{myyellow!20} {\small \textbf{42m 2s}  
} 		
& $10$ & {\small $0.42\%$} & {\small $31$} & 
{\small $13$} & {\small $30$} &  {\small 10m 51s} 
\\		
\rowcolor{gray!10}
\cellcolor{myyellow!20}
& $10$ & {\small $0.31\%$} & {\small $117$} & 
{\small $49$} & {\small $54$} &  {\small 1h 0m 2s
}		
& $\mathbf{10}$ & 
 {\small $\mathbf{0.12\%}$} & 
 {\small $\mathbf{37}$} & 
 {\small $\mathbf{11}$} & 
 {\small $\mathbf{16}$} & 
\cellcolor{myyellow!20} {\small \textbf{18m 18s}} 
\\
\cellcolor{myyellow!20}
& $10$ & {\small $0.33\%$} & {\small $59$} & 
{\small $18$} & {\small $21$} &  {\small 53m 29s 
} 		
& $\mathbf{10}$ & 
 {\small $\mathbf{2.27\%}$} & 
 {\small $\mathbf{33}$} & 
{\small $\mathbf{16}$} &  {\small 
$\mathbf{24}$} & \cellcolor{myyellow!20} {\small 
\textbf{1h 
4m 
35s}} 
\\		
\rowcolor{gray!10}
\cellcolor{myyellow!20}
& $\mathbf{10}$ & 
 {\small $\mathbf{1.19\%}$} & 
 {\small $\mathbf{39}$} & 
 {\small $\mathbf{17}$} & 
 {\small $\mathbf{23}$} & 
\cellcolor{myyellow!20} {\small \textbf{9m 39s}   
} 		
& $\mathbf{10}$ & 
 {\small $\mathbf{3.41\%}$} & 
 {\small $\mathbf{133}$} & 
{\small $\mathbf{92}$} &  {\small 
$\mathbf{102}$} & \cellcolor{myyellow!20} {\small 
\textbf{33m 
43s}} 
\\		
\multirow{-8}{*}{\cellcolor{myyellow!20}$\begin{matrix}\textsc{age}
	 < 
	25\\\textsc{race bias}?\end{matrix}$}
& $\mathbf{10}$ & 
 {\small $\mathbf{2.12\%}$} & 
 {\small $\mathbf{33}$} & 
 {\small $\mathbf{17}$} & 
 {\small $\mathbf{31}$} & 
\cellcolor{myyellow!20} {\small \textbf{5m 18s}   }
& $\mathbf{10}$ & 
 {\small $\mathbf{0.18\%}$} & 
 {\small $\mathbf{33}$} & 
 {\small $\mathbf{12}$} & 
 {\small $\mathbf{17}$} & 
\cellcolor{myyellow!20} {\small \textbf{14m 58s}} 
	\\\cline{2-13}
\cellcolor{myyellow!20}\textsc{min} && {\small 
$0.22\%$} &&&& {\small 
	5m 18s} && {\small $0.12\%$} &&&& {\small 
	8m 46s} \\	
\rowcolor{gray!10}
\cellcolor{myyellow!20}\textsc{median} && {\small 
$0.32\%$} &&&& {\small 
	47m 16s} && {\small $0.99\%$} &&&& {\small 
	16m 38s} \\	
\cellcolor{myyellow!20}\textsc{max} && {\small 
$2.12\%$} &&&& {\small 
	1h 11m 43s} && {\small $6.48\%$} &&&& {\small 
	2h 5m 5s} \\	
	\hline%

	\rowcolor{gray!10}
\cellcolor{mygreen!20}
	& $10$ & {\small $3.86\%$} & {\small $242$} 
	& 
	{\small 
	$96$} & {\small $180$} &  
	{\small  2h 30m 23s    }
& $\mathbf{10}$ & {\small $\mathbf{5.22\%}$} & 
{\small $\mathbf{204}$} & 
{\small 
$\mathbf{65}$} & {\small $\mathbf{180}$} & 
\cellcolor{mygreen!20} {\small  
\textbf{3h 25m 21s}    } 
\\	
\cellcolor{mygreen!20} & $\mathbf{10}$ & {\small 
$\mathbf{8.84\%}$} 
& 
{\small 
$\mathbf{100}$} & {\small $\mathbf{45}$} & 
{\small 
$\mathbf{77}$} & \cellcolor{mygreen!20} 
{\small  \textbf{19m 47s}     }
& $\mathbf{10}$ & {\small $\mathbf{12.38\%}$} & 
{\small $\mathbf{387}$} & 
{\small 
$\mathbf{152}$} & {\small $\mathbf{318}$} & 
\cellcolor{mygreen!20} {\small  
\textbf{40m 49s}   } \\	
\rowcolor{gray!10}
\cellcolor{mygreen!20} 
& $10$ & {\small $8.14\%$} & {\small $204$} & 
{\small 
$47$} & {\small $143$} &  {\small  
28m 12s      }
& $\mathbf{10}$ & {\small $\mathbf{7.10\%}$} & 
{\small $\mathbf{181}$} & 
{\small 
$\mathbf{63}$} & {\small $\mathbf{142}$} & 
\cellcolor{mygreen!20} {\small  \textbf{
20m 51s}   } \\	
\cellcolor{mygreen!20} 
& $\mathbf{10}$ & {\small $\mathbf{2.70\%}$} & 
{\small $\mathbf{563}$} 
& 
{\small 
$\mathbf{168} $} & {\small $\mathbf{232}$} & 
\cellcolor{mygreen!20} {\small  \textbf{
1h 49m 9s}       }
& $10$ & {\small $6.90\%$} & {\small $96$} & 
{\small 
$23$} & {\small $95$} &  {\small  
1h 21m 37s    } \\	
\rowcolor{gray!10}
\cellcolor{mygreen!20} 
& $\mathbf{10}$ & {\small $\mathbf{4.65\%}$} & 
{\small $\mathbf{545}$} & 
{\small 
$\mathbf{280}$} & {\small $\mathbf{415}$} & 
\cellcolor{mygreen!20} {\small  
\textbf{1h 33m 36s}     }
& $\mathbf{10}$ & {\small $\mathbf{6.14\%}$} & 
{\small $\mathbf{157}$} & 
{\small 
$\mathbf{62}$} & {\small $\mathbf{110}$} & 
\cellcolor{mygreen!20} {\small  
\textbf{27m 43s}   } \\	
\cellcolor{mygreen!20} 
& $10$ & {\small $5.77\%$} & {\small $217$} & 
{\small 
$68$} & {\small $154$} &  {\small  
1h 35m 25s   }
& $\mathbf{10}$ & {\small $\mathbf{8.10\%}$} & 
{\small $\mathbf{345}$} & 
{\small 
$\mathbf{61}$} & {\small $\mathbf{284}$} & 
\cellcolor{mygreen!20} {\small  
\textbf{47m 9s}    } 
\\	
\rowcolor{gray!10}
\cellcolor{mygreen!20} 
& $\mathbf{10}$ & {\small $\mathbf{7.76\%}$} & 
{\small $\mathbf{252}$} & 
{\small 
$\mathbf{62}$} & {\small $\mathbf{226}$} & 
\cellcolor{mygreen!20} {\small  
\textbf{23m 10s}      }
& $\mathbf{10}$ & {\small $\mathbf{6.78\%}$} & 
{\small $\mathbf{251}$} & 
{\small 
$\mathbf{141}$} & {\small $\mathbf{223}$} &  
\cellcolor{mygreen!20}
{\small  
\textbf{50m 13s}   } \\	
	\multirow{-8}{*}{\cellcolor{mygreen!20} 
	$\begin{matrix}\textsc{male}\\\textsc{age
		bias}?\end{matrix}$}
& $10$ & {\small $8.70\%$} & {\small $267$} & 
{\small 
$90$} & {\small $266$} &  {\small  
53m 26s     }
& $\mathbf{10}$ & {\small $\mathbf{12.88\%}$} & 
{\small $\mathbf{257}$} & 
{\small 
$\mathbf{124}$} & {\small $\mathbf{228}$} & 
\cellcolor{mygreen!20} {\small  
\textbf{47m 46s}   }
	\\\cline{2-13}
\cellcolor{mygreen!20}\textsc{min} && {\small 
$2.70\%$} &&&& {\small 
	19m 47s} && {\small $5.22\%$} &&&& {\small 
	20m 51s} \\	
\rowcolor{gray!10}
\cellcolor{mygreen!20} \textsc{median} && {\small 
$6.77\%$} &&&& {\small 
	1h 13m 31s} && {\small $7.00\%$} &&&& {\small 
	47m 28s} \\	
\cellcolor{mygreen!20} \textsc{max} && {\small 
$8.84\%$} &&&& {\small 
	2h 20m 23s} && {\small $12.88\%$} &&&& 
	{\small 
	3h 25m 21s} \\	
	\hline%
	
	\rowcolor{gray!10}
\cellcolor{mypurple!20}
	& $\mathbf{11}$ & {\small $\mathbf{2.18\%}$} & 
	{\small $\mathbf{106}$} & 
	{\small 
		$\mathbf{21}$} & {\small $\mathbf{53}$} & 
		\cellcolor{mypurple!20} {\small \textbf{2h 
		32m 
		44s}    
		}
& $\mathbf{11}$ & {\small $\mathbf{2.92\%}$} & 
{\small $\mathbf{86}$} & 
{\small 
	$\mathbf{26}$} & {\small $\mathbf{69}$} & 
	\cellcolor{mypurple!20} {\small  \textbf{2h 26m 
	20s}  
} 
\\	
\cellcolor{mypurple!20}
& $\mathbf{7}$ & {\small $\mathbf{3.66\%}$} & 
{\small $\mathbf{105}$} & 
{\small 
	$\mathbf{38}$} & {\small $\mathbf{55}$} & 
	\cellcolor{mypurple!20} {\small  \textbf{18m 26s}
}
& $\mathbf{11}$ & {\small $\mathbf{6.95\%}$} & 
{\small $\mathbf{108}$} & 
{\small 
	$\mathbf{33}$} & {\small $\mathbf{71}$} & 
	\cellcolor{mypurple!20} {\small      \textbf{15m 
	29s}
} 
\\	
\rowcolor{gray!10}
\cellcolor{mypurple!20}
& $\mathbf{11}$ & {\small $\mathbf{2.73\%}$} & 
{\small $\mathbf{100}$} & 
{\small 
	$\mathbf{32}$} & {\small $\mathbf{57}$} & 
	\cellcolor{mypurple!20} {\small     \textbf{39m 
	5s}  
	}
& $14$ & {\small $4.43\%$} & {\small $69$} & 
{\small 
	$12$} & {\small $51$} &  {\small    1h 47m 5s  } 
\\	
\cellcolor{mypurple!20}
& $\mathbf{17}$ & {\small $\mathbf{2.19\%}$} & 
{\small $\mathbf{101}$} & 
{\small 
	$\mathbf{28}$} & {\small $\mathbf{57}$} &  
	\cellcolor{mypurple!20} {\small    \textbf{16h 
	19m 14s}
}
& $\mathbf{7}$ & {\small $\mathbf{3.40\%}$} & 
{\small $\mathbf{83}$} & 
{\small 
	$\mathbf{21}$} & {\small $\mathbf{82}$} &  
	\cellcolor{mypurple!20} {\small    \textbf{20m 
	1s}  } 
\\	
\rowcolor{gray!10}
\cellcolor{mypurple!20}
& $19$ & {\small $3.17\%$} & {\small $86$} & 
{\small 
	$30$} & {\small $53$} &  {\small      
52h 10m 2s     }
& $\mathbf{13}$ & {\small $\mathbf{3.09\%}$} & 
{\small $\mathbf{96}$} & 
{\small 
	$\mathbf{24}$} & {\small $\mathbf{58}$} &  
	\cellcolor{mypurple!20} {\small     \textbf{1h 8m 
	4s} 
} 
\\	
\cellcolor{mypurple!20}
& $\mathbf{11}$ & {\small $\mathbf{2.45\%}$} & 
{\small $\mathbf{94}$} & 
{\small 
	$\mathbf{26}$} & {\small $\mathbf{52}$} &  
	\cellcolor{mypurple!20} {\small     \textbf{2h 
	18m 
	42s}    }
& $\mathbf{14}$ & {\small $\mathbf{5.79\%}$} & 
{\small $\mathbf{99}$} & 
{\small 
	$\mathbf{45}$} & {\small $\mathbf{87}$} &  
	\cellcolor{mypurple!20} {\small    \textbf{1h 51m 
	2s}
} 
\\	
\rowcolor{gray!10}
\cellcolor{mypurple!20}
& $15$ & {\small $3.94\%$} & {\small $87$} & 
{\small 
	$29$} & {\small $52$} &  {\small        2h 39m 
	18s   
	}
& $\mathbf{17}$ & {\small $\mathbf{5.10\%}$} & 
{\small $\mathbf{110}$} & 
{\small 
	$\mathbf{73}$} & {\small $\mathbf{94}$} &  
	\cellcolor{mypurple!20} {\small     \textbf{17h 
	48m 22s}
} \\	
	
\multirow{-8}{*}{\cellcolor{mypurple!20}$\begin{matrix}\textsc{caucasian}\\\textsc{priors
		bias}?\end{matrix}$}
& $\mathbf{15}$ & {\small $\mathbf{5.36\%}$} & 
{\small $\mathbf{90}$} & 
{\small 
	$\mathbf{35}$} & {\small $\mathbf{89}$} & 
	\cellcolor{mypurple!20} 
	{\small     \textbf{3h 41m 16s}   
}
& $\mathbf{14}$ & {\small $\mathbf{3.99\%}$} & 
{\small $\mathbf{97}$} & 
{\small 
	$\mathbf{38}$} & {\small $\mathbf{65}$} & 
	\cellcolor{mypurple!20} {\small   \textbf{1h 21m 
	8s}   }
	\\\cline{2-13}
\cellcolor{mypurple!20}\textsc{min} && {\small 
$2.18\%$} &&&& 
{\small 18m 26s} && {\small $2.92\%$} &&&& 
{\small 
	15m 29s} \\	
\rowcolor{gray!10}
\cellcolor{mypurple!20}\textsc{median} && {\small 
$2.95\%$} &&&& 
	{\small 2h 36m 1s} && {\small $4.21\%$} 
	&&&& 
	{\small 
	1h 34m 7s} \\	
\cellcolor{mypurple!20}\textsc{max} && {\small 
$5.36\%$} &&&& 
{\small 52h 10m 2s} && {\small $6.95\%$} &&&& 
{\small 
	17h 48m 22s} \\	
	\hline
\end{tabular}
}
\end{table}

\subsection{RQ2: Answering Bias 
Queries}

Table~\ref{tbl:compasTfull0},~\ref{tbl:compasTfull1}
  and~\ref{tbl:compasTfull2} show the 
analysis 
results for all eight models trained on the 
\textsc{compas} dataset from ProPublica. 
All columns are shown as before and, 
again, we 
highlighted across all tables the choice 
of the abstract domain that 
entailed the shortest 
analysis time.
\end{toappendix}

For each line in 
Table~\ref{tbl:compasT}, we highlighted the choice 
of abstract domain that entailed the shortest 
analysis time.
We observe that \textsc{deeppoly} seems generally
the better choice. The difference in performance 
becomes more striking as the analyzed input space 
becomes smaller, i.e., for $Q_C$. This is because 
\textsc{deeppoly} is specifically designed for 
proving \emph{local} robustness of neural 
networks. Thus, our input partitioning, in addition 
to allowing for
parallelism, is also 
enabling analyses designed for local properties to 
prove global properties, like causal fairness.

The analysis results for all models are shown in the 
appendix (see 
Tables~\ref{tbl:compasTfull0},~\ref{tbl:compasTfull1},
 and~\ref{tbl:compasTfull2}).

\paragraph{\textbf{RQ3: Effect of Model 
Structure on Scalability.}}
To evaluate the effect of the 
model structure on the
scalability of our analysis, we trained 
models
on the Adult Census
dataset\footnote{\url{https://archive.ics.uci.edu/ml/datasets/adult}}
 by
varying the number of layers and nodes per layer. 
The dataset
assigns a yearly income ($>$ or $\leq$ USD 50K) based on personal attributes
such as gender, race, and occupation. We trained all 
models (with 23 inputs) on a fair
dataset with respect to gender and ensured that each model reached a
minimum classification accuracy of 78\%. Accuracy 
does not
increase by adding more layers or nodes per layer, 
in fact, it may
significantly decrease --- we tried up to 100 hidden layers with 100
nodes each.

\begin{table}[t]
	\caption{Comparison of
		Different Model Structures (Adult
		Census Data)}\label{tbl:censusT}
\centering
\resizebox{\textwidth}{!}{%
	\begin{tabular}{|c|c|cc|cc|c|cc|cc|c|cc|cc|c|}
		\hline
		\multirow{2}{*}{$\size{\model}$} & 
		\multirow{2}{*}{\textsc{U}} &     
		\multicolumn{5}{c|}{\textsc{boxes}} &     
		\multicolumn{5}{c|}{\textsc{symbolic}} 
		&     \multicolumn{5}{c|}{\textsc{deeppoly}} \\
		& & \textsc{input} & $\size{\text{C}}$ & 		
		\multicolumn{2}{c|}{$\size{\text{F}}$} & 		
		\textsc{time} &         \textsc{input} & 
		$\size{\text{C}}$ &         		
		\multicolumn{2}{c|}{$\size{\text{F}}$} 
		&         		\textsc{time} & 		\textsc{input} 
		& $\size{\text{C}}$ & 		
		\multicolumn{2}{c|}{$\size{\text{F}}$} & 		
		\textsc{time} \\
		\hline
		\rowcolor{gray!10} \cellcolor{mygreen!20} & 
		$4$ & {\small $88.26\%$} & {\small $1482$} & 
		{\small $77$} & {\small $1136$} & {\small 
		33m 55s} & {\small $95.14\%$} & {\small 
		$1132$} & {\small $65$} & {\small $686$} & 
		{\small 19m 5s} & {\small $93.99\%$} & 
		{\small $1894$} & {\small $77$} & {\small 
		$992$} & {\small 29m 55s} \\
		\cellcolor{mygreen!20} & $6$ & {\small 
		$99.51\%$} & {\small $769$} & {\small $51$} 
		& {\small $723$} & {\small 1h 10m 25s} & 
		{\small $99.93\%$} & {\small $578$} & {\small 
		$47$} & {\small $447$} & {\small 39m 8s} & 
		{\small $99.83\%$} & {\small $1620$} & 
		{\small $54$} & {\small $1042$} & {\small 1h 
		24m 24s} \\
		\rowcolor{gray!10} \cellcolor{mygreen!20} & 
		$8$ & {\small $100.00\%$} & {\small $152$} & 
		{\small $19$} & {\small $143$} & {\small 3h 
		47m 23s} & {\small $100.00\%$} & {\small 
		$174$} & {\small $18$} & {\small $146$} & 
		{\small 1h 51m 2s} & {\small $100.00\%$} & 
		{\small $1170$} & {\small $26$} & {\small 
		$824$} & {\small 8h 2m 27s} \\
		\multirow{-4}{*}{\cellcolor{mygreen!20}\begin{tabular}{c}$10$\\$\showmark{o}\showmark{*}\showmark{oplus}$\end{tabular}}
		 &$10$ & \cellcolor{mygreen!20} {\small 
		 $\mathbf{100.00\%}$} & 
		 \cellcolor{mygreen!20} {\small 
		 $\mathbf{1}$} 
		& \cellcolor{mygreen!20} {\small 
		$\mathbf{1}$} & 
		\cellcolor{mygreen!20} {\small $\mathbf{1}$} 
		& 
		\cellcolor{mygreen!20} {\small \textbf{55m 
		58s}} & {\small $100.00\%$} & {\small $1$} & 
		{\small $1$} & {\small $1$} & {\small 56m 8s} 
		& {\small $100.00\%$} & {\small $1$} & 
		{\small $1$} & {\small $1$} & {\small 56m 
		43s} \\
		\hline
		\rowcolor{gray!10} \cellcolor{myyellow!20} & 
		$4$ & {\small $49.83\%$} & {\small $719$} & 
		{\small $9$} & {\small $329$} & {\small 13m 
		43s} & {\small $72.29\%$} & {\small $1177$} 
		& {\small $11$} & {\small $559$} & {\small 
		24m 9s} & {\small $60.52\%$} & {\small 
		$1498$} & {\small $14$} & {\small $423$} & 
		{\small 10m 32s} \\
		\cellcolor{myyellow!20} & $6$ & {\small 
		$72.74\%$} & {\small $1197$} & {\small $15$} 
		& {\small $929$} & {\small 2h 6m 49s} & 
		{\small $98.54\%$} & {\small $333$} & {\small 
		$7$} & {\small $195$} & {\small 20m 46s} & 
		{\small $66.46\%$} & {\small $1653$} & 
		{\small $17$} & {\small $594$} & {\small 15m 
		44s} \\
		\rowcolor{gray!10} \cellcolor{myyellow!20} & 
		$8$ & {\small $98.68\%$} & {\small $342$} & 
		{\small $9$} & {\small $284$} & {\small 1h 
		46m 43s} & {\small $98.78\%$} & {\small 
		$323$} & {\small $9$} & {\small $190$} & 
		{\small 1h 27m 18s} & {\small $70.87\%$} & 
		{\small $1764$} & {\small $18$} & {\small 
		$724$} & {\small 2h 19m 11s} \\
		\multirow{-4}{*}{\cellcolor{myyellow!20}\begin{tabular}{c}$12$\\$\showmark{triangle}\showmark{triangle*}\showmark{Mercedes
		 star}$\end{tabular}} &$10$ & {\small 
		$99.06\%$} & {\small $313$} & {\small $7$} 
		& {\small $260$} & {\small 1h 21m 47s} & 
		\cellcolor{myyellow!20} {\small 
		$\mathbf{99.06\%}$} & 
		\cellcolor{myyellow!20} {\small 
		$\mathbf{307}$} & 
		\cellcolor{myyellow!20} {\small 
		$\mathbf{5}$} & \cellcolor{myyellow!20} 
		{\small 
		$\mathbf{182}$} & \cellcolor{myyellow!20} 
		{\small \textbf{1h 
		13m 55s}} 
		& {\small $80.76\%$} & {\small $1639$} & 
		{\small $18$} & {\small $1007$} & {\small 3h 
		22m 11s} \\
		\hline
		\rowcolor{gray!10} \cellcolor{myorange!20} & 
		$4$ & {\small $38.92\%$} & {\small $1044$} & 
		{\small $18$} & {\small $39$} & {\small 2m 
		6s} & {\small $51.01\%$} & {\small $933$} & 
		{\small $31$} & {\small $92$} & {\small 15m 
		28s} & {\small $49.62\%$} & {\small $1081$} 
		& {\small $34$} & {\small $79$} & {\small 3m 
		2s} \\
		\cellcolor{myorange!20} & $6$ & {\small 
		$46.22\%$} & {\small $1123$} & {\small 
		$62$} & {\small $255$} & {\small 20m 51s} & 
		{\small $61.60\%$} & {\small $916$} & {\small 
		$67$} & {\small $405$} & {\small 44m 40s} & 
		{\small $59.20\%$} & {\small $1335$} & 
		{\small $90$} & {\small $356$} & {\small 
		22m 13s} \\
		\rowcolor{gray!10} \cellcolor{myorange!20} & 
		$8$ & {\small $64.24\%$} & {\small $1111$} & 
		{\small $96$} & {\small $792$} & {\small 2h 
		24m 51s} & {\small $74.27\%$} & {\small 
		$1125$} & {\small $78$} & {\small $780$} & 
		{\small 3h 26m 20s} & {\small $69.69\%$} & 
		{\small $1574$} & {\small $127$} & {\small 
		$652$} & {\small 5h 6m 7s} \\
		\multirow{-4}{*}{\cellcolor{myorange!20}\begin{tabular}{c}$20$\\$\showmark{diamond}\showmark{diamond*}\testmark{myorange!20}{halfdiamond*}$\end{tabular}}
		 &$10$ & {\small $85.90\%$} & {\small 
		$1390$} & {\small $71$} & {\small $1339$} & 
		{\small >13h} & {\small $89.27\%$} & {\small 
		$1435$} & {\small $60$} & {\small $1157$} & 
		{\small >13h} & \cellcolor{myorange!20} 
		{\small $\mathbf{76.25\%}$} & 
		\cellcolor{myorange!20} 
		{\small 
		$\mathbf{1711}$} & \cellcolor{myorange!20} 
		{\small 
		$\mathbf{148}$} & \cellcolor{myorange!20} 
		{\small 
		$\mathbf{839}$} & 
		\cellcolor{myorange!20} {\small \textbf{4h 
		36m 23s}} \\
		\hline
		\rowcolor{gray!10} \cellcolor{myred!20} & $4$ 
		& {\small $0.35\%$} & {\small $10$} & {\small 
		$0$} & {\small $0$} & {\small 1m 39s} & 
		{\small $34.62\%$} & {\small $768$} & {\small 
		$1$} & {\small $1$} & {\small 6m 56s} & 
		{\small $26.39\%$} & {\small $648$} & {\small 
		$2$} & {\small $3$} & {\small 10m 11s} \\
		\cellcolor{myred!20} & $6$ & {\small 
		$0.35\%$} & {\small $10$} & {\small $0$} & 
		{\small $0$} & {\small 1m 38s} & {\small 
		$34.76\%$} & {\small $817$} & {\small $4$} & 
		{\small $5$} & {\small 43m 53s} & {\small 
		$26.74\%$} & {\small $592$} & {\small $8$} 
		& {\small $10$} & {\small 1h 23m 11s} \\
		\rowcolor{gray!10} \cellcolor{myred!20} & $8$ 
		& {\small $0.42\%$} & {\small $12$} & {\small 
		$1$} & {\small $2$} & {\small 14m 37s} & 
		{\small $35.56\%$} & {\small $840$} & {\small 
		$21$} & {\small $28$} & {\small 2h 48m 15s} 
		& {\small $27.74\%$} & {\small $686$} & 
		{\small $32$} & {\small $42$} & {\small 2h 
		43m 2s} \\
		\multirow{-4}{*}{\cellcolor{myred!20}\begin{tabular}{c}$40$\\$\showmark{square}\showmark{square*}\testmark{myred!20}{halfsquare*}$\end{tabular}}
		 &$10$ & {\small $0.80\%$} & {\small $23$} & 
		{\small $10$} & {\small $13$} & {\small 1h 
		48m 43s} & \cellcolor{myred!20} {\small 
		$\mathbf{37.19\%}$} & \cellcolor{myred!20} 
		{\small 
		$\mathbf{880}$} & \cellcolor{myred!20} 
		{\small $\mathbf{50}$} & 
		\cellcolor{myred!20} {\small $\mathbf{75}$} & 
		\cellcolor{myred!20}
		{\small \textbf{11h 32m 21s}} & {\small 
		$30.56\%$} & 
		{\small $699$} & {\small $83$} & {\small 
		$121$} & {\small >13h} \\
		\hline
		\rowcolor{gray!10} \cellcolor{mypurple!20} & 
		$4$ & {\small $1.74\%$} & {\small $50$} & 
		{\small $0$} & {\small $0$} & {\small 1m 38s} 
		& {\small $41.98\%$} & {\small $891$} & 
		{\small $14$} & {\small $49$} & {\small 10m 
		14s} & {\small $36.60\%$} & {\small $805$} & 
		{\small $6$} & {\small $8$} & {\small 2m 47s} 
		\\
		\cellcolor{mypurple!20} & $6$ & {\small 
		$2.50\%$} & {\small $72$} & {\small $3$} & 
		{\small $22$} & {\small 4m 35s} & {\small 
		$45.00\%$} & {\small $822$} & {\small $32$} 
		& {\small $143$} & {\small 45m 42s} & {\small 
		$38.06\%$} & {\small $847$} & {\small $25$} 
		& {\small $50$} & {\small 5m 7s} \\
		\rowcolor{gray!10} \cellcolor{mypurple!20} & 
		$8$ & {\small $9.83\%$} & {\small $282$} & 
		{\small $25$} & {\small $234$} & {\small 25m 
		30s} & {\small $47.78\%$} & {\small $651$} & 
		{\small $46$} & {\small $229$} & {\small 1h 
		14m 5s} & {\small $42.53\%$} & {\small 
		$975$} & {\small $74$} & {\small $180$} & 
		{\small 25m 1s} \\
		\multirow{-4}{*}{\cellcolor{mypurple!20}\begin{tabular}{c}$45$\\$\showmark{pentagon}\showmark{pentagon*}\showmark{10-pointed
		 star}$\end{tabular}} &$10$ & {\small 
		$18.68\%$} & {\small $522$} & {\small $33$} 
		& {\small $488$} & {\small 1h 51m 24s} & 
		\cellcolor{mypurple!20} {\small 
		$\mathbf{49.62\%}$} & 
		\cellcolor{mypurple!20} {\small 
		$\mathbf{714}$} & 
		\cellcolor{mypurple!20} {\small 
		$\mathbf{51}$} & \cellcolor{mypurple!20} 
		{\small 
		$\mathbf{294}$} & \cellcolor{mypurple!20} 
		{\small \textbf{3h 
		23m 
		20s}} & {\small $48.68\%$} & {\small $1087$} 
		& {\small $110$} & {\small $373$} & {\small 
		1h 58m 34s} \\
		\hline
	\end{tabular}
}
\end{table}

\begin{figure}[t]
	\begin{subfigure}[b]{.5\linewidth}
\begin{tikzpicture}[scale=0.75]
\begin{semilogyaxis}[xlabel={Analyzed Input 
Space}, 
xtick={0,10,20,30,40,50,60,70,80,90,100}, 
ylabel={Analysis Time}, 
log number format basis/.code 2 
args={$\pgfmathparse{#1^(#2)}\mathsf{\pgfmathprintnumber{\pgfmathresult}}$s},
]
\addplot[scatter,only marks,  
point 
meta=explicit 
symbolic,scatter/classes={
nosym45={mark=pentagon,purple!50!black,mark 
	size=1.5pt},
nosym40={mark=square,red,mark 
	size=1.25pt},
nosym20={mark=diamond,orange,mark 
	size=1.5pt},
nosym12={mark=triangle,yellow!70!orange,mark 
	size=1.5pt},
nosym10={mark=o,green!70!black,mark 
	size=1.25pt},
sym45={mark=pentagon*,purple!50!black,mark 
	size=1.5pt},
sym40={mark=square*,red,mark 
	size=1.25pt},
sym20={mark=diamond*,orange,mark 
	size=1.5pt},
sym12={mark=triangle*,yellow!70!orange,mark 
	size=1.5pt},
sym10={mark=*,green!70!black,mark 
	size=1.25pt},
deep45={mark=10-pointed 
star,purple!50!black,mark 
	size=1.5pt},
deep40={mark=halfsquare*,red,mark 
	size=1.25pt},
deep20={mark=halfdiamond*,orange,mark 
	size=1.5pt},
deep12={mark=Mercedes 
star,yellow!70!orange,mark 
	size=1.5pt},
deep10={mark=oplus,green!70!black,mark 
	size=1.25pt}
%
}]
table[meta=label] {
x			y			label

%
%

88.26	2035	nosym10
99.51	4225	nosym10
100.00	13643	nosym10
100.00	3358	nosym10

95.14	1145	sym10
99.93	2348	sym10
100.00	6662	sym10
100.00	3368	sym10

93.99	1795	deep10
99.83	5064	deep10
100.00	28947	deep10
100.00	3403	deep10

%
%

49.83	823		nosym12
72.74	7609	nosym12
98.68	6403	nosym12
99.06	4907	nosym12

72.29	1449	sym12
98.54	1246	sym12
98.79	5238	sym12
99.06	4435	sym12

60.52	632		deep12
66.46	944		deep12
70.87	8351	deep12
80.76	12131	deep12

%
%

38.92	126		nosym20
46.22	1251	nosym20
64.24	8691	nosym20
85.90	 48600	nosym20

51.01	928		sym20
61.60	2680	sym20
74.27	12380	sym20
89.27	 48600	sym20

49.62	182		deep20
59.20	1333	deep20
69.69	18367	deep20
76.25	16583	deep20

%
%

0.35	99		nosym40
0.35	98		nosym40
0.42	877		nosym40
0.80	6523	nosym40

34.62	416		sym40
34.76	2633	sym40
35.56	10095	sym40
37.19	41541	sym40

26.39	611		deep40
26.74	4991	deep40
27.74	9782	deep40
30.56	 48600	deep40

%
%

1.74	98		nosym45
2.50	275		nosym45
9.83	1530	nosym45
18.68	6684	nosym45

41.98	614		sym45
45.00	2742	sym45
47.78	4445	sym45
49.62	12200	sym45

36.60	167		deep45
38.06	307		deep45
42.54	1501	deep45
48.68	7114	deep45
};
\draw[red] ({rel axis cs:0,0}|-{axis cs:0,46800}) -- 
({rel 
axis  
cs:1,0}|-{axis cs:1,46800});
\draw[green] ({rel axis cs:0,0}-|{axis cs:100,0}) -- ({rel 
axis cs:0,1}-|{axis cs:100,0});
\end{semilogyaxis}
\end{tikzpicture}
		\caption{}
		\label{fig:censusF1}
	\end{subfigure}%
	\begin{subfigure}[b]{.5\linewidth}
\begin{tikzpicture}[scale=0.75]
\begin{semilogyaxis}[xlabel={Analyzed Input 
Space}, 
xtick={0,10,20,30,40,50,60,70,80,90,100}, 
ylabel={Analysis Time}, 
ytick={1000,10000}, 
log number format basis/.code 2 
args={$\pgfmathparse{#1^(#2)}\mathsf{\pgfmathprintnumber{\pgfmathresult}}$s},
]
\addplot[scatter,only marks,  
point 
meta=explicit 
symbolic,scatter/classes={
	nomark={mark=none,mark size=1.25pt},
nosym45={mark=pentagon,purple!50!black,mark 
	size=1.5pt},
nosym40={mark=square,red,mark 
	size=1.25pt},
nosym20={mark=diamond,orange,mark 
	size=1.5pt},
nosym12={mark=triangle,yellow!70!orange,mark 
	size=1.5pt},
nosym10={mark=o,green!70!black,mark 
	size=1.25pt},
sym45={mark=pentagon*,purple!50!black,mark 
	size=1.5pt},
sym40={mark=square*,red,mark 
	size=1.25pt},
sym20={mark=diamond*,orange,mark 
	size=1.5pt},
sym12={mark=triangle*,yellow!70!orange,mark 
	size=1.5pt},
sym10={mark=*,green!70!black,mark 
	size=1.25pt},
deep45={mark=10-pointed 
star,purple!50!black,mark 
	size=1.5pt},
deep40={mark=halfsquare*,red,mark 
	size=1.25pt},
deep20={mark=halfdiamond*,orange,mark 
	size=1.5pt},
deep12={mark=Mercedes 
star,yellow!70!orange,mark 
	size=1.5pt},
deep10={mark=oplus,green!70!black,mark 
	size=1.25pt}
%
}]
table[meta=label] {
x			y			label

%
%

100.00	3358	nosym10

100.00	3368	sym10

100.00	3403	deep10

%
%

99.06	4907	nosym12

99.06	4435	sym12

80.76	12131	deep12

%
%

64.24	8691	nosym20

74.27	12380	sym20

76.25	16583	deep20

%
%

0.80	6523	nosym40

37.19	41541	sym40

27.74	9782	deep40

%
%

18.68	6684	nosym45

49.62	12200	sym45

48.68	7114	deep45

50.00 56800 nomark
};
\draw[red] ({rel axis cs:0,0}|-{axis cs:0,46800}) -- 
({rel 
axis  
cs:1,0}|-{axis cs:1,46800});

\draw[green] ({rel axis cs:0,0}-|{axis cs:100,0}) -- 
({rel 
axis cs:0,1}-|{axis cs:100,0});
\end{semilogyaxis}
\end{tikzpicture}
		\caption{Zoom on Best 
		$\upperbound$-Configurations}
		\label{fig:censusF2}
	\end{subfigure}
	\caption{Comparison of
		Different Model Structures (Adult
		Census Data)}\label{fig:censusF}
	\vspace{-1em}
\end{figure}

Table~\ref{tbl:censusT} shows the results. The first 
column ($\size{\model}$) shows
the total number of hidden nodes and introduces the
marker symbols used in the scatter plot of 
Figure~\ref{fig:censusF} (to identify the domain 
used for the forward pre-analysis: left, center, and 
right symbols respectively refer to the 
\textsc{boxes}, \textsc{symbolic}, and 
\textsc{deeppoly} domains).
The models have the following number of hidden 
layers
and nodes per layer (from top to bottom): 2 and 5; 4 
and 3; 4 and 5;
4 and 10; 9 and 5.

Column $\upperbound$ shows the chosen 
upper 
bound for the analysis. For each model, we tried 
four different choices of $\upperbound$.
Column \textsc{input} shows the input-space 
coverage, i.e., the
percentage of the input space that was completed 
by the analysis. 
%
	Column $\size{\text{C}}$ shows the 
	total number of analyzed (i.e., completed) input 
	space partitions. Column $\size{\text{F}}$ shows 
	the total number of abstract activation patterns 
	(left) and feasible input partitions (right) that the 
	backward analysis had to explore. The difference 
	between $\size{\text{C}}$ and the number of 
	partitions shown in $\size{\text{F}}$ are the input 
	partitions that the pre-analysis found to be 
	already 
	fair (i.e., uniquely classified). 
	Finally, column \textsc{time} shows the analysis 
	running time.
We used
a lower bound $\lowerbound$ of $0.5$ and a 
time limit of $13$h. 
For each model in Table~\ref{tbl:censusT}, we 
highlighted the configuration (i.e., domain used for 
the pre-analysis and chosen $\upperbound$)
that achieved the highest input-space coverage 
(the analysis running time being decisive in 
case of equality or timeout).

The scatter plot of Figure~\ref{fig:censusF1} 
visualizes the input coverage
and analysis running time. We zoom in on the best 
$\upperbound$-configurations for each 
pre-analysis domain (i.e., the chosen 
$\upperbound$) in 
Figure~\ref{fig:censusF2}.

Overall, we observe that 
\emph{coverage decreases
  for larger model structures}, and the more 
  precise \textsc{symbolic} and \textsc{deeppoly} 
  domains result in a 
  significant
  coverage boost, especially for larger 
  structures.
We also note that, as in this case we are analyzing 
the entire 
input space, \textsc{deeppoly} generally performs 
worse than
the \textsc{symbolic} domain. In 
particular, for larger structures, \emph{the 
\textsc{symbolic} domain 
often yields a higher input coverage in a shorter 
analysis 
running time}.
Finally, we observe that \emph{increasing the upper 
bound $\upperbound$ tends to increase
  coverage independently of the specific model structure.} However,
interestingly, this does not always come at the 
expense of an
  increased running time. In fact, such a change 
  often results in decreasing the number of 
  partitions that the expensive backward analysis 
  needs to analyze (cf. columns 
  $\size{\text{F}}$) and, 
  in turn, this reduces the 
  overall running time.

\begin{table}[t]
	\caption{Comparison of Different Input Space 
		Sizes and Model Structures (Adult
		Census Data)}\label{tbl:censusNEW}
\centering
\resizebox{\textwidth}{!}{%
	\begin{tabular}{|c|c|cc|cc|c|cc|cc|c|cc|cc|c|}
		\hline
		\multirow{2}{*}{$\size{\model}$} & 
		\multirow{2}{*}{\textsc{query}} &     
		\multicolumn{5}{c|}{\textsc{boxes}} &     
		\multicolumn{5}{c|}{\textsc{symbolic}} 
		&     \multicolumn{5}{c|}{\textsc{deeppoly}} \\
		& & \textsc{input} & $\size{\text{C}}$ & 		
		\multicolumn{2}{c|}{$\size{\text{F}}$} & 		
		\textsc{time} &         \textsc{input} & 
		$\size{\text{C}}$ &         		
		\multicolumn{2}{c|}{$\size{\text{F}}$} 
		&         		\textsc{time} & 		\textsc{input} 
		& $\size{\text{C}}$ & 		
		\multicolumn{2}{c|}{$\size{\text{F}}$} & 		
		\textsc{time} \\
		\hline
		\rowcolor{gray!10} \cellcolor{myred!20} & 
		\textsc{F} & {\small $100.000\%$} &  &  &  &  
		& {\small $100.000\%$} &  &  &  &  & 
		\cellcolor{myred!20} {\small 
		$\mathbf{100.000\%}$} & \cellcolor{myred!20} 
		& 
		\cellcolor{myred!20} & \cellcolor{myred!20} & 
		\cellcolor{myred!20} \\
		\rowcolor{gray!10} \cellcolor{myred!20} & 
		\textcolor{gray}{\scriptsize $0.009\%$} & 
		\textcolor{gray}{\scriptsize $0.009\%$} & 
		\multirow{-2}{*}{\small $9$} & 
		\multirow{-2}{*}{\small $2$} & 
		\multirow{-2}{*}{\small $3$} & 
		\multirow{-2}{*}{\small 3m 3s} & 
		\textcolor{gray}{\scriptsize $0.009\%$} & 
		\multirow{-2}{*}{\small $5$} & 
		\multirow{-2}{*}{\small $1$} & 
		\multirow{-2}{*}{\small $2$} & 
		\multirow{-2}{*}{\small 3m 5s} & 
		\cellcolor{myred!20}
		\textcolor{gray}{\scriptsize 
		$\mathbf{0.009\%}$} & 
		\cellcolor{myred!20}
		\multirow{-2}{*}{\small $\mathbf{3}$} & 
		\cellcolor{myred!20}
		\multirow{-2}{*}{\small $\mathbf{1}$} & 
		\cellcolor{myred!20}
		\multirow{-2}{*}{\small $\mathbf{1}$} & 
		\cellcolor{myred!20}
		\multirow{-2}{*}{\small \textbf{2m 33s}} \\
		\cellcolor{myred!20} & \textsc{E} & {\small 
		$99.996\%$} &  &  &  &  & {\small 
		$100.000\%$} &  &  &  &  & 
		\cellcolor{myred!20} {\small 
		$\mathbf{100.000\%}$} & \cellcolor{myred!20} 
		& 
		\cellcolor{myred!20} & \cellcolor{myred!20} & 
		\cellcolor{myred!20} \\
		\cellcolor{myred!20} & 
		\textcolor{gray}{\scriptsize $0.104\%$} & 
		\textcolor{gray}{\scriptsize $0.104\%$} & 
		\multirow{-2}{*}{\small $83$} & 
		\multirow{-2}{*}{\small $9$} & 
		\multirow{-2}{*}{\small $39$} & 
		\multirow{-2}{*}{\small 3m 13s} & 
		\textcolor{gray}{\scriptsize $0.104\%$} & 
		\multirow{-2}{*}{\small $26$} & 
		\multirow{-2}{*}{\small $3$} & 
		\multirow{-2}{*}{\small $9$} & 
		\multirow{-2}{*}{\small 3m 8s} & 
		\cellcolor{myred!20}
		\textcolor{gray}{\scriptsize 
		$\mathbf{0.104\%}$} & 
		\cellcolor{myred!20}
		\multirow{-2}{*}{\small $\mathbf{22}$} & 
		\cellcolor{myred!20}
		\multirow{-2}{*}{\small $\mathbf{3}$} & 
		\cellcolor{myred!20}
		\multirow{-2}{*}{\small $\mathbf{9}$} & 
		\cellcolor{myred!20}
		\multirow{-2}{*}{\small \textbf{2m 38s}} \\
		\rowcolor{gray!10} \cellcolor{myred!20} & 
		\textsc{D} & {\small $99.978\%$} &  &  &  &  & 
		\cellcolor{myred!20} {\small 
		$\mathbf{100.000\%}$} & 
		\cellcolor{myred!20} & \cellcolor{myred!20} & 
		\cellcolor{myred!20} & \cellcolor{myred!20} & 
		{\small 
		$100.000\%$} &  &  &  &  \\
		\rowcolor{gray!10} \cellcolor{myred!20} & 
		\textcolor{gray}{\scriptsize $1.042\%$} & 
		\textcolor{gray}{\scriptsize $1.042\%$} & 
		\multirow{-2}{*}{\small $457$} & 
		\multirow{-2}{*}{\small $13$} & 
		\multirow{-2}{*}{\small $176$} & 
		\multirow{-2}{*}{\small 5m } & 
		\cellcolor{myred!20}
		\textcolor{gray}{\scriptsize 
		$\mathbf{1.042\%}$} & 
		\cellcolor{myred!20}
		\multirow{-2}{*}{\small $\mathbf{292}$} & 
		\cellcolor{myred!20}
		\multirow{-2}{*}{\small $\mathbf{9}$} & 
		\cellcolor{myred!20}
		\multirow{-2}{*}{\small $\mathbf{63}$} & 
		\cellcolor{myred!20}
		\multirow{-2}{*}{\small \textbf{4m 50s}} & 
		\textcolor{gray}{\scriptsize $1.042\%$} & 
		\multirow{-2}{*}{\small $287$} & 
		\multirow{-2}{*}{\small $6$} & 
		\multirow{-2}{*}{\small $65$} & 
		\multirow{-2}{*}{\small 5m 14s} \\
		\cellcolor{myred!20} & \textsc{C} & {\small 
		$99.696\%$} &  &  &  &  & 
		\cellcolor{myred!20} {\small 
		$\mathbf{100.000\%}$} & \cellcolor{myred!20} 
		& 
		\cellcolor{myred!20} & \cellcolor{myred!20} & 
		\cellcolor{myred!20} & {\small 
		$100.000\%$} &  &  &  &  \\
		\cellcolor{myred!20} & 
		\textcolor{gray}{\scriptsize $8.333\%$} & 
		\textcolor{gray}{\scriptsize $8.308\%$} & 
		\multirow{-2}{*}{\small $3173$} & 
		\multirow{-2}{*}{\small $20$} & 
		\multirow{-2}{*}{\small $1211$} & 
		\multirow{-2}{*}{\small 36m 12s} & 
		\cellcolor{myred!20}
		\textcolor{gray}{\scriptsize 
		$\mathbf{8.333\%}$} & 
		\cellcolor{myred!20}
		\multirow{-2}{*}{\small $\mathbf{2668}$} & 
		\cellcolor{myred!20}
		\multirow{-2}{*}{\small $\mathbf{13}$} & 
		\cellcolor{myred!20}
		\multirow{-2}{*}{\small $\mathbf{417}$} & 
		\cellcolor{myred!20}
		\multirow{-2}{*}{\small \textbf{17m 40s}} & 
		\textcolor{gray}{\scriptsize $8.333\%$} & 
		\multirow{-2}{*}{\small $2887$} & 
		\multirow{-2}{*}{\small $10$} & 
		\multirow{-2}{*}{\small $519$} & 
		\multirow{-2}{*}{\small 29m 52s} \\
		\rowcolor{gray!10} \cellcolor{myred!20} & 
		\textsc{B} & {\small $97.318\%$} &  &  &  &  & 
		\cellcolor{myred!20} {\small 
		$\mathbf{99.991\%}$} 
		&\cellcolor{myred!20}  & \cellcolor{myred!20} 
		& \cellcolor{myred!20} & \cellcolor{myred!20} 
		& {\small 
		$99.978\%$} &  &  &  &  \\
		\rowcolor{gray!10} \cellcolor{myred!20} & 
		\textcolor{gray}{\scriptsize $50\%$} & 
		\textcolor{gray}{\scriptsize $48.659\%$} & 
		\multirow{-2}{*}{\small $15415$} & 
		\multirow{-2}{*}{\small $61$} & 
		\multirow{-2}{*}{\small $5646$} & 
		\multirow{-2}{*}{\small 1h 39m 36s} & 
		\cellcolor{myred!20}
		\textcolor{gray}{\scriptsize 
		$\mathbf{49.996\%}$} & 
		\cellcolor{myred!20}
		\multirow{-2}{*}{\small $\mathbf{12617}$} & 
		\cellcolor{myred!20}
		\multirow{-2}{*}{\small $\mathbf{34}$} & 
		\cellcolor{myred!20}
		\multirow{-2}{*}{\small $\mathbf{2112}$} & 
		\cellcolor{myred!20}
		\multirow{-2}{*}{\small \textbf{1h 1m 19s}} & 
		\textcolor{gray}{\scriptsize $49.989\%$} & 
		\multirow{-2}{*}{\small $13973$} & 
		\multirow{-2}{*}{\small $24$} & 
		\multirow{-2}{*}{\small $2405$} & 
		\multirow{-2}{*}{\small 1h 14m 19s} \\
		\cellcolor{myred!20} & \textsc{A} & {\small 
		$94.032\%$} &  &  &  &  & 
		\cellcolor{myred!20} {\small 
		$\mathbf{99.935\%}$} & \cellcolor{myred!20} 
		& 
		\cellcolor{myred!20} & \cellcolor{myred!20} & 
		\cellcolor{myred!20} & {\small 
		$99.896\%$} &  &  &  &  \\
		\cellcolor{myred!20} \multirow{-12}{*}{$20$} & 
		\textcolor{gray}{\scriptsize $100\%$} & 
		\textcolor{gray}{\scriptsize $94.032\%$} & 
		\multirow{-2}{*}{\small $18642$} & 
		\multirow{-2}{*}{\small $70$} & 
		\multirow{-2}{*}{\small $8700$} & 
		\multirow{-2}{*}{\small 2h 30m 46s} & 
		\cellcolor{myred!20}
		\textcolor{gray}{\scriptsize 
		$\mathbf{99.935\%}$} & 
		\cellcolor{myred!20}
		\multirow{-2}{*}{\small $\mathbf{15445}$} & 
		\cellcolor{myred!20}
		\multirow{-2}{*}{\small $\mathbf{40}$} & 
		\cellcolor{myred!20}
		\multirow{-2}{*}{\small $\mathbf{3481}$} & 
		\cellcolor{myred!20}
		\multirow{-2}{*}{\small \textbf{1h 29m} } & 
		\textcolor{gray}{\scriptsize $99.896\%$} & 
		\multirow{-2}{*}{\small $17784$} & 
		\multirow{-2}{*}{\small $39$} & 
		\multirow{-2}{*}{\small $4076$} & 
		\multirow{-2}{*}{\small 1h 47m 7s} \\
		\hline
		\rowcolor{gray!10} \cellcolor{myorange!20} & 
		\textsc{F} & {\small $99.931\%$} &  &  &  &  & 
		\cellcolor{myorange!20} {\small 
		$\mathbf{99.961\%}$} 
		& \cellcolor{myorange!20} & 
		\cellcolor{myorange!20} & 
		\cellcolor{myorange!20} & 
		\cellcolor{myorange!20} & {\small 
		$99.957\%$} &  &  &  &  \\
		\rowcolor{gray!10} \cellcolor{myorange!20} & 
		\textcolor{gray}{\scriptsize $0.009\%$} & 
		\textcolor{gray}{\scriptsize $0.009\%$} & 
		\multirow{-2}{*}{\small $11$} & 
		\multirow{-2}{*}{\small $0$} & 
		\multirow{-2}{*}{\small $0$} & 
		\multirow{-2}{*}{\small 3m 5s} & 
		\cellcolor{myorange!20}
		\textcolor{gray}{\scriptsize 
		$\mathbf{0.009\%}$} & 
		\cellcolor{myorange!20}
		\multirow{-2}{*}{\small $\mathbf{17}$} & 
		\cellcolor{myorange!20}
		\multirow{-2}{*}{\small $\mathbf{0}$} & 
		\cellcolor{myorange!20}
		\multirow{-2}{*}{\small $\mathbf{0}$} & 
		\cellcolor{myorange!20}
		\multirow{-2}{*}{\small \textbf{3m 2s}} & 
		\textcolor{gray}{\scriptsize $0.009\%$} & 
		\multirow{-2}{*}{\small $10$} & 
		\multirow{-2}{*}{\small $0$} & 
		\multirow{-2}{*}{\small $0$} & 
		\multirow{-2}{*}{\small 2m 36s} \\
		\cellcolor{myorange!20} & \textsc{E} & {\small 
		$99.583\%$} &  &  &  &  & 
		\cellcolor{myorange!20} {\small 
		$\mathbf{99.783\%}$} & 
		\cellcolor{myorange!20} & 
		\cellcolor{myorange!20} & 
		\cellcolor{myorange!20} & 
		\cellcolor{myorange!20} & {\small 
		$99.753\%$} &  &  &  &  \\
		\cellcolor{myorange!20} & 
		\textcolor{gray}{\scriptsize $0.104\%$} & 
		\textcolor{gray}{\scriptsize $0.104\%$} & 
		\multirow{-2}{*}{\small $61$} & 
		\multirow{-2}{*}{\small $0$} & 
		\multirow{-2}{*}{\small $0$} & 
		\multirow{-2}{*}{\small 3m 6s} & 
		\cellcolor{myorange!20}
		\textcolor{gray}{\scriptsize 
		$\mathbf{0.104\%}$} & 
		\cellcolor{myorange!20}
		\multirow{-2}{*}{\small $\mathbf{89}$} & 
		\cellcolor{myorange!20}
		\multirow{-2}{*}{\small $\mathbf{0}$} & 
		\cellcolor{myorange!20}
		\multirow{-2}{*}{\small $\mathbf{0}$} & 
		\cellcolor{myorange!20}
		\multirow{-2}{*}{\small \textbf{3m 10s}} & 
		\textcolor{gray}{\scriptsize $0.104\%$} & 
		\multirow{-2}{*}{\small $74$} & 
		\multirow{-2}{*}{\small $0$} & 
		\multirow{-2}{*}{\small $0$} & 
		\multirow{-2}{*}{\small 2m 44s} \\
		\rowcolor{gray!10} \cellcolor{myorange!20} & 
		\textsc{D} & {\small $97.917\%$} &  &  &  &  & 
		\cellcolor{myorange!20} {\small 
		$\mathbf{99.258\%}$} 
		& \cellcolor{myorange!20} & 
		\cellcolor{myorange!20} & 
		\cellcolor{myorange!20} & 
		\cellcolor{myorange!20} & {\small 
		$98.984\%$} &  &  &  &  \\
		\rowcolor{gray!10} \cellcolor{myorange!20} & 
		\textcolor{gray}{\scriptsize $1.042\%$} & 
		\textcolor{gray}{\scriptsize $1.020\%$} & 
		\multirow{-2}{*}{\small $151$} & 
		\multirow{-2}{*}{\small $0$} & 
		\multirow{-2}{*}{\small $0$} & 
		\multirow{-2}{*}{\small 2m 56s} & 
		\cellcolor{myorange!20}
		\textcolor{gray}{\scriptsize 
		$\mathbf{1.034\%}$} & 
		\cellcolor{myorange!20}
		\multirow{-2}{*}{\small $\mathbf{297}$} & 
		\cellcolor{myorange!20}
		\multirow{-2}{*}{\small $\mathbf{0}$} & 
		\cellcolor{myorange!20}
		\multirow{-2}{*}{\small $\mathbf{0}$} & 
		\cellcolor{myorange!20}
		\multirow{-2}{*}{\small \textbf{3m 41s}} & 
		\textcolor{gray}{\scriptsize $1.031\%$} & 
		\multirow{-2}{*}{\small $477$} & 
		\multirow{-2}{*}{\small $0$} & 
		\multirow{-2}{*}{\small $0$} & 
		\multirow{-2}{*}{\small 2m 58s} \\
		\cellcolor{myorange!20} & \textsc{C} & {\small 
		$83.503\%$} &  &  &  &  & {\small 
		$95.482\%$} &  &  &  &  & 
		\cellcolor{myorange!20} {\small 
		$\mathbf{93.225\%}$} & 
		\cellcolor{myorange!20} & 
		\cellcolor{myorange!20} & 
		\cellcolor{myorange!20} & 
		\cellcolor{myorange!20} \\
		\cellcolor{myorange!20} & 
		\textcolor{gray}{\scriptsize $8.333\%$} & 
		\textcolor{gray}{\scriptsize $6.958\%$} & 
		\multirow{-2}{*}{\small $506$} & 
		\multirow{-2}{*}{\small $2$} & 
		\multirow{-2}{*}{\small $3$} & 
		\multirow{-2}{*}{\small 2h 1m } & 
		\textcolor{gray}{\scriptsize $7.956\%$} & 
		\multirow{-2}{*}{\small $885$} & 
		\multirow{-2}{*}{\small $25$} & 
		\multirow{-2}{*}{\small $34$} & 
		\multirow{-2}{*}{\small >13h} & 
		\cellcolor{myorange!20}
		\textcolor{gray}{\scriptsize 
		$\mathbf{7.768\%}$} & 
		\cellcolor{myorange!20}
		\multirow{-2}{*}{\small $\mathbf{1145}$} & 
		\cellcolor{myorange!20}
		\multirow{-2}{*}{\small $\mathbf{23}$} & 
		\cellcolor{myorange!20}
		\multirow{-2}{*}{\small $\mathbf{33}$} & 
		\cellcolor{myorange!20}
		\multirow{-2}{*}{\small \textbf{12h 57m 37s}} 
		\\
		\rowcolor{gray!10} \cellcolor{myorange!20} & 
		\textsc{B} & \cellcolor{myorange!20} {\small 
		$\mathbf{25.634\%}$} & 
		\cellcolor{myorange!20} & 
		\cellcolor{myorange!20} & 
		\cellcolor{myorange!20} & 
		\cellcolor{myorange!20} & 
		{\small $76.563\%$} &  &  &  &  & {\small 
		$63.906\%$} &  &  &  &  \\
		\rowcolor{gray!10} \cellcolor{myorange!20} & 
		\textcolor{gray}{\scriptsize $50\%$} & 
		\cellcolor{myorange!20}
		\textcolor{gray}{\scriptsize 
		$\mathbf{12.817\%}$} & 
		\cellcolor{myorange!20}
		\multirow{-2}{*}{\small $\mathbf{5516}$} & 
		\cellcolor{myorange!20}
		\multirow{-2}{*}{\small $\mathbf{7}$} & 
		\cellcolor{myorange!20}
		\multirow{-2}{*}{\small $\mathbf{11}$} & 
		\cellcolor{myorange!20}
		\multirow{-2}{*}{\small \textbf{1h 28m 6s}} & 
		\textcolor{gray}{\scriptsize $38.281\%$} & 
		\multirow{-2}{*}{\small $4917$} & 
		\multirow{-2}{*}{\small $123$} & 
		\multirow{-2}{*}{\small $182$} & 
		\multirow{-2}{*}{\small >13h} & 
		\textcolor{gray}{\scriptsize $31.953\%$} & 
		\multirow{-2}{*}{\small $7139$} & 
		\multirow{-2}{*}{\small $117$} & 
		\multirow{-2}{*}{\small $152$} & 
		\multirow{-2}{*}{\small >13h} \\
		\cellcolor{myorange!20} & \textsc{A} & {\small 
		$0.052\%$} &  &  &  &  & 
		\cellcolor{myorange!20} {\small 
		$\mathbf{61.385\%}$} 
		& \cellcolor{myorange!20} & 
		\cellcolor{myorange!20} & 
		\cellcolor{myorange!20} & 
		\cellcolor{myorange!20} & {\small 
		$43.698\%$} &  &  &  &  
		\\
		\cellcolor{myorange!20} 
		\multirow{-12}{*}{$80$} & 
		\textcolor{gray}{\scriptsize $100\%$} & 
		\textcolor{gray}{\scriptsize $0.052\%$} & 
		\multirow{-2}{*}{\small $12$} & 
		\multirow{-2}{*}{\small $0$} & 
		\multirow{-2}{*}{\small $0$} & 
		\multirow{-2}{*}{\small 25m 51s} & 
		\cellcolor{myorange!20}
		\textcolor{gray}{\scriptsize 
		$\mathbf{61.385\%}$} & 
		\cellcolor{myorange!20}
		\multirow{-2}{*}{\small $\mathbf{5156}$} & 
		\cellcolor{myorange!20}
		\multirow{-2}{*}{\small $\mathbf{73}$} & 
		\cellcolor{myorange!20}
		\multirow{-2}{*}{\small $\mathbf{102}$} & 
		\cellcolor{myorange!20}
		\multirow{-2}{*}{\small \textbf{10h 25m 2s}} & 
		\textcolor{gray}{\scriptsize $43.698\%$} & 
		\multirow{-2}{*}{\small $4757$} & 
		\multirow{-2}{*}{\small $68$} & 
		\multirow{-2}{*}{\small $88$} & 
		\multirow{-2}{*}{\small >13h} \\
		\hline
		\rowcolor{gray!10} \cellcolor{myyellow!20} & 
		\textsc{F} & {\small $99.931\%$} &  &  &  &  & 
		\cellcolor{myyellow!20} {\small 
		$\mathbf{99.944\%}$} & 
		\cellcolor{myyellow!20} & 
		\cellcolor{myyellow!20} & 
		\cellcolor{myyellow!20} & 
		\cellcolor{myyellow!20} & {\small 
		$99.931\%$} &  &  &  &  \\
		\rowcolor{gray!10} \cellcolor{myyellow!20} & 
		\textcolor{gray}{\scriptsize $0.009\%$} & 
		\textcolor{gray}{\scriptsize $0.009\%$} & 
		\multirow{-2}{*}{\small $6$} & 
		\multirow{-2}{*}{\small $0$} & 
		\multirow{-2}{*}{\small $0$} & 
		\multirow{-2}{*}{\small 3m 15s} & 
		\cellcolor{myyellow!20}
		\textcolor{gray}{\scriptsize 
		$\mathbf{0.009\%}$} & 
		\cellcolor{myyellow!20}
		\multirow{-2}{*}{\small $\mathbf{9}$} & 
		\cellcolor{myyellow!20}
		\multirow{-2}{*}{\small $\mathbf{0}$} & 
		\cellcolor{myyellow!20}
		\multirow{-2}{*}{\small $\mathbf{0}$} & 
		\cellcolor{myyellow!20}
		\multirow{-2}{*}{\small \textbf{3m 35s}} & 
		\textcolor{gray}{\scriptsize $0.009\%$} & 
		\multirow{-2}{*}{\small $6$} & 
		\multirow{-2}{*}{\small $0$} & 
		\multirow{-2}{*}{\small $0$} & 
		\multirow{-2}{*}{\small 3m 30s} \\
		\cellcolor{myyellow!20} & \textsc{E} & {\small 
		$99.583\%$} &  &  &  &  & 
		\cellcolor{myyellow!20} {\small 
		$\mathbf{99.627\%}$} & 
		\cellcolor{myyellow!20} & 
		\cellcolor{myyellow!20} & 
		\cellcolor{myyellow!20} & 
		\cellcolor{myyellow!20} & {\small 
		$99.583\%$} &  &  &  &  \\
		\cellcolor{myyellow!20} & 
		\textcolor{gray}{\scriptsize $0.104\%$} & 
		\textcolor{gray}{\scriptsize $0.104\%$} & 
		\multirow{-2}{*}{\small $121$} & 
		\multirow{-2}{*}{\small $0$} & 
		\multirow{-2}{*}{\small $0$} & 
		\multirow{-2}{*}{\small 3m 39s} & 
		\cellcolor{myyellow!20}
		\textcolor{gray}{\scriptsize 
		$\mathbf{0.104\%}$} & 
		\cellcolor{myyellow!20}
		\multirow{-2}{*}{\small $\mathbf{120}$} & 
		\cellcolor{myyellow!20}
		\multirow{-2}{*}{\small $\mathbf{0}$} & 
		\cellcolor{myyellow!20}
		\multirow{-2}{*}{\small $\mathbf{0}$} & 
		\cellcolor{myyellow!20}
		\multirow{-2}{*}{\small \textbf{6m 34s}} & 
		\textcolor{gray}{\scriptsize $0.104\%$} & 
		\multirow{-2}{*}{\small $31$} & 
		\multirow{-2}{*}{\small $0$} & 
		\multirow{-2}{*}{\small $0$} & 
		\multirow{-2}{*}{\small 4m 22s} \\
		\rowcolor{gray!10} \cellcolor{myyellow!20} & 
		\textsc{D} & {\small $97.917\%$} &  &  &  &  & 
		\cellcolor{myyellow!20} {\small 
		$\mathbf{98.247\%}$} & 
		\cellcolor{myyellow!20} & 
		\cellcolor{myyellow!20} & 
		\cellcolor{myyellow!20} & 
		\cellcolor{myyellow!20} & {\small 
		$97.917\%$} &  &  &  &  \\
		\rowcolor{gray!10} \cellcolor{myyellow!20} & 
		\textcolor{gray}{\scriptsize $1.042\%$} & 
		\textcolor{gray}{\scriptsize $1.020\%$} & 
		\multirow{-2}{*}{\small $151$} & 
		\multirow{-2}{*}{\small $0$} & 
		\multirow{-2}{*}{\small $0$} & 
		\multirow{-2}{*}{\small 6m 18s} & 
		\cellcolor{myyellow!20}
		\textcolor{gray}{\scriptsize 
		$\mathbf{1.024\%}$} & 
		\cellcolor{myyellow!20}
		\multirow{-2}{*}{\small $\mathbf{597}$} & 
		\cellcolor{myyellow!20}
		\multirow{-2}{*}{\small $\mathbf{0}$} & 
		\cellcolor{myyellow!20}
		\multirow{-2}{*}{\small $\mathbf{0}$} & 
		\cellcolor{myyellow!20}
		\multirow{-2}{*}{\small \textbf{21m 9s}} & 
		\textcolor{gray}{\scriptsize $1.020\%$} & 
		\multirow{-2}{*}{\small $301$} & 
		\multirow{-2}{*}{\small $0$} & 
		\multirow{-2}{*}{\small $0$} & 
		\multirow{-2}{*}{\small 9m 35s} \\
		\cellcolor{myyellow!20} & \textsc{C} & {\small 
		$83.333\%$} &  &  &  &  & 
		\cellcolor{myyellow!20} {\small 
		$\mathbf{88.294\%}$} & 
		\cellcolor{myyellow!20} & 
		\cellcolor{myyellow!20} & 
		\cellcolor{myyellow!20} & 
		\cellcolor{myyellow!20} & {\small 
		$83.342\%$} &  &  &  &  \\
		\cellcolor{myyellow!20} & 
		\textcolor{gray}{\scriptsize $8.333\%$} & 
		\textcolor{gray}{\scriptsize $6.944\%$} & 
		\multirow{-2}{*}{\small $120$} & 
		\multirow{-2}{*}{\small $0$} & 
		\multirow{-2}{*}{\small $0$} & 
		\multirow{-2}{*}{\small 30m 37s} & 
		\cellcolor{myyellow!20}
		\textcolor{gray}{\scriptsize 
		$\mathbf{7.358\%}$} & 
		\cellcolor{myyellow!20}
		\multirow{-2}{*}{\small $\mathbf{755}$} & 
		\cellcolor{myyellow!20}
		\multirow{-2}{*}{\small $\mathbf{0}$} & 
		\cellcolor{myyellow!20}
		\multirow{-2}{*}{\small $\mathbf{0}$} & 
		\cellcolor{myyellow!20}
		\multirow{-2}{*}{\small \textbf{1h 36m 35s}} & 
		\textcolor{gray}{\scriptsize $6.945\%$} & 
		\multirow{-2}{*}{\small $483$} & 
		\multirow{-2}{*}{\small $0$} & 
		\multirow{-2}{*}{\small $0$} & 
		\multirow{-2}{*}{\small 52m 29s} \\
		\rowcolor{gray!10} \cellcolor{myyellow!20} & 
		\textsc{B} & {\small $25.000\%$} &  &  &  &  & 
		\cellcolor{myyellow!20} {\small 
		$\mathbf{46.063\%}$} & 
		\cellcolor{myyellow!20} & 
		\cellcolor{myyellow!20} & 
		\cellcolor{myyellow!20} & 
		\cellcolor{myyellow!20} & {\small 
		$25.074\%$} &  &  &  &  \\
		\rowcolor{gray!10} \cellcolor{myyellow!20} & 
		\textcolor{gray}{\scriptsize $50\%$} & 
		\textcolor{gray}{\scriptsize $12.500\%$} & 
		\multirow{-2}{*}{\small $5744$} & 
		\multirow{-2}{*}{\small $0$} & 
		\multirow{-2}{*}{\small $0$} & 
		\multirow{-2}{*}{\small 2h 24m 36s} & 
		\cellcolor{myyellow!20}
		\textcolor{gray}{\scriptsize 
		$\mathbf{23.032\%}$} & 
		\cellcolor{myyellow!20}
		\multirow{-2}{*}{\small $\mathbf{4676}$} & 
		\cellcolor{myyellow!20}
		\multirow{-2}{*}{\small $\mathbf{0}$} & 
		\cellcolor{myyellow!20}
		\multirow{-2}{*}{\small $\mathbf{0}$} & 
		\cellcolor{myyellow!20}
		\multirow{-2}{*}{\small \textbf{7h 25m 57s}} & 
		\textcolor{gray}{\scriptsize $12.537\%$} & 
		\multirow{-2}{*}{\small $5762$} & 
		\multirow{-2}{*}{\small $4$} & 
		\multirow{-2}{*}{\small $4$} & 
		\multirow{-2}{*}{\small >13h} \\
		\cellcolor{myyellow!20} & \textsc{A} & {\small 
		$0.000\%$} &  &  &  &  & 
		\cellcolor{myyellow!20} {\small 
		$\mathbf{24.258\%}$} 
		& \cellcolor{myyellow!20} & 
		\cellcolor{myyellow!20} & 
		\cellcolor{myyellow!20} & 
		\cellcolor{myyellow!20} & {\small $0.017\%$} 
		&  &  &  &  \\
		\cellcolor{myyellow!20} 
		\multirow{-12}{*}{$320$} & 
		\textcolor{gray}{\scriptsize $100\%$} & 
		\textcolor{gray}{\scriptsize $0.000\%$} & 
		\multirow{-2}{*}{\small $0$} & 
		\multirow{-2}{*}{\small $0$} & 
		\multirow{-2}{*}{\small $0$} & 
		\multirow{-2}{*}{\small 2h 54m 25s} & 
		\cellcolor{myyellow!20}
		\textcolor{gray}{\scriptsize 
		$\mathbf{24.258\%}$} & 
		\cellcolor{myyellow!20}
		\multirow{-2}{*}{\small $\mathbf{2436}$} & 
		\cellcolor{myyellow!20}
		\multirow{-2}{*}{\small $\mathbf{0}$} & 
		\cellcolor{myyellow!20}
		\multirow{-2}{*}{\small $\mathbf{0}$} & 
		\cellcolor{myyellow!20}
		\multirow{-2}{*}{\small \textbf{9h 41m 36s}} & 
		\textcolor{gray}{\scriptsize $0.017\%$} & 
		\multirow{-2}{*}{\small $4$} & 
		\multirow{-2}{*}{\small $0$} & 
		\multirow{-2}{*}{\small $0$} & 
		\multirow{-2}{*}{\small 5h 3m 33s} \\
		\hline
		\rowcolor{gray!10} \cellcolor{mygreen!20} & 
		\textsc{F} & {\small $99.931\%$} &  &  &  &  & 
		\cellcolor{mygreen!20} {\small 
		$\mathbf{99.948\%}$} & 
		\cellcolor{mygreen!20} & 
		\cellcolor{mygreen!20} 
		&\cellcolor{mygreen!20}  & 
		\cellcolor{mygreen!20} & {\small 
		$99.931\%$} &  &  &  &  \\
		\rowcolor{gray!10} \cellcolor{mygreen!20} & 
		\textcolor{gray}{\scriptsize $0.009\%$} & 
		\textcolor{gray}{\scriptsize $0.009\%$} & 
		\multirow{-2}{*}{\small $11$} & 
		\multirow{-2}{*}{\small $0$} & 
		\multirow{-2}{*}{\small $0$} & 
		\multirow{-2}{*}{\small 7m 35s} & 
		\cellcolor{mygreen!20}
		\textcolor{gray}{\scriptsize 
		$\mathbf{0.009\%}$} & 
		\cellcolor{mygreen!20}
		\multirow{-2}{*}{\small $\mathbf{10}$} & 
		\cellcolor{mygreen!20}
		\multirow{-2}{*}{\small $\mathbf{0}$} & 
		\cellcolor{mygreen!20}
		\multirow{-2}{*}{\small $\mathbf{0}$} & 
		\cellcolor{mygreen!20}
		\multirow{-2}{*}{\small \textbf{24m 42s}} & 
		\textcolor{gray}{\scriptsize $0.009\%$} & 
		\multirow{-2}{*}{\small $6$} & 
		\multirow{-2}{*}{\small $0$} & 
		\multirow{-2}{*}{\small $0$} & 
		\multirow{-2}{*}{\small 7m 6s} \\
		\cellcolor{mygreen!20} & \textsc{E} & {\small 
		$99.583\%$} &  &  &  &  & 
		\cellcolor{mygreen!20} {\small 
		$\mathbf{99.674\%}$} & 
		\cellcolor{mygreen!20} & 
		\cellcolor{mygreen!20} & 
		\cellcolor{mygreen!20} & 
		\cellcolor{mygreen!20} & {\small 
		$99.583\%$} &  &  &  &  \\
		\cellcolor{mygreen!20} & 
		\textcolor{gray}{\scriptsize $0.104\%$} & 
		\textcolor{gray}{\scriptsize $0.104\%$} & 
		\multirow{-2}{*}{\small $31$} & 
		\multirow{-2}{*}{\small $0$} & 
		\multirow{-2}{*}{\small $0$} & 
		\multirow{-2}{*}{\small 15m 49s} & 
		\cellcolor{mygreen!20}
		\textcolor{gray}{\scriptsize 
		$\mathbf{0.104\%}$} & 
		\cellcolor{mygreen!20}
		\multirow{-2}{*}{\small $\mathbf{71}$} & 
		\cellcolor{mygreen!20}
		\multirow{-2}{*}{\small $\mathbf{0}$} & 
		\cellcolor{mygreen!20}
		\multirow{-2}{*}{\small $\mathbf{0}$} & 
		\cellcolor{mygreen!20}
		\multirow{-2}{*}{\small \textbf{51m 52s}} & 
		\textcolor{gray}{\scriptsize $0.104\%$} & 
		\multirow{-2}{*}{\small $31$} & 
		\multirow{-2}{*}{\small $0$} & 
		\multirow{-2}{*}{\small $0$} & 
		\multirow{-2}{*}{\small 15m 14s} \\
		\rowcolor{gray!10} \cellcolor{mygreen!20} & 
		\textsc{D} & {\small $97.917\%$} &  &  &  &  & 
		\cellcolor{mygreen!20} {\small 
		$\mathbf{98.668\%}$} & 
		\cellcolor{mygreen!20} & 
		\cellcolor{mygreen!20} & 
		\cellcolor{mygreen!20} & 
		\cellcolor{mygreen!20} & {\small 
		$97.917\%$} &  &  &  &  \\
		\rowcolor{gray!10} \cellcolor{mygreen!20} & 
		\textcolor{gray}{\scriptsize $1.042\%$} & 
		\textcolor{gray}{\scriptsize $1.020\%$} & 
		\multirow{-2}{*}{\small $151$} & 
		\multirow{-2}{*}{\small $0$} & 
		\multirow{-2}{*}{\small $0$} & 
		\multirow{-2}{*}{\small 1h 49s} & 
		\cellcolor{mygreen!20}
		\textcolor{gray}{\scriptsize 
		$\mathbf{1.028\%}$} & 
		\cellcolor{mygreen!20}
		\multirow{-2}{*}{\small $\mathbf{557}$} & 
		\cellcolor{mygreen!20}
		\multirow{-2}{*}{\small $\mathbf{0}$} & 
		\cellcolor{mygreen!20}
		\multirow{-2}{*}{\small $\mathbf{0}$} & 
		\cellcolor{mygreen!20}
		\multirow{-2}{*}{\small \textbf{3h 31m 45s}} & 
		\textcolor{gray}{\scriptsize $1.020\%$} & 
		\multirow{-2}{*}{\small $301$} & 
		\multirow{-2}{*}{\small $0$} & 
		\multirow{-2}{*}{\small $0$} & 
		\multirow{-2}{*}{\small 1h 3m 33s} \\
		\cellcolor{mygreen!20} & \textsc{C} & 
		\cellcolor{mygreen!20} {\small 
		$\mathbf{83.333\%}$} & 
		\cellcolor{mygreen!20} & 
		\cellcolor{mygreen!20} & 
		\cellcolor{mygreen!20} & 
		\cellcolor{mygreen!20} &  &  &  &  &  & 
		{\small $83.333\%$} &  &  &  &  \\
		\cellcolor{mygreen!20} & 
		\textcolor{gray}{\scriptsize $8.333\%$} & 
		\cellcolor{mygreen!20}
		\textcolor{gray}{\scriptsize 
		$\mathbf{6.944\%}$} & 
		\cellcolor{mygreen!20}
		\multirow{-2}{*}{\small $\mathbf{481}$} & 
		\cellcolor{mygreen!20}
		\multirow{-2}{*}{\small $\mathbf{0}$} & 
		\cellcolor{mygreen!20}
		\multirow{-2}{*}{\small $\mathbf{0}$} & 
		\cellcolor{mygreen!20}
		\multirow{-2}{*}{\small \textbf{7h 11m 39s}} & 
		\multirow{-2}{*}{\small $-$} & 
		\multirow{-2}{*}{\small $-$} & 
		\multirow{-2}{*}{\small $-$} & 
		\multirow{-2}{*}{\small $-$} & 
		\multirow{-2}{*}{\small >13h} & 
		\textcolor{gray}{\scriptsize $6.944\%$} & 
		\multirow{-2}{*}{\small $481$} & 
		\multirow{-2}{*}{\small $0$} & 
		\multirow{-2}{*}{\small $0$} & 
		\multirow{-2}{*}{\small 7h 12m 57s} \\
		\rowcolor{gray!10} \cellcolor{mygreen!20} & 
		\textsc{B} &  &  &  &  &  &  &  &  &  &  &  &  &  
		&  &  \\
		\rowcolor{gray!10} \cellcolor{mygreen!20} & 
		\textcolor{gray}{\scriptsize $50\%$} & 
		\multirow{-2}{*}{\small $-$} & 
		\multirow{-2}{*}{\small $-$} & 
		\multirow{-2}{*}{\small $-$} & 
		\multirow{-2}{*}{\small $-$} & 
		\multirow{-2}{*}{\small >13h} & 
		\multirow{-2}{*}{\small $-$} & 
		\multirow{-2}{*}{\small $-$} & 
		\multirow{-2}{*}{\small $-$} & 
		\multirow{-2}{*}{\small $-$} & 
		\multirow{-2}{*}{\small >13h} & 
		\multirow{-2}{*}{\small $-$} & 
		\multirow{-2}{*}{\small $-$} & 
		\multirow{-2}{*}{\small $-$} & 
		\multirow{-2}{*}{\small $-$} & 
		\multirow{-2}{*}{\small >13h} \\
		\cellcolor{mygreen!20} & \textsc{A} &  &  &  &  
		&  &  &  &  &  &  &  &  &  &  &  \\
		\cellcolor{mygreen!20} 
		\multirow{-12}{*}{$1280$} & 
		\textcolor{gray}{\scriptsize $100\%$} & 
		\multirow{-2}{*}{\small $-$} & 
		\multirow{-2}{*}{\small $-$} & 
		\multirow{-2}{*}{\small $-$} & 
		\multirow{-2}{*}{\small $-$} & 
		\multirow{-2}{*}{\small >13h} & 
		\multirow{-2}{*}{\small $-$} & 
		\multirow{-2}{*}{\small $-$} & 
		\multirow{-2}{*}{\small $-$} & 
		\multirow{-2}{*}{\small $-$} & 
		\multirow{-2}{*}{\small >13h} & 
		\multirow{-2}{*}{\small $-$} & 
		\multirow{-2}{*}{\small $-$} & 
		\multirow{-2}{*}{\small $-$} & 
		\multirow{-2}{*}{\small $-$} & 
		\multirow{-2}{*}{\small >13h} \\
		\hline
	\end{tabular}
}

\end{table}

\paragraph{\textbf{RQ4: Effect of Analyzed 
Input 
Space on Scalability.}}
As said above, the analysis of the models 
considered in 
Table~\ref{tbl:censusT}  is conducted \emph{on the 
entire 
input space}. 
In practice, as already mentioned, one might be 
interested in just a portion 
of the input space, e.g., depending on the 
probability distribution. 
More generally, 
\textbf{we argue that the size of the 
analyzed input space (rather than the size of the 
analyzed neural
network) is the most 
important factor that affects the performance of the 
analysis}. 
To support this claim, we trained even larger 
models and analyzed them with respect to queries 
exercising different input space sizes. 
Table~\ref{tbl:censusNEW} shows the results. The 
first column again shows the total number of 
hidden nodes for each trained model. In particular, 
the models we analyzed have the following number 
of hidden layers and nodes per layer (from top to 
bottom): 4 and 5; 8 and 10; 16 and 20; 32 and 40. 
Column \textsc{query} shows the query used for the 
analysis and the corresponding exercised input 
space size. Specifically, the queries identify people 
with the following characteristics:
\begin{description}
	\item[$A$:] $\textsc{true}$  
	\hfill{exercised input space: $100.0\%$}
	
	\item[$B$:] $A \land \text{age}\footnote{This 
	corresponds to $age \leq 0.5$ with min-max 
	scaling between $0$ and $1$.} \leq 53.5$
	\hfill{exercised input space: $50.00\%$}
	
	\item[$C$:] $B \land \text{race} = \text{white}$
	\hfill{exercised input space: $8.333\%$ ($3$ race
	choices)}
	
	\item[$D$:] $C \land \text{work class} = 
	\text{private}$
	\hfill{exercised input space: $1.043\%$ ($4$ 
	work class choices)}

	\item[$E$:] $D \land \text{marital status} = 
	\text{single}$
	\hfill{exercised input space: $0.104\%$ ($5$ 
	marital status choices)}

	\item[$F$:] $E \land \text{occupation} = 
	\text{blue-collar}$
	\hfill{exercised input space: $0.009\%$  ($6$ 
	occupation choices)}
\end{description}
For the analysis budget, we used $\lowerbound = 
0.25$, 
$\upperbound = 0.1 * \size{\model}$, and a 
time limit of $13$h. Column \textsc{input} shows, 
for each domain used for the forward pre-analysis, 
the coverage of the exercised input space 
(i.e., the percentage of the input space that satisfies 
the query and was completed 
by the analysis) and the 
corresponding input-space coverage (i.e., the same 
percentage but this time scaled to the entire input 
space).
Columns $\upperbound$, 
$\size{\text{C}}$, $\size{\text{F}}$, and 
\textsc{time} are as before.
Where a timeout is indicated (i.e., \textsc{time} 
$>13$h) and the values for the \textsc{input}, 
$\size{\text{C}}$, and $\size{\text{F}}$ columns are 
missing, it means that the timeout occurred during 
the pre-analysis; otherwise, it happened during the 
backward analysis.
For each model and query, we 
highlighted the configuration (i.e., the abstract 
domain used for 
the pre-analysis)
that achieved the highest input-space coverage 
with 
the shortest analysis running time.
Note that, where the $\size{\text{F}}$ column only 
contains zeros, it means that the backward analysis 
had no activation patterns to explore; this implies 
that the entire covered input space (i.e., the 
percentage shown in the \textsc{input} column) was 
already 
certified to be fair by the forward analysis.

Overall, we observe that \emph{whenever the 
analyzed 
input space is small enough} (i.e., queries $D-F$), 
\emph{the size of the neural network has little 
influence on the input space coverage} and slightly 
impacts the analysis running time, 
independently of the domain used for the forward 
pre-analysis. Instead, for larger analyzed input 
spaces (i.e., queries $A-C$) performance degrades 
quickly for larger neural networks. These results 
thus support our claim. Again, as expected, we 
observe that the 
\textsc{symbolic} domain generally is the better 
choice for the forward pre-analysis, in particular for 
queries exercising a larger input space or larger 
neural networks.

\paragraph{\textbf{RQ5: 
Scalability-vs-Precision Tradeoff.}}
To evaluate the effect of the analysis budget (bounds $\lowerbound$
and $\upperbound$), we analyzed a model using different budget
configurations. For this experiment, we used the Japanese Credit
Screening\footnote{\url{https://archive.ics.uci.edu/ml/datasets/Japanese+Credit+Screening}}
dataset, which we made fair with respect to gender. Our 2-class model (17 inputs and 4 hidden layers with 5 nodes each) had a
classification accuracy of 86\%. Note that accuracy does not increase
by adding more layers or nodes per layer, in fact, it may
significantly decrease --- we tried up to 100 hidden layers with 100
nodes each.



\begin{table}[t]
	\caption{Comparison of Different
		Analysis Configurations (Japanese Credit
		Screening) --- $12$ 
		CPUs}\label{tbl:japanese12T}
\centering
\resizebox{\textwidth}{!}{%
	\begin{tabular}{|c|c|cc|cc|c|cc|cc|c|cc|cc|c|}
		\hline
		\multirow{2}{*}{$\lowerbound$} & 
		\multirow{2}{*}{\textsc{U}} &     
		\multicolumn{5}{c|}{$\usemark{pentagon*}$ 
		\textsc{boxes}} &     
		\multicolumn{5}{c|}{$\usemark{triangle*}$ 
		\textsc{symbolic}} &     
		\multicolumn{5}{c|}{$\usemark{star}$ 
		\textsc{deeppoly}} \\
		& & \textsc{input} & $\size{\text{C}}$ & 		
		\multicolumn{2}{c|}{$\size{\text{F}}$} & 		
		\textsc{time} &         \textsc{input} & 
		$\size{\text{C}}$ &         		
		\multicolumn{2}{c|}{$\size{\text{F}}$} 
		&         		\textsc{time} & 		\textsc{input} 
		& $\size{\text{C}}$ & 		
		\multicolumn{2}{c|}{$\size{\text{F}}$} & 		
		\textsc{time} \\
		\hline
		\rowcolor{gray!10} \cellcolor{myred!20} & $4$ 
		& {\small $15.28\%$} & {\small $37$} & 
		{\small $0$} & {\small $0$} & {\small 8s} & 
		{\small $58.33\%$} & {\small $79$} & {\small 
		$8$} & {\small $20$} & {\small 1m 26s} & 
		{\small $69.79\%$} & {\small $115$} & {\small 
		$10$} & {\small $39$} & {\small 3m 18s} \\
		\cellcolor{myred!20} & $6$ & {\small 
		$17.01\%$} & {\small $39$} & {\small $6$} & 
		{\small $6$} & {\small 51s} & {\small 
		$69.10\%$} & {\small $129$} & {\small $22$} 
		& {\small $61$} & {\small 5m 41s} & {\small 
		$80.56\%$} & {\small $104$} & {\small $23$} 
		& {\small $51$} & {\small 7m 53s} \\
		\rowcolor{gray!10} \cellcolor{myred!20} & $8$ 
		& {\small $51.39\%$} & {\small $90$} & 
		{\small $28$} & {\small $85$} & {\small 12m 
		2s} & {\small $82.64\%$} & {\small $88$} & 
		{\small $31$} & {\small $67$} & {\small 12m 
		35s} & {\small $91.32\%$} & {\small $84$} & 
		{\small $27$} & {\small $56$} & {\small 19m 
		33s} \\
		\multirow{-4}{*}{\cellcolor{myred!20}$0.5$} 
		&$10$ & {\small $79.86\%$} & {\small $89$} 
		& {\small $34$} & {\small $89$} & {\small 34m 
		15s} & {\small $93.06\%$} & {\small $98$} & 
		{\small $40$} & {\small $83$} & {\small 42m 
		32s} & {\small $96.88\%$} & {\small $83$} & 
		{\small $29$} & {\small $58$} & {\small 43m 
		39s} \\
		\hline
		\rowcolor{gray!10} \cellcolor{myorange!20} & 
		$4$ & {\small $59.09\%$} & {\small $1115$} & 
		{\small $20$} & {\small $415$} & {\small 54m 
		32s} & {\small $95.94\%$} & {\small $884$} & 
		{\small $39$} & {\small $484$} & {\small 54m 
		31s} & {\small $98.26\%$} & {\small $540$} & 
		{\small $65$} & {\small $293$} & {\small 14m 
		29s} \\
		\cellcolor{myorange!20} & $6$ & {\small 
		$83.77\%$} & {\small $1404$} & {\small $79$} 
		& {\small $944$} & {\small 37m 19s} & {\small 
		$98.68\%$} & {\small $634$} & {\small $66$} 
		& {\small $376$} & {\small 23m 31s} & {\small 
		$99.70\%$} & {\small $322$} & {\small $79$} 
		& {\small $205$} & {\small 13m 25s} \\
		\rowcolor{gray!10} \cellcolor{myorange!20} & 
		$8$ & {\small $96.07\%$} & {\small $869$} & 
		{\small $140$} & {\small $761$} & {\small 1h 
		7m 29s} & {\small $99.72\%$} & {\small 
		$310$} & {\small $67$} & {\small $247$} & 
		{\small 1h 3m 33s} & {\small $99.98\%$} & 
		{\small $247$} & {\small $69$} & {\small 
		$177$} & {\small 22m 52s} \\
		\multirow{-4}{*}{\cellcolor{myorange!20}$0.25$}
		 &$10$ & {\small $99.54\%$} & {\small $409$} 
		& {\small $93$} & {\small $403$} & {\small 1h 
		35m 20s} & {\small $99.98\%$} & {\small 
		$195$} & {\small $52$} & {\small $176$} & 
		{\small 1h 2m 13s} & {\small $100.00\%$} & 
		{\small $111$} & {\small $47$} & {\small 
		$87$} & {\small 34m 56s} \\
		\hline
		\rowcolor{gray!10} \cellcolor{myyellow!20} & 
		$4$ & {\small $97.13\%$} & {\small $12449$} 
		& {\small $200$} & {\small $9519$} & {\small 
		3h 33m 48s} & {\small $99.99\%$} & {\small 
		$1101$} & {\small $60$} & {\small $685$} & 
		{\small 47m 46s} & {\small $99.99\%$} & 
		{\small $768$} & {\small $81$} & {\small 
		$415$} & {\small 19m 1s} \\
		\cellcolor{myyellow!20} & $6$ & {\small 
		$99.83\%$} & {\small $5919$} & {\small 
		$276$} & {\small $4460$} & {\small 3h 23m } 
		& {\small $100.00\%$} & {\small $988$} & 
		{\small $77$} & {\small $606$} & {\small 26m 
		47s} & {\small $100.00\%$} & {\small $489$} 
		& {\small $80$} & {\small $298$} & {\small 
		16m 54s} \\
		\rowcolor{gray!10} \cellcolor{myyellow!20} & 
		$8$ & {\small $99.98\%$} & {\small $1926$} & 
		{\small $203$} & {\small $1568$} & {\small 2h 
		14m 25s} & {\small $100.00\%$} & {\small 
		$404$} & {\small $73$} & {\small $309$} & 
		{\small 46m 31s} & {\small $100.00\%$} & 
		{\small $175$} & {\small $57$} & {\small 
		$129$} & {\small 20m 11s} \\
		\multirow{-4}{*}{\cellcolor{myyellow!20}$0.125$}
		 &$10$ & {\small $100.00\%$} & {\small 
		$428$} & {\small $95$} & {\small $427$} & 
		{\small 1h 39m 31s} & {\small $100.00\%$} & 
		{\small $151$} & {\small $53$} & {\small 
		$141$} & {\small 57m 32s} & {\small 
		$100.00\%$} & {\small $80$} & {\small $39$} 
		& {\small $62$} & {\small 28m 33s} \\
		\hline
		\rowcolor{gray!10} \cellcolor{mygreen!20} & 
		$4$ & {\small $100.00\%$} & {\small 
		$19299$} & {\small $295$} & {\small 
		$15446$} & {\small 6h 13m 24s} & {\small 
		$100.00\%$} & {\small $1397$} & {\small 
		$60$} & {\small $885$} & {\small 40m 5s} & 
		\cellcolor{mygreen!20}
		{\small $\mathbf{100.00\%}$} & 
		\cellcolor{mygreen!20} {\small 
		$\mathbf{766}$} & \cellcolor{mygreen!20}
		{\small $\mathbf{87}$} & 
		\cellcolor{mygreen!20} {\small 
		$\mathbf{425}$} & \cellcolor{mygreen!20} 
		{\small \textbf{16m 
		41s}} \\
		\cellcolor{mygreen!20} & $6$ & {\small 
		$100.00\%$} & {\small $4843$} & {\small 
		$280$} & {\small $3679$} & {\small 2h 24m 
		7s} & {\small $100.00\%$} & {\small $763$} & 
		{\small $66$} & {\small $446$} & {\small 35m 
		24s} & {\small $100.00\%$} & {\small $401$} 
		& {\small $81$} & {\small $242$} & {\small 
		32m 29s} \\
		\rowcolor{gray!10} \cellcolor{mygreen!20} & 
		$8$ & {\small $100.00\%$} & {\small $1919$} 
		& {\small $208$} & {\small $1567$} & {\small 
		2h 9m 59s} & {\small $100.00\%$} & {\small 
		$404$} & {\small $73$} & {\small $309$} & 
		{\small 45m 48s} & {\small $100.00\%$} & 
		{\small $193$} & {\small $68$} & {\small 
		$144$} & {\small 24m 16s} \\
		\multirow{-4}{*}{\cellcolor{mygreen!20}$0$} 
		&$10$ & {\small $100.00\%$} & {\small 
		$486$} & {\small $102$} & {\small $475$} & 
		{\small 1h 41m 3s} & {\small $100.00\%$} & 
		{\small $217$} & {\small $55$} & {\small 
		$192$} & {\small 1h 2m 11s} & {\small 
		$100.00\%$} & {\small $121$} & {\small $50$} 
		& {\small $91$} & {\small 30m 53s} \\
		\hline
	\end{tabular}
}
\end{table}


\begin{figure}[t]
	\begin{subfigure}[b]{.5\linewidth}
\begin{tikzpicture}[scale=0.75]
\begin{semilogyaxis}[xlabel={Analyzed Input 
Space}, 
xtick={0,10,20,30,40,50,60,70,80,90,100}, 
ylabel={Analysis Time}, 
ymax=99000,
log number format basis/.code 2 
args={$\pgfmathparse{#1^(#2)}\mathsf{\pgfmathprintnumber{\pgfmathresult}}$s},
]
\addplot[scatter,only marks,  
point meta=explicit 
symbolic,scatter/classes={
nosym58={mark=pentagon,red,mark size=1.25pt},
nosym5={mark=pentagon*,red,mark size=1.25pt},
nosym258={mark=pentagon,orange,mark 
	size=1.25pt},
nosym25={mark=pentagon*,orange,mark 
size=1.25pt},
nosym1258={mark=pentagon,yellow!70!orange,mark
	size=1.25pt},
nosym125={mark=pentagon*,yellow!70!orange,mark
 size=1.25pt},
nosym08={mark=pentagon,green!70!black,mark 
	size=1.25pt},
nosym0={mark=pentagon*,green!70!black,mark 
size=1.25pt},
sym58={mark=triangle,red,mark size=1.5pt},
sym5={mark=triangle*,red,mark size=1.5pt},
sym258={mark=triangle,orange,mark size=1.5pt},
sym25={mark=triangle*,orange,mark size=1.5pt},
sym1258={mark=triangle,yellow!70!orange,mark size=1.5pt},
sym125={mark=triangle*,yellow!70!orange,mark size=1.5pt},
sym08={mark=triangle,green!70!black,mark size=1.5pt},
sym0={mark=triangle*,green!70!black,mark size=1.5pt},
deep58={mark=Mercedes star,red,mark size=1.5pt},
deep5={mark=star,red,mark size=1.5pt},
deep258={mark=Mercedes star,orange,mark size=1.5pt},
deep25={mark=star,orange,mark size=1.5pt},
deep1258={mark=Mercedes star,yellow!70!orange,mark size=1.5pt},
deep125={mark=star,yellow!70!orange,mark size=1.5pt},
deep08={mark=Mercedes star,green!70!black,mark size=1.5pt},
deep0={mark=star,green!70!black,mark size=1.5pt}
}]
table[meta=label] {
x			y			label


15.28	8 		nosym5
17.01	51		nosym5
51.39	722		nosym5
79.86	2055		nosym5


15.28	22		nosym58
17.01	63		nosym58
51.39	1367		nosym58
79.86	3714		nosym58


59.09	3272	nosym25
83.77	2239	nosym25
96.07	4049	nosym25
99.54	5720	nosym25


59.09	3291		nosym258
83.77	8390		nosym258
96.07	15229		nosym258
99.54	18214		nosym258


97.13	12828	nosym125
99.83	12180	nosym125
99.98	8065	nosym125
100.00	5971	nosym125


97.13	35160		nosym1258
99.83	31211		nosym1258
99.98	16798		nosym1258
100.00	17130		nosym1258


100.00	22404	nosym0
100.00	8647	nosym0
100.00	7799	nosym0
100.00	6063	nosym0


100.00	46800		nosym08
100.00	32927		nosym08
100.00	22529		nosym08
100.00	15543		nosym08


58.33	86		sym5
69.10	341		sym5
82.64	755		sym5
93.06	2552	sym5


58.33	151		sym58
69.10	412		sym58
82.64	1445		sym58
93.06	3368		sym58


95.94	3271	sym25
98.68	1411	sym25
99.72	3813	sym25
99.98	3733	sym25


95.94	1812		sym258
98.68	3057		sym258
99.72	3953		sym258
99.98	5954		sym258


99.99	2866	sym125
100.00	1607	sym125
100.00	2791	sym125
100.00	3452	sym125


99.99	6385		sym1258
100.00	3058		sym1258
100.00	4205		sym1258
100.00	7467		sym1258


100.00	2405	sym0
100.00	2124	sym0
100.00	2748	sym0
100.00	3731	sym0


100.00	3990		sym08
100.00	3037		sym08
100.00	4696		sym08
100.00	5756		sym08


69.79	198		deep5
80.56	473		deep5
91.32	1173		deep5
96.88	2619		deep5


98.26	869		deep25
99.70	805		deep25
99.98	1372		deep25
100.00	2096		deep25


99.99	1141		deep125
100.00	1014		deep125
100.00	1211		deep125
100.00	1713		deep125


100.00	1001	deep0
100.00	1949	deep0
100.00	1456	deep0
100.00	1853	deep0


69.79	237		deep58
80.56	1386		deep58
91.32	1648		deep58
96.88	2102		deep58


98.26	1235	deep258
99.70	2082	deep258
99.98	2545	deep258
100.00	4098	deep258


99.99	3079		deep1258
100.00	2873		deep1258
100.00	3372		deep1258
100.00	4589		deep1258


100.00	2278		deep08
100.00	2306	deep08
100.00	2900		deep08
100.00	3051		deep08

};
\draw[red] ({rel axis cs:0,0}|-{axis cs:0,46800}) -- 
({rel 
axis  
cs:1,0}|-{axis cs:0,46800});
\draw[green] ({rel axis cs:0,0}-|{axis cs:100,0}) -- ({rel 
	axis cs:0,1}-|{axis cs:100,0});
\end{semilogyaxis}
\end{tikzpicture}
		\caption{}
		\label{fig:japaneseFA}
	 \end{subfigure}%
  \begin{subfigure}[b]{.5\linewidth}
\begin{tikzpicture}[scale=0.75]
\begin{semilogyaxis}[xlabel={Analyzed Input 
Space}, 
xtick={90,100}, 
ylabel={Analysis Time}, 
log number format basis/.code 2 
args={$\pgfmathparse{#1^(#2)}\mathsf{\pgfmathprintnumber{\pgfmathresult}}$s},
]
\addplot[scatter,only marks,  
point meta=explicit 
symbolic,scatter/classes={
nomark={mark=none,mark size=1.25pt},
nosym58={mark=pentagon,red,mark size=1.25pt},
nosym5={mark=pentagon*,red,mark size=1.25pt},
nosym258={mark=pentagon,orange,mark 
	size=1.25pt},
nosym25={mark=pentagon*,orange,mark 
size=1.25pt},
nosym1258={mark=pentagon,yellow!70!orange,mark
	size=1.25pt},
nosym125={mark=pentagon*,yellow!70!orange,mark
 size=1.25pt},
nosym08={mark=pentagon,green!70!black,mark 
	size=1.25pt},
nosym0={mark=pentagon*,green!70!black,mark 
size=1.25pt},
sym58={mark=triangle,red,mark size=1.5pt},
sym5={mark=triangle*,red,mark size=1.5pt},
sym258={mark=triangle,orange,mark size=1.5pt},
sym25={mark=triangle*,orange,mark size=1.5pt},
sym1258={mark=triangle,yellow!70!orange,mark size=1.5pt},
sym125={mark=triangle*,yellow!70!orange,mark size=1.5pt},
sym08={mark=triangle,green!70!black,mark size=1.5pt},
sym0={mark=triangle*,green!70!black,mark size=1.5pt},
deep58={mark=Mercedes star,red,mark size=1.5pt},
deep5={mark=star,red,mark size=1.5pt},
deep258={mark=Mercedes star,orange,mark size=1.5pt},
deep25={mark=star,orange,mark size=1.5pt},
deep1258={mark=Mercedes star,yellow!70!orange,mark size=1.5pt},
deep125={mark=star,yellow!70!orange,mark size=1.5pt},
deep08={mark=Mercedes star,green!70!black,mark size=1.5pt},
deep0={mark=star,green!70!black,mark size=1.5pt}
}]
table[meta=label] {
x			y			label






96.07	4049	nosym25
99.54	5720	nosym25




99.98	8065	nosym125
100.00	5971	nosym125




100.00	8647	nosym0
100.00	7799	nosym0
100.00	6063	nosym0




93.06	2552	sym5


93.06	3368	sym58


95.94	3271		sym25
98.68	1411		sym25
99.72	3813		sym25
99.98	3733		sym25


95.94	1812	sym258
98.68	3057	sym258
99.72	3953	sym258
99.98	5954	sym258


99.99	2866		sym125
100.00	1607		sym125
100.00	2791		sym125
100.00	3452		sym125


99.99	6385	sym1258
100.00	3058	sym1258
100.00	4205	sym1258
100.00	7467	sym1258


100.00	2405	sym0
100.00	2124	sym0
100.00	2748	sym0
100.00	3731	sym0


100.00	3990	sym08
100.00	3037	sym08
100.00	4696	sym08
100.00	5756	sym08


91.32	1173	deep5
96.88	2619	deep5


99.98	1372	deep25
100.00	2096	deep25


99.99	1141	deep125
100.00	1014	deep125
100.00	1211	deep125
100.00	1713	deep125


100.00	1001	deep0
100.00	1949	deep0
100.00	1456	deep0
100.00	1853	deep0


91.32	1648	deep58
96.88	2102	deep58


98.26	1235	deep258
99.70	2082	deep258
99.98	2545	deep258
100.00	4098	deep258


99.99	3079	deep1258
100.00	2873	deep1258
100.00	3372	deep1258
100.00	4589	deep1258


100.00	2278	deep08
100.00	2306	deep08
100.00	2900	deep08
100.00	3051	deep08

90 1000 nomark

};
\draw[red] ({rel axis cs:0,0}|-{axis cs:0,46800}) -- 
({rel 
axis  
cs:1,0}|-{axis cs:0,46800});
\draw[green] ({rel axis cs:0,0}-|{axis cs:100,0}) -- ({rel 
	axis cs:0,1}-|{axis cs:100,0});
\end{semilogyaxis}
\end{tikzpicture}
	\caption{Zoom on  $90.00\% \leq \textsc{input}$ 
	and $1000s \leq 
	\textsc{time} \leq 1000s$}
	\label{fig:japaneseFB}
	\end{subfigure}
\caption{Comparison of Different
	Analysis Configurations (Japanese Credit
	Screening)}\label{fig:japaneseF}
\end{figure}

Table~\ref{tbl:japanese12T} shows the results of 
the analysis for different budget configurations and 
choices for the domain used for the forward 
pre-analysis.
The best configuration in terms of input-space 
coverage and analysis running time
is 
highlighted.
The symbol next to each domain name introduces 
the marker used in the
scatter plot of Figure~\ref{fig:japaneseFA}, which 
visualizes the coverage
and running time. Figure~\ref{fig:japaneseFB} 
zooms on $90.00\% \leq \textsc{input}$ 
and $1000s \leq 
\textsc{time} \leq 1000s$.

Overall, we observe that \emph{the more precise
  \textsc{symbolic} and \textsc{deeppoly} domains 
  boost input coverage}, most noticeably
  for configurations with a larger $\lowerbound$. 
  This
additional precision does not always result in longer running
times. In fact, a more 
precise pre-analysis 
often reduces the overall running time. This is 
because 
the pre-analysis is able to prove that more 
partitions are already fair without requiring them to 
go 
through the backward analysis (cf. columns 
$\size{\text{F}}$).

Independently of the chosen domain for the forward 
pre-analysis, as expected, \emph{a larger 
$\upperbound$ or a smaller $\lowerbound$
  increase precision.} Increasing $\upperbound$ or
$\lowerbound$ typically reduces the number of 
completed
partitions (cf. columns $\size{\text{C}}$). 
Consequently, partitions tend to be more complex,
requiring both forward and backward analyses. Since the backward
analysis tends to dominate the running time, more partitions generally
increase the running time (when comparing 
configurations with
similar coverage). Based on our experience, the optimal
budget largely depends on the analyzed model.

\begin{table}[t]
	\caption{Comparison of Different
		Analysis Configurations (Japanese Credit
		Screening) --- $4$ CPUs}\label{tbl:japanese4T}
\centering
\resizebox{\textwidth}{!}{%
	\begin{tabular}{|c|c|cc|cc|c|cc|cc|c|cc|cc|c|}
		\hline
		\multirow{2}{*}{$\lowerbound$} & 
		\multirow{2}{*}{\textsc{U}} &     
		\multicolumn{5}{c|}{$\usemark{pentagon}$ 
		\textsc{boxes}} &     
		\multicolumn{5}{c|}{$\usemark{triangle}$ 
		\textsc{symbolic}} &     
		\multicolumn{5}{c|}{$\usemark{Mercedes 
		star}$ \textsc{deeppoly}} \\
		& & \textsc{input} & $\size{\text{C}}$ & 		
		\multicolumn{2}{c|}{$\size{\text{F}}$} & 		
		\textsc{time} &         \textsc{input} & 
		$\size{\text{C}}$ &         		
		\multicolumn{2}{c|}{$\size{\text{F}}$} 
		&         		\textsc{time} & 		\textsc{input} 
		& $\size{\text{C}}$ & 		
		\multicolumn{2}{c|}{$\size{\text{F}}$} & 		
		\textsc{time} \\
		\hline
		\rowcolor{gray!10} \cellcolor{myred!20} & $4$ 
		& {\small $15.28\%$} & {\small $44$} & 
		{\small $0$} & {\small $0$} & {\small 22s} & 
		{\small $58.33\%$} & {\small $96$} & {\small 
		$8$} & {\small $26$} & {\small 2m 31s} & 
		{\small $69.79\%$} & {\small $85$} & {\small 
		$9$} & {\small $30$} & {\small 3m 57s} \\
		\cellcolor{myred!20} & $6$ & {\small 
		$17.01\%$} & {\small $40$} & {\small $5$} & 
		{\small $5$} & {\small 1m 3s} & {\small 
		$69.10\%$} & {\small $97$} & {\small $18$} & 
		{\small $45$} & {\small 6m 52s} & {\small 
		$80.56\%$} & {\small $131$} & {\small $26$} 
		& {\small $63$} & {\small 23m 6s} \\
		\rowcolor{gray!10} \cellcolor{myred!20} & $8$ 
		& {\small $51.39\%$} & {\small $96$} & 
		{\small $29$} & {\small $88$} & {\small 22m 
		47s} & {\small $82.64\%$} & {\small $128$} & 
		{\small $34$} & {\small $87$} & {\small 24m 
		5s} & {\small $91.32\%$} & {\small $117$} & 
		{\small $34$} & {\small $78$} & {\small 27m 
		28s} \\
		\multirow{-4}{*}{\cellcolor{myred!20}$0.5$} 
		&$10$ & {\small $79.86\%$} & {\small $109$} 
		& {\small $36$} & {\small $107$} & {\small 1h 
		1m 54s} & {\small $93.06\%$} & {\small 
		$104$} & {\small $36$} & {\small $92$} & 
		{\small 56m 8s} & {\small $96.88\%$} & 
		{\small $69$} & {\small $31$} & {\small $50$} 
		& {\small 35m 2s} \\
		\hline
		\rowcolor{gray!10} \cellcolor{myorange!20} & 
		$4$ & {\small $59.09\%$} & {\small $1147$} & 
		{\small $22$} & {\small $405$} & {\small 54m 
		51s} & {\small $95.94\%$} & {\small $715$} & 
		{\small $43$} & {\small $407$} & {\small 30m 
		12s} & {\small $98.26\%$} & {\small $488$} & 
		{\small $65$} & {\small $272$} & {\small 20m 
		35s} \\
		\cellcolor{myorange!20} & $6$ & {\small 
		$83.77\%$} & {\small $1757$} & {\small $80$} 
		& {\small $1149$} & {\small 2h 19m 50s} & 
		{\small $98.68\%$} & {\small $693$} & {\small 
		$73$} & {\small $400$} & {\small 50m 57s} & 
		{\small $99.70\%$} & {\small $322$} & {\small 
		$79$} & {\small $205$} & {\small 34m 42s} \\
		\rowcolor{gray!10} \cellcolor{myorange!20} & 
		$8$ & {\small $96.07\%$} & {\small $1129$} & 
		{\small $136$} & {\small $950$} & {\small 4h 
		13m 49s} & {\small $99.72\%$} & {\small 
		$289$} & {\small $62$} & {\small $232$} & 
		{\small 1h 5m 53s} & {\small $99.98\%$} & 
		{\small $153$} & {\small $56$} & {\small 
		$113$} & {\small 42m 25s} \\
		\multirow{-4}{*}{\cellcolor{myorange!20}$0.25$}
		 &$10$ & {\small $99.54\%$} & {\small $510$} 
		& {\small $92$} & {\small $497$} & {\small 5h 
		3m 34s} & {\small $99.98\%$} & {\small 
		$158$} & {\small $57$} & {\small $150$} & 
		{\small 1h 39m 14s} & {\small $100.00\%$} & 
		{\small $109$} & {\small $46$} & {\small 
		$85$} & {\small 1h 8m 18s} \\
		\hline
		\rowcolor{gray!10} \cellcolor{myyellow!20} & 
		$4$ & {\small $97.13\%$} & {\small $12398$} 
		& {\small $200$} & {\small $9491$} & {\small 
		9h 46m } & {\small $99.99\%$} & {\small 
		$1864$} & {\small $58$} & {\small $1188$} & 
		{\small 1h 46m 25s} & {\small $99.99\%$} & 
		{\small $1257$} & {\small $92$} & {\small 
		$670$} & {\small 51m 19s} \\
		\cellcolor{myyellow!20} & $6$ & {\small 
		$99.83\%$} & {\small $5919$} & {\small 
		$273$} & {\small $4460$} & {\small 8h 40m 
		11s} & {\small $100.00\%$} & {\small $697$} 
		& {\small $71$} & {\small $404$} & {\small 
		50m 58s} & {\small $100.00\%$} & {\small 
		$465$} & {\small $95$} & {\small $287$} & 
		{\small 47m 53s} \\
		\rowcolor{gray!10} \cellcolor{myyellow!20} & 
		$8$ & {\small $99.98\%$} & {\small $1331$} & 
		{\small $212$} & {\small $1158$} & {\small 4h 
		39m 58s} & {\small $100.00\%$} & {\small 
		$293$} & {\small $71$} & {\small $233$} & 
		{\small 1h 10m 5s} & {\small $100.00\%$} & 
		{\small $201$} & {\small $67$} & {\small 
		$151$} & {\small 56m 12s} \\
		\multirow{-4}{*}{\cellcolor{myyellow!20}$0.125$}
		 &$10$ & {\small $100.00\%$} & {\small 
		$428$} & {\small $94$} & {\small $427$} & 
		{\small 4h 45m 30s} & {\small $100.00\%$} & 
		{\small $211$} & {\small $55$} & {\small 
		$188$} & {\small 2h 4m 27s} & {\small 
		$100.00\%$} & {\small $121$} & {\small $50$} 
		& {\small $91$} & {\small 1h 16m 29s} \\
		\hline
		\rowcolor{gray!10} \cellcolor{mygreen!20} & 
		$4$ & {\small $100.00\%$} & {\small 
		$20631$} & {\small $296$} & {\small 
		$16611$} & {\small >13h} & {\small 
		$100.00\%$} & {\small $1424$} & {\small 
		$58$} & {\small $885$} & {\small 1h 6m 30s} 
		& \cellcolor{mygreen!20} {\small 
		$\mathbf{100.00\%}$} & 
		\cellcolor{mygreen!20} {\small 
		$\mathbf{911}$} & \cellcolor{mygreen!20}
		{\small $\mathbf{92}$} & 
		\cellcolor{mygreen!20} {\small 
		$\mathbf{502}$} & \cellcolor{mygreen!20} 
		{\small \textbf{37m 
		58s}} \\
		\cellcolor{mygreen!20} & $6$ & {\small 
		$100.00\%$} & {\small $6093$} & {\small 
		$296$} & {\small $4563$} & {\small 9h 8m 
		47s} & {\small $100.00\%$} & {\small $632$} 
		& {\small $72$} & {\small $371$} & {\small 
		50m 37s} & {\small $100.00\%$} & {\small 
		$403$} & {\small $85$} & {\small $247$} & 
		{\small 38m 26s} \\
		\rowcolor{gray!10} \cellcolor{mygreen!20} & 
		$8$ & {\small $100.00\%$} & {\small $1919$} 
		& {\small $211$} & {\small $1567$} & {\small 
		6h 15m 29s} & {\small $100.00\%$} & {\small 
		$378$} & {\small $79$} & {\small $287$} & 
		{\small 1h 18m 16s} & {\small $100.00\%$} & 
		{\small $174$} & {\small $65$} & {\small 
		$128$} & {\small 48m 20s} \\
		\multirow{-4}{*}{\cellcolor{mygreen!20}$0$} 
		&$10$ & {\small $100.00\%$} & {\small 
		$402$} & {\small $93$} & {\small $401$} & 
		{\small 4h 19m 3s} & {\small $100.00\%$} & 
		{\small $180$} & {\small $56$} & {\small 
		$154$} & {\small 1h 35m 56s} & {\small 
		$100.00\%$} & {\small $82$} & {\small $38$} 
		& {\small $63$} & {\small 50m 51s} \\
		\hline
	\end{tabular}
}
\end{table}

\paragraph{\textbf{RQ6: Leveraging Multiple 
CPUs.}}

To evaluate the effect of parallelizing the analysis using multiple
cores, we re-ran the analyses of RQ5 on 4~CPU cores instead of
12. Table~\ref{tbl:japanese4T} shows these results.
We observe the most significant increase in running 
time for $4$ cores for the \textsc{boxes} domain. 
On average, the running time increases by a factor 
of $2.6$. On the other hand, for the 
\textsc{symbolic} and 
\textsc{deeppoly} domains, the 
running time with 4
cores increases less drastically, on average by a 
factor of $1.6$ and $2$, respectively.
This is again explained
by the increased precision of the forward analysis; fewer partitions
require a backward pass, where parallelization is 
most effective.

\begin{toappendix}	
	
	\begin{table*}[t]
		\caption{Comparison of Different
			Analysis Configurations (Japanese Credit
			Screening) --- $24$ 
			vCPUs}\label{tbl:japanese24Tfull}
\centering
\resizebox{\textwidth}{!}{%
	\begin{tabular}{|c|c|cc|cc|c|cc|cc|c|cc|cc|c|}
		\hline
		\multirow{2}{*}{$\lowerbound$} & 
		\multirow{2}{*}{\textsc{U}} &     
		\multicolumn{5}{c|}{\textsc{boxes}} &     
		\multicolumn{5}{c|}{\textsc{symbolic}} &     
		\multicolumn{5}{c|}{\textsc{deeppoly}} \\
		& & \textsc{input} & $\size{\text{C}}$ & 		
		\multicolumn{2}{c|}{$\size{\text{F}}$} & 		
		\textsc{time} &         \textsc{input} & 
		$\size{\text{C}}$ &         		
		\multicolumn{2}{c|}{$\size{\text{F}}$} 
		&         		\textsc{time} & 		\textsc{input} 
		& $\size{\text{C}}$ & 		
		\multicolumn{2}{c|}{$\size{\text{F}}$} & 		
		\textsc{time} \\
		\hline
		\rowcolor{gray!10} \cellcolor{myred!20} & $4$ 
		& {\small $15.28\%$} & {\small $36$} & 
		{\small $0$} & {\small $0$} & {\small 7s} & 
		{\small $58.33\%$} & {\small $120$} & {\small 
		$7$} & {\small $34$} & {\small 3m 32s} & 
		{\small $69.79\%$} & {\small $75$} & {\small 
		$10$} & {\small $27$} & {\small 2m 43s} \\
		\cellcolor{myred!20} & $6$ & {\small 
		$17.01\%$} & {\small $39$} & {\small $6$} & 
		{\small $7$} & {\small 49s} & {\small 
		$69.10\%$} & {\small $80$} & {\small $21$} & 
		{\small $40$} & {\small 4m 19s} & {\small 
		$80.56\%$} & {\small $138$} & {\small $26$} 
		& {\small $65$} & {\small 12m 27s} \\
		\rowcolor{gray!10} \cellcolor{myred!20} & $8$ 
		& {\small $51.39\%$} & {\small $92$} & 
		{\small $30$} & {\small $86$} & {\small 12m 
		27s} & {\small $82.64\%$} & {\small $96$} & 
		{\small $32$} & {\small $76$} & {\small 14m 
		13s} & {\small $91.32\%$} & {\small $89$} & 
		{\small $36$} & {\small $61$} & {\small 13m 
		33s} \\
		\multirow{-4}{*}{\cellcolor{myred!20}$0.5$} 
		&$10$ & {\small $79.86\%$} & {\small $89$} 
		& {\small $34$} & {\small $89$} & {\small 29m 
		41s} & {\small $93.06\%$} & {\small $91$} & 
		{\small $37$} & {\small $83$} & {\small 47m 
		1s} & {\small $96.88\%$} & {\small $73$} & 
		{\small $33$} & {\small $52$} & {\small 30m } 
		\\
		\hline
		\rowcolor{gray!10} \cellcolor{myorange!20} & 
		$4$ & {\small $59.09\%$} & {\small $1320$} & 
		{\small $21$} & {\small $433$} & {\small 57m 
		33s} & {\small $95.94\%$} & {\small $656$} & 
		{\small $42$} & {\small $340$} & {\small 32m 
		38s} & {\small $98.26\%$} & {\small $488$} & 
		{\small $65$} & {\small $272$} & {\small 14m 
		11s} \\
		\cellcolor{myorange!20} & $6$ & {\small 
		$83.77\%$} & {\small $1600$} & {\small $80$} 
		& {\small $1070$} & {\small 1h 6m 58s} & 
		{\small $98.68\%$} & {\small $516$} & {\small 
		$61$} & {\small $287$} & {\small 18m 6s} & 
		{\small $99.70\%$} & {\small $286$} & {\small 
		$77$} & {\small $182$} & {\small 13m 14s} \\
		\rowcolor{gray!10} \cellcolor{myorange!20} & 
		$8$ & {\small $96.07\%$} & {\small $1148$} & 
		{\small $141$} & {\small $969$} & {\small 2h 
		41m 1s} & {\small $99.72\%$} & {\small 
		$260$} & {\small $58$} & {\small $207$} & 
		{\small 28m 57s} & {\small $99.98\%$} & 
		{\small $241$} & {\small $70$} & {\small 
		$175$} & {\small 29m 27s} \\
		\multirow{-4}{*}{\cellcolor{myorange!20}$0.25$}
		 &$10$ & {\small $99.54\%$} & {\small $409$} 
		& {\small $93$} & {\small $403$} & {\small 1h 
		38m 38s} & {\small $99.98\%$} & {\small 
		$213$} & {\small $50$} & {\small $189$} & 
		{\small 1h 16m 11s} & {\small $100.00\%$} & 
		{\small $88$} & {\small $42$} & {\small $68$} 
		& {\small 20m 25s} \\
		\hline
		\rowcolor{gray!10} \cellcolor{myyellow!20} & 
		$4$ & {\small $97.13\%$} & {\small $12449$} 
		& {\small $203$} & {\small $9519$} & {\small 
		3h 59m 27s} & {\small $99.99\%$} & {\small 
		$1101$} & {\small $59$} & {\small $685$} & 
		{\small 1h 2m 58s} & {\small $99.99\%$} & 
		{\small $892$} & {\small $86$} & {\small 
		$493$} & {\small 18m 4s} \\
		\cellcolor{myyellow!20} & $6$ & {\small 
		$99.83\%$} & {\small $4198$} & {\small 
		$266$} & {\small $3234$} & {\small 2h 31m 
		54s} & {\small $100.00\%$} & {\small $759$} 
		& {\small $73$} & {\small $461$} & {\small 
		51m 28s} & {\small $100.00\%$} & {\small 
		$563$} & {\small $108$} & {\small $344$} & 
		{\small 40m 35s} \\
		\rowcolor{gray!10} \cellcolor{myyellow!20} & 
		$8$ & {\small $99.98\%$} & {\small $1741$} & 
		{\small $217$} & {\small $1488$} & {\small 2h 
		16m 27s} & {\small $100.00\%$} & {\small 
		$308$} & {\small $67$} & {\small $242$} & 
		{\small 33m 14s} & {\small $100.00\%$} & 
		{\small $230$} & {\small $67$} & {\small 
		$167$} & {\small 22m 36s} \\
		\multirow{-4}{*}{\cellcolor{myyellow!20}$0.125$}
		 &$10$ & {\small $100.00\%$} & {\small 
		$582$} & {\small $97$} & {\small $564$} & 
		{\small 2h 16m 13s} & {\small $100.00\%$} & 
		{\small $180$} & {\small $56$} & {\small 
		$154$} & {\small 1h 5m 59s} & {\small 
		$100.00\%$} & {\small $80$} & {\small $39$} 
		& {\small $62$} & {\small 30m 18s} \\
		\hline
		\rowcolor{gray!10} \cellcolor{mygreen!20} & 
		$4$ & {\small $100.00\%$} & {\small 
		$16018$} & {\small $288$} & {\small 
		$12964$} & {\small 5h 3m 18s} & {\small 
		$100.00\%$} & {\small $1883$} & {\small 
		$63$} & {\small $1196$} & {\small 1h 52m 
		25s} & \cellcolor{mygreen!20} {\small 
		$\mathbf{100.00\%}$} & 
		\cellcolor{mygreen!20} {\small 
		$\mathbf{804}$} 
		& \cellcolor{mygreen!20} {\small 
		$\mathbf{90}$} & \cellcolor{mygreen!20} 
		{\small $\mathbf{442}$} & 
		\cellcolor{mygreen!20} {\small 
		\textbf{19m 47s}} \\
		\cellcolor{mygreen!20} & $6$ & {\small 
		$100.00\%$} & {\small $4675$} & {\small 
		$279$} & {\small $3503$} & {\small 3h 2m 
		30s} & {\small $100.00\%$} & {\small $632$} 
		& {\small $71$} & {\small $371$} & {\small 
		38m 3s} & {\small $100.00\%$} & {\small 
		$302$} & {\small $75$} & {\small $189$} & 
		{\small 19m 51s} \\
		\rowcolor{gray!10} \cellcolor{mygreen!20} & 
		$8$ & {\small $100.00\%$} & {\small $1609$} 
		& {\small $217$} & {\small $1382$} & {\small 
		2h 7m 9s} & {\small $100.00\%$} & {\small 
		$326$} & {\small $67$} & {\small $252$} & 
		{\small 1h 12s} & {\small $100.00\%$} & 
		{\small $194$} & {\small $68$} & {\small 
		$148$} & {\small 26m 9s} \\
		\multirow{-4}{*}{\cellcolor{mygreen!20}$0$} 
		&$10$ & {\small $100.00\%$} & {\small 
		$463$} & {\small $99$} & {\small $460$} & 
		{\small 2h 12m 12s} & {\small $100.00\%$} & 
		{\small $217$} & {\small $55$} & {\small 
		$192$} & {\small 1h 13m 55s} & {\small 
		$100.00\%$} & {\small $130$} & {\small $48$} 
		& {\small $98$} & {\small 50m 10s} \\
		\hline
	\end{tabular}
}
	\end{table*}


\subsection{RQ6: Leveraging Multiple 
	CPUs.}

Table~\ref{tbl:japanese24Tfull} shows the results of 
the experiment with the Japanese Credit Screening 
dataset on 24~vCPU.

\end{toappendix}

The appendix includes 
the same experiment on 24~vCPUs
(see Table
~\ref{tbl:japanese24Tfull}).

\section{Related Work}
\label{sect:relatedWork}



Significant progress has been made on testing and verifying
machine-learning models. We focus on fairness, 
safety, and robustness
properties in the following, especially of deep 
neural networks.

\paragraph{\textbf{Fairness Criteria.}}
There are countless fairness definitions in the 
literature. In this paper, we focus on causal fairness 
(specifically the fairness notion considered by 
Galhotra et al.
\cite{Galhotra17}) and compare here with the most 
popular and related notions.

\looseness=-1
\emph{Demographic parity} or \emph{group 
fairness} \cite{Feldman15} is the 
most 
common non-causal notion of fairness. It states 
that 
individuals with different values of
sensitive features, hence belonging to different groups, should have
the same probability of being predicted to the positive class. For
example, a loan system satisfies group fairness with respect to gender
if male and female applicants have equal probability of getting
loans. If unsatisfied, this notion is also referred to 
as \emph{disparate impact}. Our notion of fairness 
is stronger, as it imposes fairness on every pair of 
individuals
that differ only in sensitive features. 
A classifier that satisfies group fairness does not
necessarily satisfy causal fairness, because there 
may still exist 
pairs of individuals on which the classifier exhibits 
bias.

Another group-based notion of fairness is 
\emph{equality of opportunity} \cite{Hardt16}. It 
states that 
\emph{qualified} individuals with
different values of sensitive features should have equal probability
of being predicted to the positive class. For a loan system, this
means that male and female applicants who are qualified to receive loans
should have an equal chance of being approved.
By imposing fairness on every qualified pair of individuals that
differ only in sensitive features, we can generalize causal fairness
to also concern both prediction and actual results. We can then adapt
our technique to consider only the part of the input space that
includes qualified individuals.


Other causal notions of fairness 
\cite[etc.]{Kilbertus17,Kusner17,Nabi18,Chiappa19} 
require additional knowledge in the form of a 
\emph{causal model}. A causal model can drive the 
choice of the sensitive input(s) for our analysis.

\paragraph{\textbf{Testing and Verifying 
Fairness.}}
Galhotra et al. \cite{Galhotra17} proposed an approach, Themis, that
allows efficient fairness testing of software. Udeshi et
al. \cite{UdeshiAC18} designed an automated and directed
testing technique to generate discriminatory inputs for
machine-learning models. Tramer et al. \cite{TramerAGHHHJL17}
introduced the unwarranted-associations framework and instantiated it
in FairTest.
%
%
In contrast, our technique provides formal fairness guarantees.

Bastani et al.~\cite{Bastani18} used adaptive concentration inequalities to
design a scalable sampling technique for providing probabilistic
fairness guarantees for machine-learning models.  As mentioned in the
Introduction, our approach differs in that it gives definite (instead
of probabilistic) guarantees.  However, it might exclude partitions
for which the analysis is not exact.

Albarghouthi et al.~\cite{Albarghouthi17b} encoded fairness problems
as probabilistic program properties and developed an SMT-based
technique for verifying fairness of decision-making programs. As
discussed in the Introduction, this technique has been shown to scale
only up to neural networks with at most 3 inputs and a single hidden
layer with at most 2 nodes. In contrast, our approach is designed to be
perfectly parallel, and thus, is significantly more scalable.

A recent technique~\cite{RuossBalunovic2020} certifies individual
fairness of neural networks, which is a local 
property that coincides with
robustness within a particular distance metric. In particular,
individual fairness dictates that similar individuals should be treated
similarly. Our approach, however, targets certification of
neural networks for the global property of causal fairness.

For certain biased decision-making programs,
the program repair technique proposed by Albarghouthi et
al. \cite{Albarghouthi17a} can be used to repair their
bias. Albarghouthi and Vinitsky \cite{Albarghouthi19} further introduced
fairness-aware programming, where programmers can specify fairness
properties in their code for runtime checking.

\paragraph{\textbf{Robustness of Deep Neural 
Networks.}} Robustness is a
desirable property for traditional software
\cite{Chaudhuri12,Goubault13,Majumdar09}, especially control
systems. Deep neural networks are also expected to be robust. However,
research has shown that deep neural networks are not robust to small
perturbations of their inputs \cite{Szegedy13} and can even be easily
fooled \cite{Nguyen15}. Subtle imperceptible perturbations of inputs,
known as adversarial examples, can change their prediction
results. Various algorithms
\cite{Goodfellow14,Tabacof16,MadryMakelov2018,CarliniWagner2017-Robustness,ZhangCC19}
have been proposed that can effectively find adversarial
examples. Research on developing defense mechanisms against
adversarial examples
\cite{Goodfellow14,Huang15,Mirman18,CarliniWagner2017-Robustness,Cornelius2019,EngstromIlyas2018,AthalyeCarlini2018-Gradients,CarliniWagner2017-Bypassing,CarliniWagner2016,MirmanSV2019}
is also active.  Causal fairness is a special form
of robustness in the sense that neural networks are expected to be
\emph{globally} robust with respect to their sensitive features.

\paragraph{\textbf{Testing Deep Learning 
Systems.}} Multiple frameworks
have been proposed to test the robustness of deep learning
systems. Pei et al. \cite{PeiCYJ17} proposed the first whitebox
framework for testing such systems. They used neuron coverage to
measure the adequacy of test inputs. Sun et al. \cite{SunWu2018}
presented the first concolic-testing
\cite{SenMarinov2005,GodefroidKlarlund2005} approach for neural
networks. Tian et al. \cite{TianPei2018} and Zhang et
al. \cite{ZhangZhang2018} proposed frameworks for testing autonomous
driving systems. Gopinath et al. \cite{GopinathWang2018} used symbolic
execution \cite{Clarke1976,King1976}. Odena et
al. \cite{OdenaOlsson2019} were the first to develop coverage-guided
fuzzing for neural networks. Zhang et al. \cite{ZhangCC19} proposed a
blackbox-fuzzing technique to test their robustness.

\paragraph{\textbf{Formal Verification of Deep 
Neural Networks.}} Formal
verification of deep neural networks has mainly focused on safety
properties. However, the scalability of such techniques for verifying
large real-world neural networks is limited. Early work
\cite{Pulina10} applied abstract interpretation to verify a neural
network with six neurons. Recent work
\cite{Katz17,Gehr18,Singh19,WangPei2018-SecurityAnalysis,Huang17}
significantly improves scalability. Huang et al. 
\cite{Huang17}
proposed a framework that can verify local robustness of neural
networks based on SMT techniques \cite{BarrettTinelli2018}.  Katz et
al. \cite{Katz17} developed an efficient SMT solver for neural
networks with \textsc{ReLU} activation functions. Gehr et
al. \cite{Gehr18} traded precision for scalability and proposed a
sound abstract interpreter that can prove local robustness of
realistic deep neural networks. Singh et al. \cite{Singh19} proposed
the \textsc{deeppoly} domain for certifying robustness of neural
networks. Wang et al. \cite{WangPei2018-SecurityAnalysis} are the
first to use symbolic interval arithmetic to prove security properties
of neural networks.












%

\section{Conclusion and Future Work}

We have presented an automated, perfectly parallel analysis for certifying fairness of
neural networks. The analysis is configurable to support a wide range of use cases
throughout the development lifecycle of neural networks: ranging from short sanity checks
during development to formal fairness audits before deployments.

In future work, we plan to extend our technique in various ways, for instance, by
automatically tuning parameters (such as the upper bound $\upperbound$) during the
analysis or by feeding analysis results to other tools. Such tools may be used to provide
probabilistic fairness guarantees for partitions that could not be certified or repair
networks by eliminating bias.


%

\bibliography{bibliography}


\end{document}